\definecolor{dark-red}{rgb}{.54,.0,.0}
\definecolor{dark-green}{rgb}{.0,.4,.0}
\definecolor{dark-blue}{rgb}{.04,.04,.4}
\newcounter{mnotecount}[section]
\DeclareFontFamily{OT1}{pzc}{}
\DeclareFontShape{OT1}{pzc}{m}{it}%
              {<-> s * pzcmi8t}{}
\DeclareMathAlphabet{\mathpzc}{OT1}{pzc}%
                                {m}{it}
\newtheorem{Thm}{Theorem}[section]
\newtheorem{Lem}[Thm]{Lemma}
\newtheorem{Cor}[Thm]{Corollary}
\newtheorem{Prop}[Thm]{Proposition}
\newtheorem{Rmk}[Thm]{Remark}
\newcommand{\cal}{\mathcal}
\renewcommand{\Psi}{\rho}
\newcommand{\R}{\mathbb{R}}
\newcommand{\eps}{\varepsilon}
\newcommand{\vlinha}{\frac{e^2}{r}+\frac{\Lambda}{3}r^3-\omega}
\newcommand{\vlinhazz}{\frac{e^2}{r}+\frac{\Lambda}{3}r^3-\omega_0}
\newcommand{\vlinhaz}{\frac{e^2}{{r}_0(0)}+\frac{\Lambda}{3}{r}_0^3(0)-\omega_0}
\newcommand{\muu}{1-\frac{2\omega}{r}+\frac{e^2}{r^2}-\frac{\Lambda}{3}r^2}
\newcommand{\muuz}{1-\frac{2\omega_0}{\tilde{r}_0}+\frac{e^2}{\tilde{r}_0^2}-\frac{\Lambda}{3}\tilde{r}_0^2}
\newcommand{\truemu}{\frac{2\omega}{r}-\frac{e^2}{r^2}+\frac{\Lambda}{3}r^2}
\newcommand{\mysigma}{{\cal P}}
\renewcommand{\omega}{\varpi}
\newcommand{\myumax}{U}
\newcommand{\myU}{U'}
\newcommand{\ru}{\partial_r(1-\mu)(r_+,\varpi_0)}
\newcommand{\ckrm}{\check r_-}
\newcommand{\ckrp}{\check r_+}
\newcommand{\hatz}{\widehat{\frac{\zeta}{\nu}}}
\newcommand{\cg}{\Gamma}
\newcommand{\vgr}{v_{\ckrm}}
\newcommand{\ugr}{u_{\ckrm}}
\newcommand{\ugrp}{u_{\ckrp}}
\newcommand{\vgrp}{v_{\ckrp}}
\newcommand{\ug}{u_r}
\newcommand{\vg}{v_r}
\newcommand{\ugs}{u_s}
\newcommand{\vgs}{v_s}
\newcommand{\vgsd}{v_{s_2}}
\newcommand{\ugz}{u_{r_0}}
\newcommand{\vgz}{v_{r_0}}
\newcommand{\ckr}{\check{r}}
\newcommand{\uckr}{u_{\ckr}}
\newcommand{\vckr}{v_{\ckr}}
\newcommand{\udz}{u_{r_+-\delta}}
\newcommand{\vdz}{v_{r_+-\delta}}
\newcommand{\udp}{u'_{r_+-\delta}}
\newcommand{\gam}{\gamma}
\newcommand{\ugam}{u_{\gam}}
\newcommand{\vgam}{v_{\gam}}
\begin{document}

\newcounter{enumii_saved}

\title[Global uniqueness with a cosmological constant - Part 2]{On the global uniqueness for the Einstein-Maxwell-scalar field system with a cosmological constant \\ \vspace{.2cm}
	{\small Part 2. Structure of the solutions and stability of the Cauchy horizon}}

\author{Jo\~ao L.~Costa}
\author{Pedro M.~Gir\~ao}
\author{Jos\'{e} Nat\'{a}rio}
\author{Jorge Drumond Silva}

\address{Jo\~ao L.~Costa: 
ISCTE - Instituto Universitário de Lisboa, Portugal
and 
Center for Mathematical Analysis, Geometry and Dynamical Systems,
Instituto Superior T\'ecnico, Universidade de Lisboa, Portugal
}
\email{jlca@iscte.pt}

\address{Pedro M.~Gir\~ao, Jos\'{e} Nat\'{a}rio and Jorge Drumond Silva: 
Center for Mathematical Analysis, Geometry and Dynamical Systems,
Instituto Superior T\'ecnico, Universidade de Lisboa, Portugal}
\email{pgirao@math.ist.utl.pt}
\email{jnatar@math.ist.utl.pt}
\email{jsilva@math.ist.utl.pt}

\subjclass[2010]{Primary 83C05; Secondary 35Q76, 83C22, 83C57, 83C75}
\keywords{Einstein equations, black holes, strong cosmic censorship, Cauchy horizon, scalar field, spherical symmetry}
\thanks{Partially funded by FCT/Portugal through project PEst-OE/EEI/LA0009/2013.
P.~Girão and J.~Silva were also partially funded by FCT/Portugal through grants PTDC/MAT114397/2009 and UTA$\underline{\ }$CMU/MAT/0007/2009.}

\maketitle

\begin{center}
{\bf Abstract}
\end{center}

This paper is the second part of a trilogy dedicated to the following problem: 
given spherically symmetric characteristic
initial data for the Einstein-Maxwell-scalar
field system with a cosmological constant $\Lambda$,
with the data on the outgoing initial null
hypersurface given by a subextremal Reissner-Nordstr\"{o}m black
hole event horizon, study the future extendibility of the corresponding
maximal globally hyperbolic development as a  ``suitably regular'' Lorentzian manifold. 

In the first paper of this sequence \cite{relIst1}, we established well posedness of the characteristic problem with general initial data.

In this second paper, we generalize the results of Dafermos~\cite{Dafermos1} on the stability of the radius function at the Cauchy horizon by including a cosmological constant.
This requires a considerable deviation from the strategy followed in~\cite{Dafermos1}, focusing on the level sets of the radius function instead of the red-shift and blue-shift regions.
We also present new results on the global structure of the solution when the free data is not identically zero in a neighborhood of the origin. 

In the third and final paper \cite{relIst3}, we will consider the issue of mass inflation and extendibility of solutions beyond the Cauchy horizon.


\newpage
{
\setcounter{tocdepth}{1}
\tableofcontents
}
\section{Introduction}

This paper is the second part of a trilogy dedicated to the following problem: 
given spherically symmetric characteristic
initial data for the Einstein-Maxwell-scalar
field system with a cosmological constant $\Lambda$,
with the data on the outgoing initial null
hypersurface given by a subextremal Reissner-Nordstr\"{o}m black
hole event horizon, 
and the remaining data otherwise free,
study the future extendibility of the corresponding
maximal globally hyperbolic development as a  ``suitably regular'' Lorentzian manifold.
We are motivated by the strong cosmic censorship conjecture and the question of determinism in general relativity.
As explained in detail in the Introduction of Part~1, strong cosmic censorship is one of the most fundamental open problems in
general relativity (see the classic monographs \cite{ChruscielSCC, Earman} and the
discussions in \cite{ChristodoulouGlobalnew, Dafermos1, DafermosBlack} for the general context of this problem).
Although significant developments have been achieved in the last five
decades (from the initial heuristic works \cite{SimpsonInternal, IsraelPoisson} to rigorous mathematical results \cite{Dafermos1, Dafermos2, DafermosProof}), including some recent encouraging progress (see \cite{Franzen, LukInst, DafermosBlack} and references therein), a complete resolution of the conjecture at hand still seems out
of reach. Nonetheless, the spherically symmetric self-gravitating scalar field model has provided considerable insight into the harder problem of vacuum collapse without symmetries \cite{Christodoulou:2008}; this was explored in \cite{LukWeak} to obtain the first promising steps towards understanding the stability of Cauchy horizons without symmetry assumptions.

In Part~1, we established the equivalence (under appropriate regularity conditions for the initial data) between the Einstein equations~\eqref{wave_r}$-$\eqref{wave_Omega}
and the system of first order PDE~\eqref{r_u}$-$\eqref{kappa_at_u}. We proved existence, uniqueness and identified a breakdown criterion for solutions of this system (see Section~\ref{main-one}).

In the current paper we are concerned with the structure of the solutions of the characteristic problem, 
and wish to
address the question of existence and stability of the Cauchy horizon
when the initial data is as above. This is intimately related to the issue of global uniqueness for the Einstein equations: it is the possibility of extension of 
solutions across this horizon that leads to the breakdown of global uniqueness and, in case the phenomenon persists for generic initial data, to the failure of the strong cosmic 
censorship conjecture. 

As in~\cite{Dafermos1}, we introduce a certain generic element in the formulation
 of our
problem by
perturbing a subextremal Reissner-Nordstr\"{o}m black
hole (whose Cauchy horizon formation is archetypal) by arbitrary characteristic data along the ingoing null direction. The study of the conditions under 
which the solutions can be extended across the Cauchy horizon is left to Part~3.

We take many ideas from~\cite{Dafermos1} and~\cite{Dafermos2} and build on these works.
In particular, we borrow the following three very important techniques. 
(i)~The partition of the spacetime domain of the solution into four regions and the construction of a carefully chosen spacelike curve to separate the last two.
(ii)~The use of the Raychaudhuri equation in~$v$ to estimate $\frac{\nu}{1-\mu}$ at a larger~$v$ from its value at a smaller~$v$.
(iii)~The use of BV estimates for the field.

Nonetheless, the introduction of a cosmological constant $\Lambda$ causes a significant
difference that requires deviation 
from the original strategies developed in~\cite{Dafermos1} and~\cite{Dafermos2}.
Moreover, we introduce some technical simplifications and obtain sharper and more detailed estimates.
These improvements will be crucial for our arguments in Part~3. 

Our approach therefore has three main departures from
the one of Dafermos:
\begin{enumerate}[i)]
	\item First, due to the presence of the cosmological constant $\Lambda$, the curves of constant shift, which are used in~\cite{Dafermos1} and~\cite{Dafermos2}, 
        are no longer necessarily spacelike for $\Lambda>0$ large. This forces us to find an alternative approach;
	we have chosen to work with curves of constant $r$ coordinate instead of working with curves of constant shift, which turns out to be a simpler approach.
	Furthermore, it allows us to treat the cases $\Lambda<0$, $\Lambda=0$ and $\Lambda>0$ in a unified framework.
	\item Second, we show that the Bondi coordinates $(r,v)$ are the ones most adapted to estimating the growth of the fields
	as we progress away from the event horizon.
	Our approach starts by controlling the field $\frac\zeta\nu$ using~\eqref{triple}. Although this is similar to~\eqref{double}, there is one distinction
	which makes all the difference. It consists of the fact that in the double integral in~\eqref{double} the field $\frac\zeta\nu$ is
	multiplied by the function $\nu$. When we pass to Bondi coordinates this function disappears, making a simple application of Gronwall's inequality,
	such as the one we present, possible. This would not work in the double null coordinate system~$(u,v)$.
	\item Third, our estimates are not subordinate to the division of the solution spacetime into red shift, no shift and blue shift regions. Instead, we consider the regions $\{r \geq \ckrp\}$, $\{\ckrm \leq r \leq \ckrp\}$ and $\{r \leq \ckrm\}$, where $\ckrp$ is smaller than but sufficiently close to the radius $r_+$ of the Reissner-Nordstr\"om event horizon, and $\ckrm$ is bigger than but sufficiently close to the radius $r_-$ of the Reissner-Nordstr\"om Cauchy horizon. These may be loosely thought of as red shift, no shift and blue shift regions of the background Reissner-Nordstr\"om solution, even though the shift factor is not small and indeed changes significantly from red to blue in the intermediate region.
\end{enumerate}

Our first objective is to obtain good upper bounds for $-\lambda$ in the different regions of spacetime. These will enable us to show that
the radius function $r$ is bounded below by a positive constant. However, good estimates for $-\nu$ and the fields $\theta$ and $\zeta$ will also be essential in Part~3.

The main result of this paper is therefore
\begin{Thm}\label{r-stability}
Consider the characteristic initial value problem for the first order system of PDE~\eqref{r_u}$-$\eqref{kappa_at_u} 
 with initial data\/~\eqref{iu}$-$\eqref{iv} (so that $\{0\}\times[0,\infty[$ is the event horizon of a subextremal Reissner-Nordstr\"{o}m solution with mass $M>0$).
 Assume that $\zeta_0$ is continuous and $\zeta_0(0)=0$. Then there exists $U>0$ such that
the domain $\mysigma$ of the (future) maximal development contains $[0,\myumax]\times[0,\infty[$. Moreover, 
$$
\inf_{[0,\myumax]\times[0,\infty[}r>0,
$$
the limit
$$
r(u,\infty):=\lim_{v \to \infty}r(u,v)
$$
exists for all $u \in \left] 0,U \right]$ and
$$
\lim_{u\searrow 0}r(u,\infty)=r_-.
$$
\end{Thm}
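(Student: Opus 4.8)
The strategy is a continuity (bootstrap) argument in $v$, fed by a priori estimates obtained region by region, from which the behaviour of $r$ at the Cauchy horizon is then read off. By the well-posedness and breakdown criterion of Part~1 (see Section~\ref{main-one}) it is enough to produce, for $\myumax$ small enough, uniform bounds on $[0,\myumax]\times[0,\infty[$ forcing $r$ to stay above a positive constant (with the fields under control); this automatically gives $[0,\myumax]\times[0,\infty[\,\subset\mysigma$. I would first fix radii $r_-<\ckrm<\ckrp<r_+$, close enough to $r_\mp$ that the background Reissner-Nordstr\"om quantities keep a definite sign on the corresponding regions, and foliate the development by the level sets $\{r=\ckrp\}$ and $\{r=\ckrm\}$, splitting it into $\{r\ge\ckrp\}$, $\{\ckrm\le r\le\ckrp\}$ and $\{r\le\ckrm\}$. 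Since $\nu<0$ and $\lambda<0$ throughout the interior (a sign maintained along the bootstrap), these constant-$r$ curves are spacelike \emph{irrespective of the sign of $\Lambda$}, which is exactly why they replace the constant-shift curves of~\cite{Dafermos1}. On $\{u=0\}$ the data is pure Reissner-Nordstr\"om, so there $\theta\equiv0$, $\lambda\equiv0$ and $\omega\equiv M$, while the assumptions that $\zeta_0$ be continuous with $\zeta_0(0)=0$ supply the compatibility and the smallness of the field at the corner that the estimates require.

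The analytic heart is the control of the scalar field. I would pass to the Bondi coordinates $(r,v)$ and bound $\frac{\zeta}{\nu}$ from~\eqref{triple}: unlike the double integral in~\eqref{double}, where $\frac{\zeta}{\nu}$ is multiplied by $\nu$, in the coordinates $(r,v)$ this factor disappears and the integral inequality becomes amenable to a direct Gronwall argument, yielding a uniform $BV$ bound on $\frac{\zeta}{\nu}$ up to the Cauchy horizon for $u\in[0,\myumax]$. Inserting this field bound into the Raychaudhuri equation in $v$, I would propagate a bound on $\frac{\nu}{1-\mu}$ from smaller to larger $v$ (technique~(ii) of the introduction), and thereby control the sign and size of $1-\mu$ in each region. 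The upshot is an upper bound for $-\lambda$ in each region: the perturbation is damped where $r\ge\ckrp$, the bound is carried across the finite-extent region $\{\ckrm\le r\le\ckrp\}$, and where $r\le\ckrm$ the blue shift makes $1-\mu$ decay exponentially in $v$ (at a rate set by the Cauchy-horizon surface gravity $k_-$) while $\frac{\lambda}{1-\mu}$ stays bounded, so that $-\lambda$ inherits the same exponential decay.

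These bounds close the argument. Exponential decay of $-\lambda$ makes $v\mapsto r(u,v)$ of integrable variation, so $r(u,\infty):=\lim_{v\to\infty}r(u,v)$ exists for every $u\in\,]0,\myumax]$ and the total drop of $r$ along each ingoing ray is uniformly controlled; taking $\myumax$ small enough that this drop cannot exhaust the available radius yields $\inf_{[0,\myumax]\times[0,\infty[}r>0$, which at once excludes breakdown and upgrades the local solution to a global one in $v$. The remaining assertion $\lim_{u\searrow0}r(u,\infty)=r_-$ follows by comparison with the explicit Reissner-Nordstr\"om interior (whose Cauchy horizon sits at $r=r_-$): since $\zeta_0(0)=0$ and the red shift damps the field near the corner, the solution converges to Reissner-Nordstr\"om as $u\searrow0$, forcing $r(u,\infty)\to r_-$. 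This last step is subtler than it looks, because the limits $u\searrow0$ and $v\to\infty$ do not commute---at $u=0$ one has $\lambda\equiv0$ while the limiting Cauchy-horizon radius is $r_-\neq r_+$---so the uniformity of the estimates in $u$ is essential.

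I expect the main obstacle to be the intermediate region $\{\ckrm\le r\le\ckrp\}$ and its junction with the blue-shift region, where $1-\mu$ changes sign and is no longer small, and where, for $\Lambda\neq0$, both the geometry of the level sets of $r$ and the surface gravities differ from the $\Lambda=0$ case of~\cite{Dafermos1}. The delicate point is to keep the Bondi-coordinate field bound and the Raychaudhuri propagation uniform across this region, so that the exponential decay of $-\lambda$ sets in before $r$ can collapse; this is precisely where the constant-$r$ foliation and the Gronwall estimate for $\frac{\zeta}{\nu}$ do the decisive work, and where the argument departs most from~\cite{Dafermos1,Dafermos2}.
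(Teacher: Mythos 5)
Your overall architecture---level sets of $r$ in place of constant-shift curves, Bondi coordinates plus a Gronwall argument for $\frac\zeta\nu$, Raychaudhuri propagation of $\frac{\nu}{1-\mu}$, and the observation that the limits $u\searrow 0$ and $v\to\infty$ do not commute---matches the paper. But there is a genuine gap in the part of the argument that actually decides the theorem, namely the region $\{r\leq\ckrm\}$. You use only three regions; the paper uses four. Inside $\{r\leq\ckrm\}$ it introduces the spacelike curve $\gam$ parametrized by $u\mapsto(u,(1+\beta)\vgr(u))$ and treats $J^-(\gam)\cap J^+(\cg_{\ckrm})$ and $J^+(\gam)$ by entirely different arguments. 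In the first of these, Lemma~\ref{boot_r} runs a bootstrap keeping $\varpi\leq\varpi_0+\frac\eps 2$ and $r\geq r_--\frac\eps 2$; this closes only because the red-shift decay rate $\alpha$ of~\eqref{alp_new} beats the blue-shift growth accumulated up to $\gam$, i.e.\ because the exponent~\eqref{exponent} is positive, which is exactly the constraint~\eqref{beta} on $\beta$ in terms of the ratio of the surface gravities at $r_+$ and $r_-$. The key device is the identity~\eqref{integral}, equating the integrals of $\nu$ and $\lambda$ along $\cg_{\ckrm}$, which converts the dangerous integral $\int\frac{\nu}{1-\mu}\partial_r(1-\mu)\,d\tilde u$ into an integral of $\kappa$ along $\cg_{\ckrm}$, bounded by $\frac{\beta}{1+\beta}v$. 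None of this appears in your proposal, and without the curve $\gam$ and the constraint on $\beta$ the estimates in $\{r\leq\ckrm\}$ do not close.

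Moreover, your stated mechanism for the decay of $-\lambda$ in the blue-shift region---``$1-\mu$ decays exponentially in $v$ while $\frac{\lambda}{1-\mu}$ stays bounded, so $-\lambda$ inherits the decay''---fails precisely where it is needed, in $J^+(\gam)$: there $\varpi$ is no longer controlled (mass inflation is not excluded at this stage), so $-(1-\mu)$ may tend to $+\infty$ rather than to $0$, and the exponential decay of $1-\mu$ cannot be established without circularity. The paper instead bounds the exponential $e^{\int\frac{\nu}{1-\mu}\partial_r(1-\mu)}$ in~\eqref{l} by splitting the domain of integration according to whether $\varpi$ is close to $\varpi_0$ (where $\eta>0$ makes the integrand negative) or large (where $\frac{\partial_r(1-\mu)}{1-\mu}$ is bounded, since it tends to $\frac1r$ as $\varpi\to\infty$, and $\int(-\nu)\,d\tilde u\leq\ckrm-(r_--\eps)$); see Lemma~\ref{ln-above}. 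Finally, those bounds on $-\lambda$ and $-\nu$ are themselves conditional on $r>r_--\eps$, so a further continuity argument (Lemma~\ref{infty}) is required to integrate them and conclude $\inf r>0$; your proposal treats the decay as unconditional and skips this step.
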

So, under the hypotheses of Theorem~\ref{r-stability},
the argument in~\cite[Section~11]{Dafermos2}, shows that, as in the case when $\Lambda=0$, the spacetime is extendible across the Cauchy horizon with a $C^0$ metric.

We also prove that only in the case of the Reissner-Nordstr\"{o}m solution does the curve $\{ r= r_- \}$ coincide with the Cauchy horizon. As soon as the initial data field is not identically zero, the curve
$\{ r= r_- \}$ is contained in~${\cal P}$ (Theorem~\ref{rmenos}). This is an interesting geometrical condition and it is conceptually relevant given the importance
that we confer to the curves of constant $r$. We also prove that, in contrast with what happens with the Reissner-Nordstr\"{o}m solution,
the presence of any nonzero field immediately causes the integral $\int_0^\infty \kappa(u,v)\,dv$ to be finite for any $u>0$ (Lemma~\ref{l-kappa}).
As a consequence, the affine parameter of any outgoing null geodesic inside the event horizon is finite at the Cauchy horizon (Corollary~\ref{affine}).

\section{Framework and some results from Part~1}\label{main-one}

\subsection*{The spherically symmetric Einstein-Maxwell-scalar field system with a cosmological constant}

Consider a spherically symmetric spacetime with metric
$$g=-\Omega^2(u,v)\,dudv+r^2(u,v)\,\sigma_{\mathbb{S}^2},$$
where $\sigma_{\mathbb{S}^2}$ is the round metric on the 2-sphere.
The Einstein-Maxwell-scalar field system with a cosmological constant $\Lambda$ and total electric charge $4\pi e$ reduces to the following system of equations:
the wave equation for $r$,
\begin{equation}\label{wave_r} 
\partial_u\partial_vr=\frac{\Omega^2}{2}\frac{1}{r^2}\left(\vlinha\right),
\end{equation}
the wave equation for $\phi$,
\begin{equation}\label{wave_phi} 
\partial_u\partial_v\phi=-\,\frac{\partial_ur\,\partial_v\phi+\partial_vr\,\partial_u\phi}{r},
\end{equation}
the Raychaudhuri equation in the $u$ direction,
\begin{equation}\label{r_uu} 
\partial_u\left(\frac{\partial_ur}{\Omega^2}\right)=-r\frac{(\partial_u\phi)^2}{\Omega^2},
\end{equation}
the Raychaudhuri equation in the $v$ direction,
\begin{equation}\label{r_vv} 
\partial_v\left(\frac{\partial_vr}{\Omega^2}\right)=-r\frac{(\partial_v\phi)^2}{\Omega^2},
\end{equation}
and the wave equation for $\ln\Omega$,
 \begin{equation}\label{wave_Omega} 
\partial_v\partial_u\ln\Omega=-\partial_u\phi\,\partial_v\phi-\,\frac{\Omega^2e^2}{2r^4}+\frac{\Omega^2}{4r^2}+\frac{\partial_ur\,\partial_vr}{r^2}.
\end{equation}

\subsection*{The first order system}

Given $r$, $\phi$ and $\Omega$, solutions of the Einstein equations,
let
\begin{equation}\label{nu_0}
\nu:=\partial_u r
\end{equation}
\begin{equation}\label{lambda_0}
\lambda:=\partial_v r,
\end{equation}
\begin{equation}\label{bar_rafaeli} 
\omega:=\frac{e^2}{2r}+\frac{r}{2}-\frac{\Lambda}{6}r^3+\frac{2r}{\Omega^2}\nu\lambda,
\end{equation}
\begin{equation}\label{mu_0} 
\mu:=\truemu,
\end{equation}
\begin{equation}\label{theta} 
\theta:=r\partial_v\phi,
\end{equation}
\begin{equation}\label{zeta} 
\zeta:=r\partial_u\phi
\end{equation}
and
\begin{equation}\label{kappa_0} 
 \kappa:=\frac{\lambda}{1-\mu}.
\end{equation}
Notice that we may rewrite~\eqref{bar_rafaeli} as
\begin{equation}\label{omega_sq}
\Omega^2=-\,\frac{4\nu\lambda}{1-\mu}=-4\nu\kappa.
\end{equation}
The Einstein equations imply the first order system for $(r,\nu,\lambda,\varpi,\theta,\zeta,\kappa)$
\begin{eqnarray} 
 \partial_ur&=&\nu\label{r_u},\\
 \partial_vr&=&\lambda\label{r_v},\\
 \partial_u\lambda&=&\nu\kappa\partial_r(1-\mu)\label{lambda_u},\\
 \partial_v\nu&=&\nu\kappa\partial_r(1-\mu),\label{nu_v}\\
 \partial_u\omega&=&\frac 12(1-\mu)\left(\frac\zeta\nu\right)^2\nu,\label{omega_u}\\
 \partial_v\omega&=&\frac 12\frac{\theta^2}{\kappa},\label{omega_v}\\
 \partial_u\theta&=&-\,\frac{\zeta\lambda}{r},\label{theta_u}\\
 \partial_v\zeta&=&-\,\frac{\theta\nu}{r},\label{zeta_v}\\
 \partial_u\kappa&=&\kappa\nu\frac 1r\left(\frac{\zeta}{\nu}\right)^2,\label{kappa_u}
\end{eqnarray}
with the restriction
\begin{equation}\label{kappa_at_u} 
\lambda=\kappa(1-\mu).
\end{equation}
Under appropriate regularity conditions for the initial data, the system of first order PDE~\eqref{r_u}$-$\eqref{kappa_at_u} also implies the Einstein equations \eqref{wave_r}$-$\eqref{wave_Omega}.

\subsection*{Initial data}
In Part~1 we study well posedness of the first order system for general initial data. In this paper
we take the initial data on the outgoing null direction $v$ to be the data on the event horizon of a subextremal Reissner-Nordstr\"{o}m solution
with mass $M$. The initial data on the ingoing null direction $u$ is free.
More precisely, we choose
\begin{equation}\label{iu}
\left\{
\begin{array}{lclcl}
 r(u,0)&=&r_0(u)&=&r_+-u,\\
 \nu(u,0)&=&\nu_0(u)&=&-1,\\
 \zeta(u,0)&=&\zeta_0(u),&&
\end{array}
\right.\qquad{\rm for}\ u\in[0,U],
\end{equation}
\begin{equation}\label{iv}
\left\{
\begin{array}{lclcl}
 \lambda(0,v)&=&\lambda_0(v)&=&0,\\
 \omega(0,v)&=&\omega_0(v)&=&M,\\
 \theta(0,v)&=&\theta_0(v)&=&0,\\
 \kappa(0,v)&=&\kappa_0(v)&=&1,
\end{array}
\right.\qquad{\rm for}\ v\in[0,\infty[.
\end{equation}
Here $r_+>0$ is the radius of the event horizon. We assume $\zeta_0$ is continuous and $\zeta_0(0)=0$.

\subsection*{Well posedness of the first order system}

Theorem 4.4 of Part~1, for the initial data above, reads:

\begin{Thm}\label{maximal} The characteristic initial value problem\/~\eqref{r_u}$-$\eqref{kappa_at_u}, with initial 
conditions\/~\eqref{iu} and\/~\eqref{iv}, where $\zeta_0$ is continuous and $\zeta_0(0)=0$,
has a unique solution defined on a maximal past set $\mysigma$ containing a neighborhood of $[0,\myumax]\times\{0\}\cup\{0\}\times[0,\infty[$.\end{Thm}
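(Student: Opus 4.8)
This is a characteristic initial value problem, so the plan is the classical one: recast the transport system as integral equations, solve locally near the corner by a contraction argument, propagate the local solution along the two initial null segments, and assemble the pieces into a maximal past set. First I would integrate each evolution equation along the coordinate direction in which its datum is prescribed. Choosing one equation per unknown, \eqref{nu_v} and \eqref{kappa_u} integrate to
\begin{equation*}
\nu(u,v)=\nu_0(u)\,\exp\!\left(\int_0^v\kappa\,\rmu\,dv'\right),\qquad
\kappa(u,v)=\kappa_0(v)\,\exp\!\left(\int_0^u\nu\,\tfrac1r\left(\tfrac\zeta\nu\right)^2du'\right),
\end{equation*}
whereas $\lambda,\omega,\theta$ are recovered by integrating \eqref{lambda_u}, \eqref{omega_u}, \eqref{theta_u} in $u$ from $\{u=0\}$, the field $\zeta$ by integrating \eqref{zeta_v} in $v$ from $\{v=0\}$, the radius $r$ from \eqref{r_v}, and $\mu$ is algebraic in $(r,\omega)$ via \eqref{mu_0}. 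Already from these formulas the signs $\nu<0$ and $\kappa>0$ are manifestly preserved, which is what keeps the potentially singular factors $\tfrac1r$, $\tfrac1\kappa$ and $\left(\tfrac\zeta\nu\right)^2$ finite as long as $r$, $|\nu|$ and $\kappa$ stay away from $0$.

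Near the corner $(0,0)$ all these quantities are close to their Reissner-Nordstr\"{o}m values $r\approx r_+$, $\nu\approx-1$, $\kappa\approx1$, so on a ball of continuous tuples for which $r$, $|\nu|$ and $\kappa$ remain in fixed neighborhoods of these values the right-hand sides are Lipschitz. I would then show that the solution map $\Phi$ sending a candidate tuple to the right-hand sides of the integral equations preserves such a ball and is a contraction on a small rectangle $[0,\delta]\times[0,\delta']$, the contraction factor being $O(\delta+\delta')$ because every component of $\Phi$ carries an integration in $u$ or $v$; the Banach fixed point theorem then yields a unique local solution. The hypotheses that $\zeta_0$ be continuous with $\zeta_0(0)=0$ enter precisely here, ensuring that the field data match continuously at the corner and that $\left(\tfrac\zeta\nu\right)^2$ is controlled there.

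Next I would verify that the equations not used in the reduction, together with the constraint \eqref{kappa_at_u}, are propagated by the evolution: each holds on the initial segments by \eqref{iu}$-$\eqref{iv}, and a direct computation shows that its defect satisfies a homogeneous linear transport equation and hence stays zero, so the fixed point solves the full system. The sign conditions then allow the local solution to be continued, the only way continuation can fail being the degeneration of $r$, $\kappa$ or $|\nu|$, which is the breakdown criterion of Part~1~\cite{relIst1}. Along the outgoing segment $\{0\}\times[0,\infty[$ the data are exactly those of the Reissner-Nordstr\"{o}m horizon ($\lambda=\theta=0$, $\omega=M$, $\kappa=1$, $r=r_+$, and $\zeta\equiv0$), so the coefficients stay regular for every finite $v$; applying the contraction estimate in a $u$-neighborhood of each point of the segment produces a neighborhood of the whole infinite segment, and similarly of the finite ingoing segment $[0,\myumax]\times\{0\}$.

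Finally I would let $\mysigma$ be the union of all coordinate rectangles on which a solution exists. By the uniqueness built into the contraction argument the local pieces agree on overlaps, so the union carries a well-defined solution; by construction it contains the full coordinate-past rectangle of each of its points, hence is a past set, and it is maximal. Global uniqueness then follows from a Gronwall estimate on the difference of two solutions over any such rectangle. The step I expect to be hardest is the propagation along the \emph{infinite} outgoing segment: there the red-shift forces $\nu(0,v)\to0$ as $v\to\infty$, so the coefficient $\left(\tfrac\zeta\nu\right)^2$ becomes delicate and the $u$-width of the neighborhood may have to shrink as $v$ grows. The degeneracy of $1-\mu$ at the horizon $r=r_+$ is, by contrast, harmless, because $\kappa$ is treated as an independent unknown governed by \eqref{kappa_u} and one never divides by the vanishing factor $1-\mu$.
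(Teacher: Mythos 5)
Your outline is the standard contraction-mapping strategy for characteristic problems and is structurally sound, but note that this paper does not prove Theorem~\ref{maximal} at all: it is quoted from Part~1 (Theorem~4.4 of \cite{relIst1}), where the well-posedness argument --- integral reformulation, local existence near the corner, propagation of the constraint \eqref{kappa_at_u}, extension along the two initial segments, and assembly of the maximal past set --- is carried out for general initial data. So there is no in-paper proof to compare against; your reconstruction is consistent with the structure of the argument in the companion paper, and the choice of which equation to integrate in which direction (e.g.\ \eqref{nu_v} in $v$, \eqref{kappa_u} in $u$, matching where each datum is prescribed) is the right one.

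One factual slip in your closing remark: along $\{0\}\times[0,\infty[$ one has $r\equiv r_+$, $\varpi\equiv M$ and $\kappa\equiv 1$, so integrating \eqref{nu_v} gives $\nu(0,v)=-\exp\bigl(\partial_r(1-\mu)(r_+,\varpi_0)\,v\bigr)\to-\infty$; the red-shift makes $|\nu|$ grow exponentially rather than decay to zero. This does not create the difficulty you fear with $\left(\frac{\zeta}{\nu}\right)^2$ (indeed $\zeta(0,v)\equiv 0$ because $\theta\equiv 0$ on that segment, and a large $|\nu|$ only helps the quotient); the genuine delicacy in propagating along the infinite outgoing segment is instead the unbounded growth of the coefficients, which forces the $u$-width of the local existence rectangles to shrink as $v\to\infty$. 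That is precisely why the theorem asserts only that $\mysigma$ contains \emph{a neighborhood} of $\{0\}\times[0,\infty[$ rather than a uniform strip, so your argument survives this correction.
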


\begin{Rmk}
Notice that the initial data\/~\eqref{iu} and\/~\eqref{iv} satisfies the regularity condition $\rm(h4)$ in\/ {\rm Part~1} (that is, $\nu_0$, $\lambda_0$ and $\kappa_0$ are $C^1$). Therefore the solution of the characteristic initial value problem\/~\eqref{r_u}$-$\eqref{kappa_at_u} corresponds to a classical solution of the Einstein equations \eqref{wave_r}$-$\eqref{wave_Omega}.
\end{Rmk}

\subsection*{Breakdown criterion}

Theorem 5.4 of Part~1, for the initial data above, reads:

\begin{Thm}\label{bdown}
Suppose that $(r,\nu,\lambda,\omega,\theta,\zeta,\kappa)$ is the maximal solution of the characteristic initial value problem\/~\eqref{r_u}$-$\eqref{kappa_at_u}, with initial 
conditions\/~\eqref{iu} and\/~\eqref{iv}.
If $(\myU,V')$ is a point on the boundary of $\mysigma$ with $0<\myU<\myumax$ and $V'>0$, then
for all sequences $(u_n,v_n)$ in $\mysigma$ converging to $(\myU,V')$, we have
$$
r(u_n,v_n)\to 0\quad {\rm and}\quad \omega(u_n,v_n)\to\infty.
$$
\end{Thm}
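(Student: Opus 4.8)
The plan is to argue by contrapositive, showing that the solution can be continued past any boundary point at which the degeneration $r\to0$, $\omega\to\infty$ fails. Since $\mysigma$ is a past set and $(\myU,V')\in\partial\mysigma$, the rectangle $R:=[0,\myU[\,\times\,[0,V'[$ is contained in $\mysigma$, and I note that $V'<\infty$ while $\myU<\myumax$. First I would record the monotonicity forced by the equations: from \eqref{nu_v} the sign of $\nu$ is preserved, so $\nu<0$; from \eqref{kappa_u}, $\partial_u\kappa\le0$, so $0<\kappa\le1$; the Raychaudhuri equation keeps the solution in the trapped region $\{1-\mu\le0\}$ (technique (ii)), whence $\lambda=\kappa(1-\mu)\le0$ and $r$ is non-increasing in both $u$ and $v$; finally \eqref{omega_u}--\eqref{omega_v} give $\partial_u\omega\ge0$ and $\partial_v\omega\ge0$, so $\omega$ is non-decreasing in both variables. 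Hence the monotone limits $r_*:=\lim_{(u,v)\to(\myU,V')}r(u,v)\ge0$ and $\omega_*:=\lim_{(u,v)\to(\myU,V')}\omega(u,v)\ge M>0$ exist. Moreover, for any sequence $(u_n,v_n)\to(\myU,V')$ in $\mysigma$ and any fixed $(a,b)\in R$ one has $u_n>a$, $v_n>b$ for large $n$, and then $[0,u_n]\times[0,v_n]\subset\mysigma$ gives $r(u_n,v_n)\le r(a,b)$ and $\omega(u_n,v_n)\ge\omega(a,b)$; letting $(a,b)\to(\myU,V')$ shows that the conclusion (for every sequence) reduces to the two assertions $r_*=0$ and $\omega_*=\infty$.

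To rule out $r_*>0$, suppose $r\ge\delta>0$ on $R$. I would establish uniform bounds for $(r,\nu,\lambda,\omega,\theta,\zeta,\kappa)$ together with lower bounds for $|\nu|$ and $\kappa$ up to the corner. Granting these, the solution and, through \eqref{r_u}--\eqref{kappa_u}, its first derivatives extend continuously to $\overline R$; prescribing characteristic data on segments $\{u=\myU-h\}$ and $\{v=V'-h\}$ lying inside $R$ (where the solution is already known, with uniform bounds), the local existence theorem (Theorem \ref{maximal}) produces a solution on a full rectangle covering a neighborhood of $(\myU,V')$, the existence size being uniform because it depends only on the bounds. This contradicts $(\myU,V')\in\partial\mysigma$, so $r_*=0$.

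The estimates would be carried out as a single continuity/bootstrap argument on the finite rectangle, in the spirit of techniques (ii)--(iii). The bound $\kappa\le1$ is immediate; the fields $\theta,\zeta$ are controlled by BV/Gronwall estimates, feeding the running bounds on $|\nu|$, $|\lambda|=\kappa|1-\mu|$ and $1/r\le1/\delta$ into \eqref{theta_u}--\eqref{zeta_v}; the factor $|\nu|$ is propagated by integrating $\partial_v\ln|\nu|=\kappa\,\partial_r(1-\mu)$ in $v$, and $\kappa$ is bounded below by integrating $\partial_u\ln\kappa=\frac1r\left(\frac\zeta\nu\right)^2\nu$ in $u$. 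All coefficients are integrable over $R$ because $r\ge\delta$ and, crucially, $\omega$ is kept finite along the bootstrap by integrating the monotone right-hand sides of \eqref{omega_u}--\eqref{omega_v} against the field bounds just obtained. Here the finiteness of $V'$ and $\myU$ is essential: it is only in the limit $v\to\infty$ that $\kappa$ may vanish and $\omega$ inflate, so at a finite corner the nonlinear feedback of $\omega$ into $\partial_r(1-\mu)$ cannot cause blow-up.

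Finally, to deduce $\omega_*=\infty$ from $r_*=0$ I would use the definition of the mass. Combining \eqref{bar_rafaeli} with \eqref{omega_sq} yields $\omega=\frac{e^2}{2r}+\frac r2-\frac\Lambda6 r^3-\frac r2(1-\mu)$, and since $1-\mu\le0$ on $\mysigma$ the last term is non-negative, so $\omega\ge\frac{e^2}{2r}+\frac r2-\frac\Lambda6 r^3$. As $e^2>0$, the right-hand side tends to $+\infty$ as $r\to r_*=0$, forcing $\omega_*=\infty$. I expect the main obstacle to be the a priori estimates of the third paragraph: converting the single geometric bound $r\ge\delta$ into full control of the matter fields and of the mass on the characteristic rectangle, while preserving the sign structure $\nu<0$, $0<\kappa\le1$, $1-\mu\le0$ and $|\nu|$ bounded below up to the boundary. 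The delicate point is the nonlinear coupling, whereby a large $\omega$ enters $\partial_r(1-\mu)$ and thereby drives $\nu$, $\lambda$ and in turn $\theta,\zeta$; this is defused by the finiteness of the rectangle together with the Raychaudhuri propagation of $\kappa$ and the BV bounds on the fields, which close the bootstrap.
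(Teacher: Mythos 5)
This paper does not actually prove Theorem~\ref{bdown}: it is quoted from Part~1 (Theorem~5.4 there), so your proposal can only be compared with the strategy the trilogy indicates, and your architecture does match it. The reduction is sound: monotonicity (from Lemma~\ref{sign}-type sign structure) shows it suffices to prove $\inf r=0$ over the past rectangle $R=[0,\myU'[\,\times\,[0,V'[$, where I write $\myU'$ for the boundary coordinate; the identity $\varpi=\frac{e^2}{2r}+\frac r2-\frac\Lambda6 r^3-\frac r2(1-\mu)$ together with $1-\mu\le 0$ and $e\neq 0$ (which holds here, since a subextremal Reissner--Nordstr\"om background with $r_->0$ requires nonzero charge --- a hypothesis you should state, as your last step fails for $e=0$) converts $r\to 0$ into $\varpi\to\infty$ for every sequence; and the extension-by-uniform-local-existence contradiction is the standard mechanism, modulo the routine past-set bookkeeping (one extends the solution to $\overline R$ and contradicts maximality of $\mysigma$ via the strictly larger past set $\mysigma\cup\overline R$; also note that Theorem~\ref{maximal} as stated in this paper is specialized to the data \eqref{iu}--\eqref{iv}, so you need the general well-posedness theorem of Part~1 for data on interior null segments).

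The genuine gap is at the point you yourself flag as delicate, and your stated reason for why it is harmless is not a reason. You claim the feedback loop $\varpi\to\partial_r(1-\mu)\to\nu,\kappa,\zeta/\nu\to\varpi$ ``cannot cause blow-up at a finite corner'' because $\myU'$ and $V'$ are finite; but finiteness of the rectangle alone does not preclude blow-up --- a Riccati-type feedback blows up in finite time. What actually closes the estimates is a sign observation absent from your sketch: since $\varpi\ge M>0$ and $r\ge\delta$, the mass enters $\partial_r(1-\mu)=\frac{2\varpi}{r^2}-\frac{2e^2}{r^3}-\frac{2\Lambda}{3}r$ with a \emph{favorable} sign, so $\partial_r(1-\mu)\ge -K_0$ with $K_0=K_0(\delta,e,\Lambda,r_+)$ \emph{independent of} $\varpi$. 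This one-sided bound is what makes every exponential controllable in the needed direction, and with it no circular bootstrap is required if the estimates are done in the right order: integrating \eqref{nu_v} with $0<\kappa\le 1$ gives $-\nu\ge e^{-K_0V'}$ unconditionally; the BV bounds on $\int|\theta|\,dv$ and $\int|\zeta|\,du$ close by themselves because the measures $|\nu|\,du$ and $|\lambda|\,dv$ have total mass at most $r_+$ (your phrase ``all coefficients are integrable because $r\ge\delta$'' glosses over the fact that a naive sup-norm Gronwall fails here, since $\int|\nu|\,dv$ is \emph{not} controlled --- the integrations must be matched to the correct null directions, which is the content of technique (iii)); then \eqref{zeta_nu} is damped precisely where $\partial_r(1-\mu)$ is large positive, bounding $|\zeta/\nu|$; then \eqref{kappa_u} bounds $\kappa$ below, and finally \eqref{omega_v} bounds $\varpi$ on $R$. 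As written, your bootstrap does not visibly close; with the sign observation inserted, your proof becomes correct and is, in strategy, the same extension-criterion argument the trilogy relies on.
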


\subsection*{Reissner-Nordstr\"om solution}

For comparison purposes, we notice that the Reissner-Nordstr\"om solution (with a cosmological constant), obtained from the initial data $\zeta_0(u)=0$, corresponds to
\begin{align}
& \lambda = 1 - \mu, \label{primeira}\\
& \nu = - \, \frac{1 - \mu}{(1 - \mu)(\,\cdot\,,0)}, \label{segunda}\\
& \varpi = \varpi_0, \\
& \kappa = 1, \\ 
& \zeta = \theta = 0.\label{ultima}
\end{align}

\section{Preliminaries on the analysis of the solution}\label{cans}

We now take the initial data on the $v$ axis to be the data on the event horizon of a subextremal Reissner-Nordstr\"{o}m 
solution with mass $M>0$. So, we
choose initial data as in~\eqref{iu}$-$\eqref{iv} with $\zeta_0(0)=0$.
Moreover, we assume $\zeta_0$ to be continuous.
Since in this case the function $\varpi_0$ is constant equal to $M$, we also denote $M$ by $\varpi_0$.
In particular, when $\Lambda<0$, which corresponds to the Reissner-Nordstr\"{o}m anti-de Sitter solution, and when $\Lambda=0$, which corresponds to the Reissner-Nordstr\"{o}m solution, we assume that
$$ 
r\mapsto (1-\mu)(r,\varpi_0)=1-\frac{2\varpi_0}{r}+\frac{e^2}{r^2}-\frac\Lambda 3r^2
$$ 
has two zeros $r_-(\varpi_0)=r_-<r_+=r_+(\varpi_0)$. When $\Lambda>0$, which corresponds to the Reissner-Nordstr\"{o}m de Sitter solution, we assume that
$r\mapsto (1-\mu)(r,\varpi_0)$ has three zeros $r_-(\varpi_0)=r_-<r_+=r_+(\varpi_0)<r_c=r_c(\varpi_0)$.

\begin{center}
\begin{psfrags}
\psfrag{a}{{\tiny $\Lambda=0$}}
\psfrag{b}{{\tiny $\Lambda<0$}}
\psfrag{d}{{\tiny $\Lambda>0$}}
\psfrag{z}{{\tiny $r_0$}}
\psfrag{y}{{\tiny $\ $}}
\psfrag{r}{{\tiny $r$}}
\psfrag{n}{{\tiny \!\!\!\!\!\!\!\!\!\!\!\!\!\!\!\!\!$1-\mu(r,\varpi_0)$}}
\psfrag{p}{{\tiny $r_+$}}
\psfrag{m}{{\tiny \!\!\!\!\!\!$r_-$}}
\psfrag{c}{{\tiny $r_c$}}
\includegraphics[scale=.6]{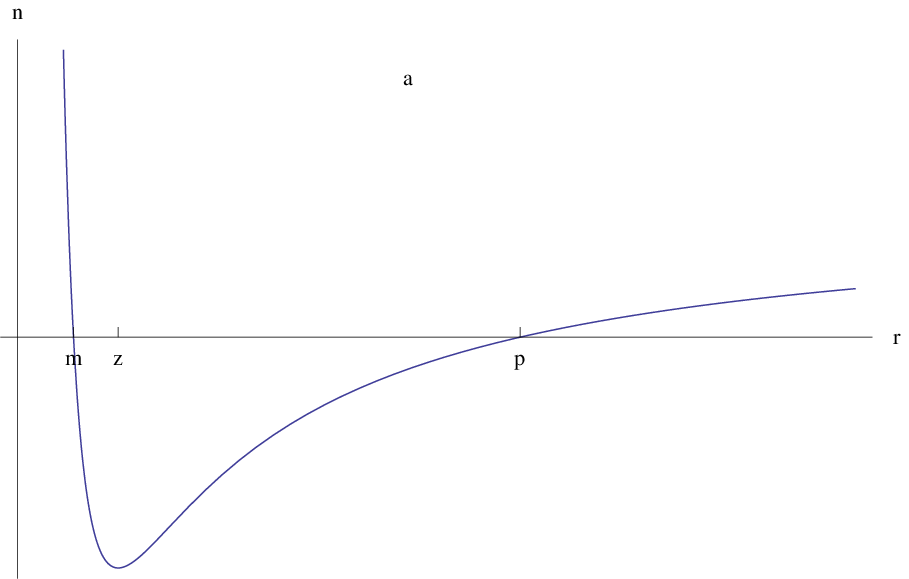}\\
\includegraphics[scale=.6]{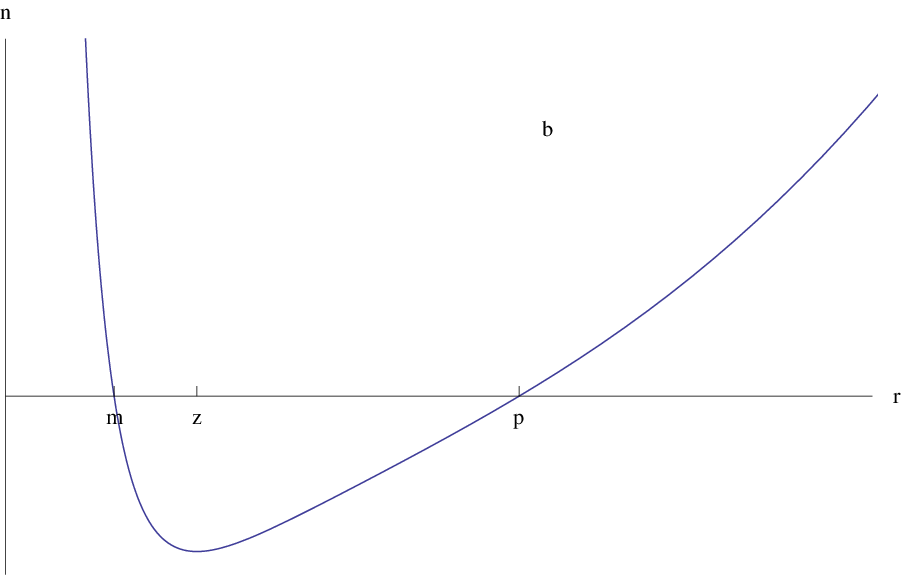}\ \ \ \ 
\includegraphics[scale=.6]{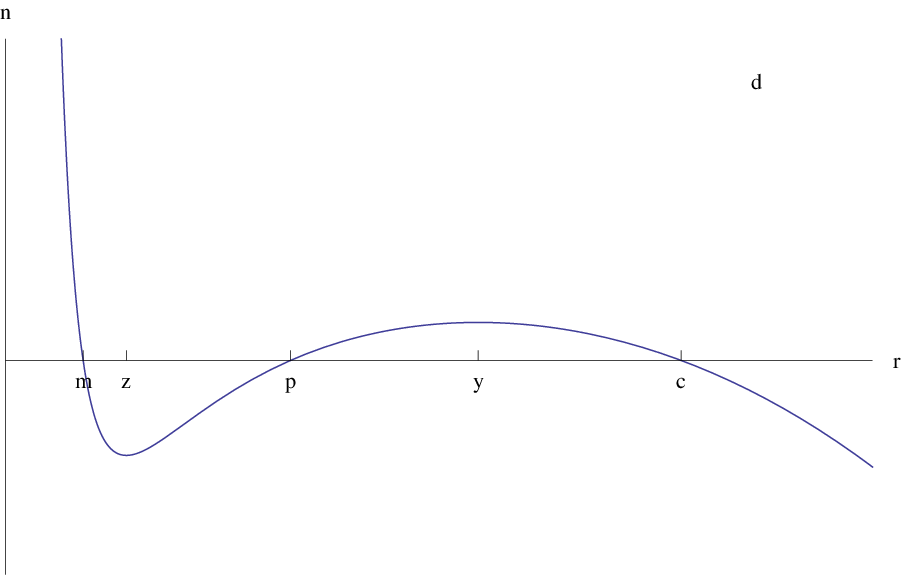}
\end{psfrags}
\end{center}

We define $\eta$ to be the function
$$ 
\eta=\vlinha.
$$ 
The functions $(r,\varpi)\mapsto\eta(r,\varpi)$ and $(r,\varpi)\mapsto (1-\mu)(r,\varpi)$ 
are related by
\begin{equation}\label{eta} 
\eta=-\,\frac{r^2}{2}\partial_r(1-\mu).
\end{equation}
We define the function $\eta_0:\R^+\to\R$ by
$$
\eta_0(r)=\vlinhazz.
$$
We will repeatedly use the fact that $\eta(r,\varpi)\leq\eta_0(r)$ (see Lemma~\ref{sign}).
If $\Lambda\leq 0$, then $\eta_0'<0$. So $\eta_0$ is strictly decreasing and has precisely one zero.
The zero is located between $r_-$ and $r_+$. If $\Lambda>0$, then $\eta_0''$ is positive, so $\eta_0$ is strictly convex
and has precisely two zeros: one zero is located between $r_-$ and $r_+$ and the other zero is located between $r_+$ and $r_c$.
We denote by $r_0$ the zero of $\eta_0$ between $r_-$ and $r_+$ in both cases.

\begin{center}
\begin{psfrags}
\psfrag{a}{{\tiny $\Lambda=0$}}
\psfrag{b}{{\tiny $\Lambda<0$}}
\psfrag{d}{{\tiny $\Lambda>0$}}
\psfrag{e}{{\tiny \!$\eta_0$}}
\psfrag{z}{{\tiny \!\!\!\!\!$r_0$}}
\psfrag{y}{{\tiny $\ $}}
\psfrag{r}{{\tiny $r$}}
\psfrag{p}{{\tiny $r_+$}}
\psfrag{m}{{\tiny \!\!\!\!\!\!$r_-$}}
\psfrag{c}{{\tiny $r_c$}}
\includegraphics[scale=.6]{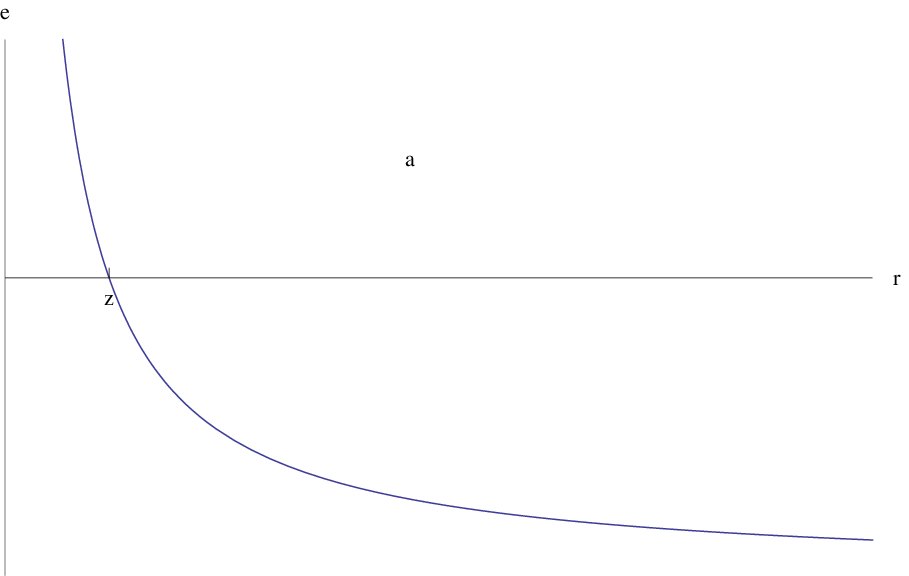}\\
\includegraphics[scale=.6]{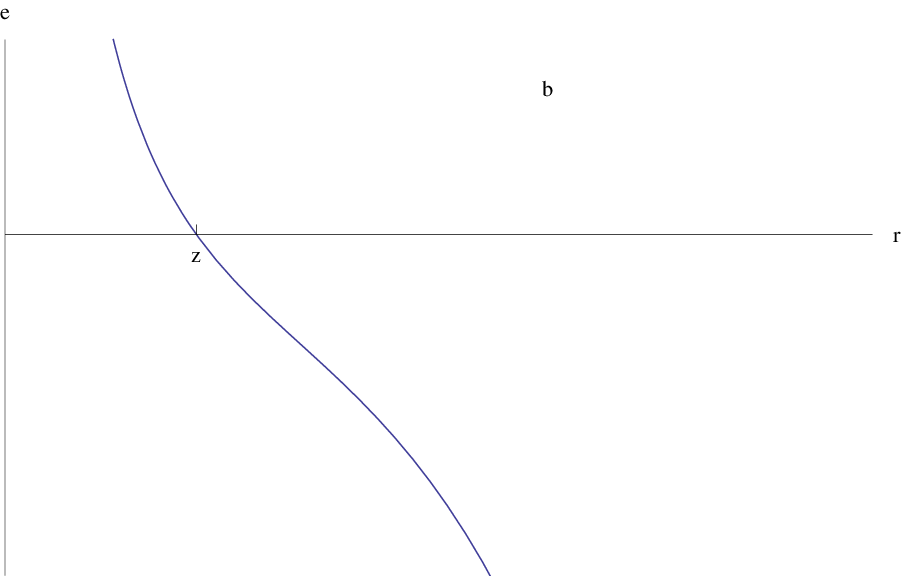}\ \ \ \ 
\includegraphics[scale=.6]{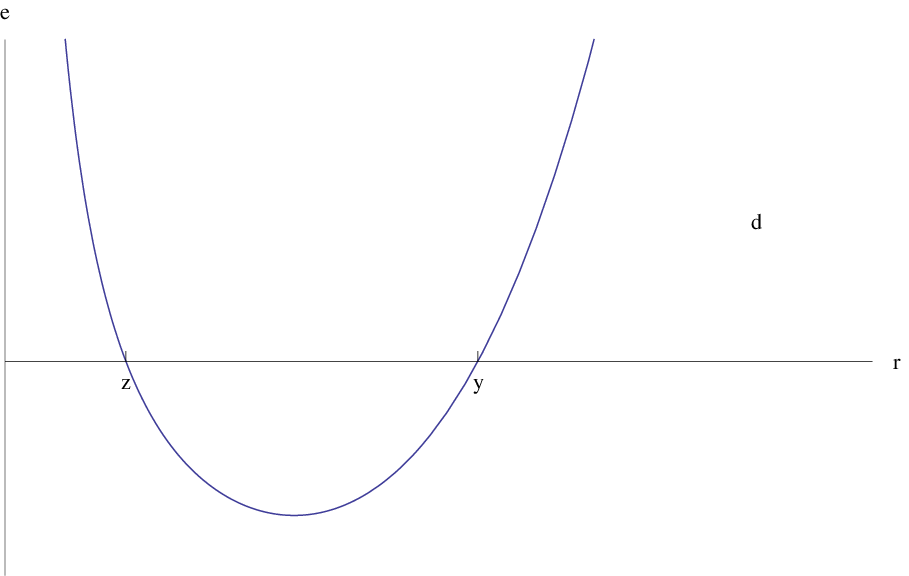}
\end{psfrags}
\end{center}

According to~\eqref{lambda_u}, we have $\partial_u\lambda(0,0)=-\partial_r(1-\mu)(r_+,\varpi_0)<0$. Since $\lambda(0,0)=0$, we may choose $U$ small enough so that $\lambda(u,0)$ is negative for $u\in\,]0,U]$. 
Again we denote by ${\cal P}$ the maximal past set where the solution of the characteristic initial value problem is defined.
In Part~1 we saw that
$\lambda$ is negative on ${\cal P}\setminus\{0\}\times[0,\infty[$, and so, as $\kappa$ is positive (from \eqref{kappa_u} and \eqref{iv}), then $1-\mu$ is negative on ${\cal P}\setminus\{0\}\times[0,\infty[$.

Using the above, we can thus particularize the result of Part 1 on signs and monotonicities to the case where the initial data is\/~\eqref{iu} and\/~\eqref{iv} as follows.
\begin{Lem}[Sign and monotonicity]\label{sign} 
Suppose that $(r,\nu,\lambda,\omega,\theta,\zeta,\kappa)$ is the maximal solution of the characteristic initial value problem\/~\eqref{r_u}$-$\eqref{kappa_at_u}, with initial 
conditions\/~\eqref{iu} and\/~\eqref{iv}.
Then:
\begin{itemize}
\item $\kappa$ is positive;
\item $\nu$ is negative;
\item $\lambda$ is negative on ${\cal P}\setminus\{0\}\times[0,\infty[$;
\item $1 - \mu$ is negative on ${\cal P}\setminus\{0\}\times[0,\infty[$;
\item $r$ is decreasing with both $u$ and $v$;
\item $\omega$ is nondecreasing with both $u$ and $v$.
\end{itemize}
\end{Lem}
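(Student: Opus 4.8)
The plan is to treat this lemma as a particularization of the general sign-and-monotonicity result of Part~1: once the signs of $\nu$, $\kappa$ and $\lambda$ are known, every remaining assertion is obtained simply by reading off the sign of the right-hand side of the corresponding evolution equation among \eqref{r_u}$-$\eqref{kappa_u}. The only genuinely new input required for the Reissner-Nordstr\"om data \eqref{iu}$-$\eqref{iv} is the verification that the hypothesis triggering the Part~1 propagation argument holds, namely that $\lambda$ is strictly negative on the ingoing initial segment $\,]0,\myumax]\times\{0\}$.

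First I would dispatch the two signs that require no information about $\lambda$. Fixing $u$ and reading \eqref{nu_v} as the homogeneous linear ODE $\partial_v\nu=\bigl(\kappa\,\rmu\bigr)\nu$ along the outgoing direction, integration from the datum $\nu(u,0)=-1$ in \eqref{iu} gives $\nu(u,v)=-\exp\bigl(\int_0^v\kappa\,\rmu\,dv'\bigr)<0$; in particular $\nu$ never vanishes, so the ratio $\zeta/\nu$ in \eqref{kappa_u} is well defined. Symmetrically, fixing $v$ and integrating \eqref{kappa_u} as $\partial_u\kappa=\bigl(\frac{\nu}{r}(\zeta/\nu)^2\bigr)\kappa$ from $\kappa(0,v)=1$ in \eqref{iv} yields $\kappa>0$ on all of $\mysigma$ (using that $r>0$ on $\mysigma$, which is part of the Part~1 framework). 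Both arguments work precisely because the equations are homogeneous linear in the unknown, so the initial sign is propagated irrespective of the sign of the coefficient.

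The substantive step is the negativity of $\lambda$ on $\mysigma\setminus\{0\}\times[0,\infty[$. At the corner I compute from \eqref{lambda_u} that $\partial_u\lambda(0,0)=\nu(0,0)\,\kappa(0,0)\,\ru=-\,\ru$. Since $(1-\mu)(\,\cdot\,,\omega_0)$ is negative on $\,]r_-,r_+[$ and, by subextremality, has a simple zero at $r_+$, it is strictly increasing across $r_+$, so $\ru>0$ and hence $\partial_u\lambda(0,0)<0$; as $\lambda(0,0)=0$, shrinking $\myumax$ if necessary gives $\lambda(u,0)<0$ for $u\in\,]0,\myumax]$. This is exactly the initial-data condition under which the Part~1 sign result applies, and invoking it propagates $\lambda<0$ into the interior. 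The constraint \eqref{kappa_at_u}, $\lambda=\kappa(1-\mu)$, together with $\kappa>0$, then immediately gives $1-\mu<0$ on $\mysigma\setminus\{0\}\times[0,\infty[$.

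Finally the monotonicities follow by substituting these signs: $\partial_ur=\nu<0$ and $\partial_vr=\lambda\le 0$ show $r$ decreases in both variables, while $\partial_v\omega=\frac12\theta^2/\kappa\ge 0$ (since $\kappa>0$) and $\partial_u\omega=\frac12(1-\mu)(\zeta/\nu)^2\nu\ge 0$ (a product of the nonpositive factor $1-\mu$, the nonnegative factor $(\zeta/\nu)^2$ and the negative factor $\nu$) show $\omega$ is nondecreasing in both variables. I expect the only real obstacle to be the propagation of $\lambda<0$ off the initial segment into all of $\mysigma\setminus\{0\}\times[0,\infty[$: the equation \eqref{lambda_u} is not itself sign-preserving, because its right-hand side does not contain $\lambda$, so this cannot be a one-line Gronwall estimate and must instead rest on the more delicate characteristic/continuity argument of Part~1 — which is precisely why this assertion is imported from there rather than reproved here.
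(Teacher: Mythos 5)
Your proposal is correct and follows essentially the same route as the paper: the corner computation $\partial_u\lambda(0,0)=-\partial_r(1-\mu)(r_+,\varpi_0)<0$ giving $\lambda(u,0)<0$ for $u\in\,]0,U]$ after shrinking $U$, the appeal to Part~1 for propagating $\lambda<0$ into the interior, $\kappa>0$ from integrating \eqref{kappa_u} with $\kappa_0\equiv 1$, $1-\mu<0$ from the constraint \eqref{kappa_at_u}, and the monotonicities read off from the signs. Your closing remark that the propagation of $\lambda<0$ cannot be a one-line Gronwall argument (since \eqref{lambda_u} is not homogeneous in $\lambda$) and must be imported from Part~1 is exactly how the paper handles it.
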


Using~\eqref{lambda_u} and~\eqref{theta_u}, we obtain
 \begin{equation}\label{theta_lambda} 
  \partial_u\frac\theta\lambda= 
	-\,\frac\zeta r-\frac\theta\lambda\frac{\nu}{1-\mu}\partial_r(1-\mu),
 \end{equation}
and analogously, using~\eqref{nu_v} and~\eqref{zeta_v},
 \begin{equation}\label{zeta_nu} 
  \partial_v\frac\zeta\nu= 
	-\,\frac\theta r-\frac\zeta\nu\frac{\lambda}{1-\mu}\partial_r(1-\mu).
 \end{equation}

Given $0 < \ckr < r_+$, let us denote by $\cg_{\ckr}$ the level set of the radius function $$\cg_{\ckr}:=\{(u,v)\in{\cal P}:r(u,v)=\ckr\}.$$
If nonempty, $\cg_{\ckr}$ is a connected $C^1$ spacelike curve, since both $\nu$ and $\lambda$ are negative on ${\cal P}\setminus\{0\}\times[0,\infty[$.
Using the Implicit Function Theorem, the facts that $r(0,v)=r_+$,
$r(u,0)=r_+-u$, the signs of $\nu$ and $\lambda$, and the 
breakdown criterion given in Theorem~\ref{bdown},
one can show that $\cg_{\ckr}$ can be parametrized by a $C^1$ function $$v\mapsto (\uckr(v),v),$$
whose domain is $[0,\infty[$ if $\ckr\geq r_+-U$, or an interval of the form $[\vckr(U),\infty[$, for some $\vckr(U)>0$, if $\ckr<r_+-U$\,. Alternatively, $\cg_{\ckr}$ can also be parametrized by a $C^1$ function
$$u\mapsto (u,\vckr(u)),$$ whose domain is always an interval of the form $\,]\uckr(\infty),\min\{r_+-\ckr,U\}]$, for some $\uckr(\infty) \geq 0$\,.
We prove below that if $\ckr>r_-$, then $\uckr(\infty)=0$.

\begin{center}
\begin{turn}{45}
\begin{psfrags}
\psfrag{a}{{\tiny $\vckr(U)$}}
\psfrag{y}{{\tiny $(\ug(v),\tilde v)$}}
\psfrag{d}{{\tiny \!\!$u$}}
\psfrag{u}{{\tiny $v$}}
\psfrag{x}{{\tiny $\ug(v)$}}
\psfrag{f}{{\tiny \!\!\!\!$U$}}
\psfrag{j}{{\tiny \!\!\!\!$\cg_{\ckr}$}}
\psfrag{h}{{\tiny \!\!$\cg_{\ckrp}$}}
\psfrag{v}{{\tiny $u$}}
\psfrag{b}{{\tiny \!\!\!\!\!\!\!\!\!\!\!\!\!\!\!\!$\uckr(\infty)$}}
\psfrag{c}{{\tiny $v$}}
\psfrag{s}{{\tiny \ \ $(u,\vckr(u))$}}
\psfrag{t}{{\tiny \ \ $(\uckr(v),v)$}}
\includegraphics[scale=.8]{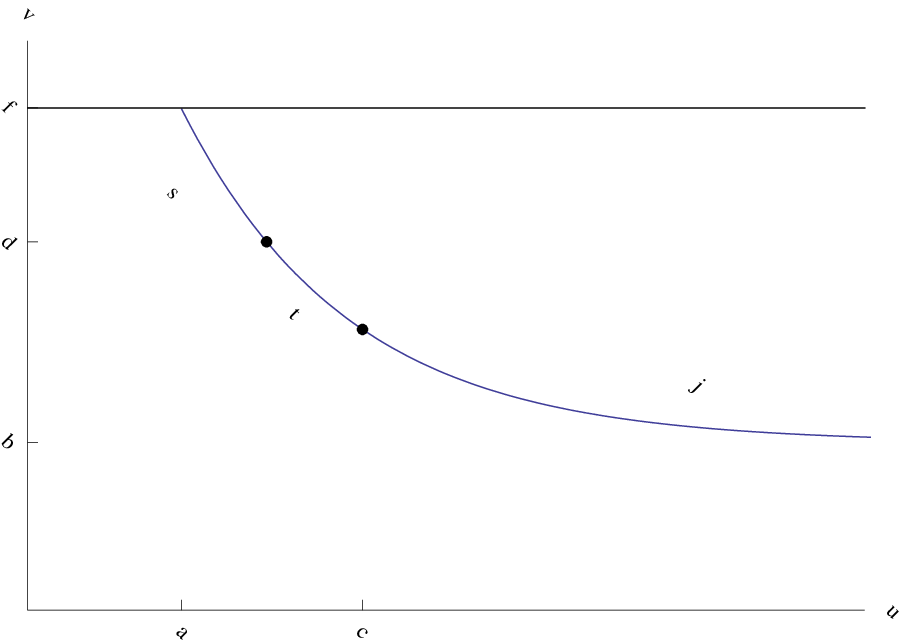}
\end{psfrags}
\end{turn}
\end{center}

To analyze the solution we partition the domain into four regions (see figure below). We start by choosing $\ckrm$ and $\ckrp$ such that
 $r_-<\ckrm<r_0<\ckrp<r_+$. In Section~\ref{red-shift} we treat the region $\ckrp\leq r\leq r_+$.
In Section~\ref{medium} we consider the region $\ckrm\leq r\leq\ckrp$. In Section~\ref{large} we treat the region where $(u,v)$ is such that
$$
\vgr(u)\leq v\leq (1+\beta)\,\vgr(u),
$$
with $\beta>0$ appropriately chosen (we will denote the curve $v=(1+\beta)\,\vgr(u)$ by $\gam$). Finally, in Section~\ref{huge} we consider the region where $(u,v)$ is such that
$$
v\geq (1+\beta)\,\vgr(u).
$$
The reader should regard $\ckrm$, $\ckrp$ and $\beta$ as fixed. Later, they will have to be carefully chosen for our arguments to go through.

\begin{center}
\begin{turn}{45}
\begin{psfrags}
\psfrag{u}{{\tiny $u$}}
\psfrag{v}{{\tiny $v$}}
\psfrag{x}{{\tiny $\!\!\!\cg_{\ckrp}$}}
\psfrag{y}{{\tiny $\!\!\cg_{\ckrm}$}}
\psfrag{z}{{\tiny $\gam$}}
\psfrag{e}{{\tiny $\!\!\!\!\!U$}}
\includegraphics[scale=.7]{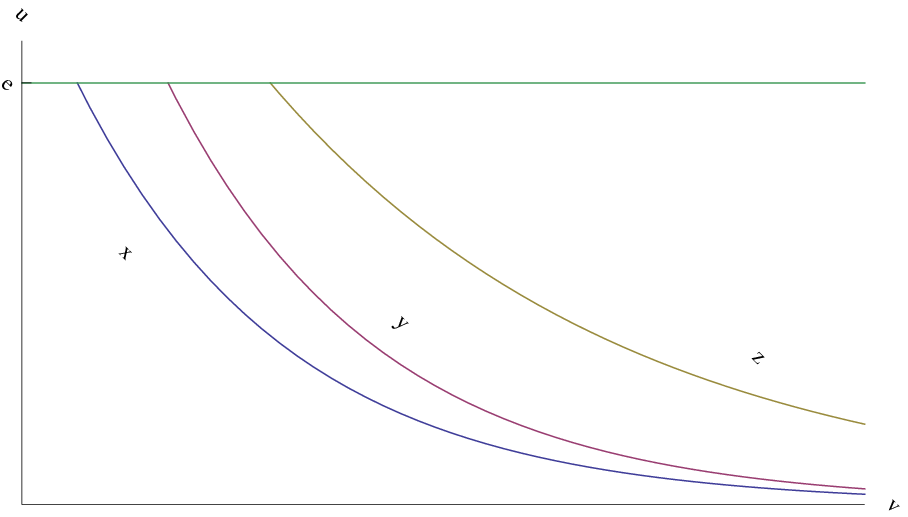}
\end{psfrags}
\end{turn}
\end{center}

The crucial step consists in estimating the fields $\frac\theta\lambda$ and $\frac\zeta\nu$. Once this is done, the other estimates follow easily.
By integrating~\eqref{theta_lambda} and~\eqref{zeta_nu}, we obtain
\begin{eqnarray} 
\frac\theta\lambda(u,v)&=&\frac\theta\lambda(\uckr(v),v)e^{-\int_{\uckr(v)}^{u}\bigl[\frac{\nu}{1-\mu}\partial_r(1-\mu)\bigr](\tilde u,v)\,d\tilde u}\nonumber\\
&&-\int_{\uckr(v)}^{u}\frac\zeta r(\tilde u,v)e^{-\int_{\tilde u}^{u}\bigl[\frac{\nu}{1-\mu}\partial_r(1-\mu)\bigr](\bar{u},v)\,d\bar{u}}\,d\tilde u,\label{field_6}\\
\frac\zeta\nu(u,v)&=&\frac\zeta\nu(u,\vckr(u))e^{-\int_{\vckr(u)}^{v}\bigl[\frac{\lambda}{1-\mu}\partial_r(1-\mu)\bigr](u,\tilde v)\,d\tilde v}\nonumber\\
&&-\int_{\vckr(u)}^{v}\frac\theta r(u,\tilde v)e^{-\int_{\tilde v}^{v}\bigl[\frac{\lambda}{1-\mu}\partial_r(1-\mu)\bigr](u,\bar{v})\,d\bar{v}}\,d\tilde v.\label{field_7}
\end{eqnarray}

Formula~\eqref{field_6} is valid provided that $\uckr(v)$ is defined and $\uckr(v)\leq u$, since the domain ${\cal P}$ is a past set; it also holds if we replace $\uckr(v)$ by $0$.
Similarly, formula~\eqref{field_7} is valid provided that $\vckr(u)$ is defined and $\vckr(u)\leq v$; again it holds if we replace $\vckr(u)$ by $0$.

\section{The region $J^-(\cg_{\ckrp})$}\label{red-shift}

Recall that $r_0<\ckrp<r_+$.
In this section, we treat the region $\ckrp\leq r\leq r_+$, that is, $J^-(\cg_{\ckrp})$.\footnote{Throughout this paper we follow the usual notations for the causal structure of the quotient Lorentzian manifold with coordinates $(u,v)$ and time orientation such that $\frac{\partial}{\partial u}$ and $\frac{\partial}{\partial v}$ are future-pointing.} 
Our first goal is to estimate~\eqref{z_00} for $\frac\zeta\nu$. 
This will allow us to obtain the lower bound~\eqref{k_min} for $\kappa$,
which will then be used to improve estimate~\eqref{z_00} to~\eqref{z_0_bis}.
Finally, we successively bound $\frac\theta\lambda$, $\theta$, $\varpi$, and use this to prove that the domain of
$\vgrp(\,\cdot\,)$ is $\,]0,\min\{r_+-\ckrp,U\}]$.

In this region, the solution with general $\zeta_0$ can then be considered as a small perturbation of the 
Reissner-Nordstr\"{o}m solution~\eqref{primeira}$-$\eqref{ultima}: $\omega$ is close to $\omega_0$, $\kappa$ is close to $1$ and $\zeta, \theta$ are
close to $0$. Besides, the smaller $U$ is, the closer the approximation.

Substituting~\eqref{field_6} in~\eqref{field_7}  (with both $\uckr(v)$ and $\vckr(u)$ replaced by 0), we get
\begin{eqnarray}
\frac\zeta\nu(u,v)&=&
\frac\zeta\nu(u,0)e^{-\int_{0}^{v}\bigl[\frac{\lambda}{1-\mu}\partial_r(1-\mu)\bigr](u,\tilde v)\,d\tilde v}\label{double_zero}\\
&&+\int_{0}^{v}\frac\theta\lambda(0,\tilde v)e^{-\int_{0}^{u}\bigl[\frac{\nu}{1-\mu}\partial_r(1-\mu)\bigr](\tilde u,\tilde v)\,d\tilde u}\times\nonumber\\
&&\qquad\qquad\times
\Bigl[\frac{(-\lambda)}{r}\Bigr](u,\tilde v)e^{-\int_{\tilde v}^{v}\bigl[\frac{\lambda}{1-\mu}\partial_r(1-\mu)\bigr](u,\bar{v})\,d\bar{v}}\,d\tilde v\nonumber\\
&&+\int_{0}^{v}\left(\int_{0}^{u}\Bigl[\frac\zeta\nu\frac{(-\nu)}r\Bigr](\tilde u,\tilde v)
e^{-\int_{\tilde u}^{u}\bigl[\frac{\nu}{1-\mu}\partial_r(1-\mu)\bigr](\bar{u},\tilde v)\,d\bar{u}}\,d\tilde u\right)\times\nonumber\\
&&\qquad\qquad\times\Bigl[\frac{(-\lambda)}{r}\Bigr](u,\tilde v)
e^{-\int_{\tilde v}^{v}\bigl[\frac{\lambda}{1-\mu}\partial_r(1-\mu)\bigr](u,\bar{v})\,d\bar{v}}\,d\tilde v.\nonumber
\end{eqnarray}
We make the change of coordinates 
\begin{equation}
(u,v)\mapsto(r(u,v),v)\quad \Leftrightarrow\quad (r,v)\mapsto (\ug(v),v).
\label{coordinates}
\end{equation}
The coordinates $(r,v)$ are called Bondi coordinates.
We denote by $\hatz$ the function $\frac\zeta\nu$ written in these new coordinates, so that
$$
\frac\zeta\nu(u,v)=\hatz(r(u,v),v)
\quad \Leftrightarrow\quad
\hatz(r,v)=\frac\zeta\nu(\ug(v),v).
$$
The same notation will be used for other functions.
In the new coordinates, \eqref{double_zero} may be written 
\begin{eqnarray}
\hatz(r,v)&=&
\frac\zeta\nu(\ug(v),0)e^{-\int_{0}^{v}\bigl[\frac{\lambda}{1-\mu}\partial_r(1-\mu)\bigr](\ug(v),\tilde v)\,d\tilde v}\label{triple_zero}\\
&&+\int_{0}^{v}\widehat{\frac\theta\lambda}(r_+,\tilde v)
e^{\int_{r(\ug(v),\tilde v)}^{r_+}\bigl[\frac{1}{\widehat{1-\mu}}\widehat{\partial_r(1-\mu)}\bigr](\tilde{s},\tilde v)\,d\tilde{s}}\times\nonumber\\
&&\qquad\qquad\times
\Bigl[\frac{(-\lambda)(\ug(v),\tilde v)}{r(\ug(v),\tilde v)}\Bigr]e^{-\int_{\tilde v}^{v}\bigl[\frac{\lambda}{1-\mu}\partial_r(1-\mu)\bigr](\ug(v),\bar{v})\,d\bar{v}}\,d\tilde v\nonumber\\
&&+\int_{0}^{v}\left(\int_{r(\ug(v),\tilde v)}^{r_+}\frac{1}{\tilde s}\Bigl[\hatz\Bigr](\tilde s,\tilde v)
e^{\int_{r(\ug(v),\tilde v)}^{\tilde s}\bigl[\frac{1}{\widehat{1-\mu}}\widehat{\partial_r(1-\mu)}\bigr](\bar{s},\tilde v)\,d\bar{s}}\,d\tilde s\right)\times\nonumber\\
&&\qquad\qquad\times
\Bigl[\frac{(-\lambda)(\ug(v),\tilde v)}{r(\ug(v),\tilde v)}\Bigr]
e^{-\int_{\tilde v}^{v}\bigl[\frac{\lambda}{1-\mu}\partial_r(1-\mu)\bigr](\ug(v),\bar{v})\,d\bar{v}}\,d\tilde v.\nonumber
\end{eqnarray}

\begin{center}
\begin{turn}{45}
\begin{psfrags}
\psfrag{a}{{\tiny $\tilde v$}}
\psfrag{y}{{\tiny $(\ug(v),\tilde v)$}}
\psfrag{u}{{\tiny $v$}}
\psfrag{z}{{\tiny \!\!\!\!$r_+-s_1$}}
\psfrag{x}{{\tiny $\ug(v)$}}
\psfrag{f}{{\tiny $U$}}
\psfrag{g}{{\tiny $\cg_{s_2}$}}
\psfrag{h}{{\tiny $\cg_{s_1}$}}
\psfrag{v}{{\tiny $u$}}
\psfrag{b}{{\tiny $v$}}
\psfrag{c}{{\tiny $0$}}
\psfrag{p}{{\tiny $(\ug(v),v)$}}
\includegraphics[scale=.8]{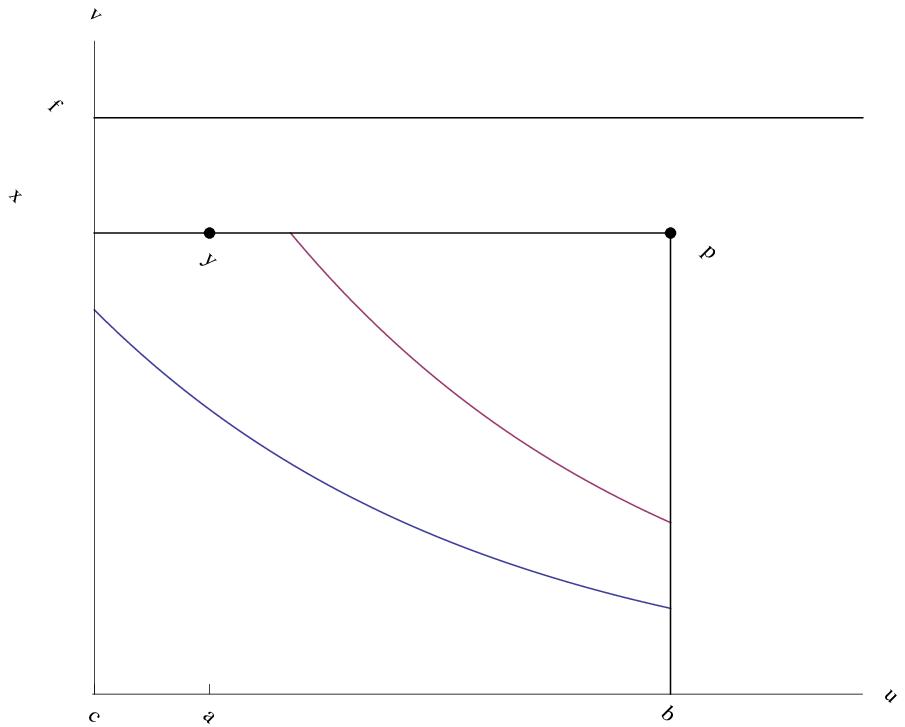}
\end{psfrags}
\end{turn}
\end{center}

We have $\theta(0,v)=0$ and, from~\eqref{theta_u}, $\partial_u\theta(0,v)=0$, whereas $\lambda(0,v)=0$ and, from~\eqref{lambda_u}, $\partial_u\lambda(0,v)<0$. Writing
\[
\frac{\theta}{\lambda} (u,v) = \frac{\int_0^u \partial_u\theta(\tilde{u},v) d\tilde{u}}{\int_0^u \partial_u\lambda(\tilde{u},v) d\tilde{u}},
\]
it is easy to show that the function $\frac\theta\lambda$ can be extended as a continuous function to $\{0\}\times[0,\infty[$, with $\frac\theta\lambda(0,v)=0$.
Substituting this into~\eqref{field_6} (again with $\uckr(v)$ replaced by 0) yields
\begin{eqnarray*}
\frac\theta\lambda(u,v) 
&=&-\int_{0}^{u}\frac\zeta r(\tilde u,v)e^{-\int_{\tilde u}^{u}\bigl[\frac{\nu}{1-\mu}\partial_r(1-\mu)\bigr](\bar{u},v)\,d\bar{u}}\,d\tilde u.
\end{eqnarray*}
We can rewrite this in the new coordinates as
\begin{eqnarray}
\widehat{\frac\theta\lambda}(r,v)&=&\int_{r}^{r_+}\frac 1 {\tilde s}\Bigl[\hatz\Bigr](\tilde s,v)
e^{\int_r^{\tilde s}\bigl[\frac{1}{\widehat{1-\mu}}\widehat{\partial_r(1-\mu)}\bigr](\bar{s},v)\,d\bar{s}}\,d\tilde s.\label{t_z}
\end{eqnarray}

A key point is to bound the exponentials that appear in~\eqref{triple_zero} and~\eqref{t_z}.
As we go on, this will be done several times in different ways.

\begin{Lem}\label{run}
 Assume that there exists $\alpha \geq 0$ such that, for $0 \leq \tilde{v} \leq v$, the following bounds hold:
$$
e^{-\int_{\tilde v}^{v}\bigl[\frac{\lambda}{1-\mu}\partial_r(1-\mu)\bigr](\ug(v),\bar{v})\,d\bar{v}}\leq e^{-\alpha(v-\tilde v)}
$$
and
$$
e^{\int_{r(\ug(v),\tilde v)}^{\tilde s}\bigl[\frac{1}{\widehat{1-\mu}}\widehat{\partial_r(1-\mu)}\bigr](\bar{s},\tilde v)\,d\bar{s}}\leq 1.
$$
Then~\eqref{triple_zero} implies
\begin{equation}\label{z_0}
\Bigl|\hatz\Bigr|(r,v)\leq e^{\frac{(r_+-r)^2}{rr_+}}\max_{u\in[0,\ug(v)]}|\zeta_0|(u)e^{-\alpha v}.
\end{equation}
\end{Lem}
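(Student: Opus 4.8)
The plan is to convert \eqref{triple_zero} into a one-dimensional Gronwall inequality in the radial variable, after first rescaling away the $\alpha$-dependence. The starting point is that the middle (single-integral) term of \eqref{triple_zero} simply vanishes: by \eqref{t_z} one has $\widehat{\frac\theta\lambda}(r_+,\tilde v)=\int_{r_+}^{r_+}(\cdots)\,d\tilde s=0$ (equivalently, it equals $\frac\theta\lambda(0,\tilde v)=0$, the limit computed just before the lemma). For the first term, the initial data \eqref{iu} give $\frac\zeta\nu(\ug(v),0)=-\zeta_0(\ug(v))$, so its modulus is at most $\max_{u\in[0,\ug(v)]}|\zeta_0|(u)$, while the first hypothesis (taken with $\tilde v=0$) bounds the accompanying exponential by $e^{-\alpha v}$. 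Writing $M:=\max_{u\in[0,\ug(v)]}|\zeta_0|(u)$ and $z(r,v):=|\hatz|(r,v)\,e^{\alpha v}$, I would multiply \eqref{triple_zero} by $e^{\alpha v}$ and use the two hypotheses to discard the inner exponential and to dominate the outer one. The factors $e^{\alpha v}$, the $e^{-\alpha\tilde v}$ hidden in $|\hatz|(\tilde s,\tilde v)$, and $e^{-\alpha(v-\tilde v)}$ then cancel exactly, leaving the $\alpha$-free inequality
\begin{equation*}
z(r,v)\le M+\int_0^v\Big(\int_{r(\ug(v),\tilde v)}^{r_+}\tfrac1{\tilde s}\,z(\tilde s,\tilde v)\,d\tilde s\Big)\Big[\tfrac{(-\lambda)}{r}\Big](\ug(v),\tilde v)\,d\tilde v.
\end{equation*}

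The crucial simplification is the Bondi change of variables in the outer integral: along the vertical segment $u=\ug(v)$ one has $\partial_{\tilde v}r=\lambda$, so $\big[\tfrac{-\lambda}{r}\big]\,d\tilde v=-d(\ln r)$, the factor $\lambda$ disappears, and $d\tilde v$ is replaced by $\tfrac{d\rho}{\rho}$ with $\rho=r(\ug(v),\tilde v)$ running from $r$ (at $\tilde v=v$) up to $r_+-\ug(v)\le r_+$ (at $\tilde v=0$). Every point $(\tilde s,\tilde v)$ occurring in the double integral lies in the causal past of the target point $(\ug(v),v)$, where $u\le\ug(v)$, so the relevant values of $\zeta_0$ are all controlled by $M$. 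Defining
$$Z(r):=\sup\big\{\,z(r',\tilde v): r\le r'\le r_+,\ (r',\tilde v)\in J^-((\ug(v),v))\,\big\},$$
bounding $z(\tilde s,\tilde v)\le Z(\tilde s)$, enlarging the $\rho$-range to $[r,r_+]$, and applying Fubini to the resulting double radial integral, I would obtain the closed inequality
$$Z(r)\le M+\int_r^{r_+}\frac{\ln(\tilde s/r)}{\tilde s}\,Z(\tilde s)\,d\tilde s.$$

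The last step is to solve this Volterra inequality. The substitution $t=\ln(r_+/r)$ turns it into $y(t)\le M+\int_0^t(t-\tau)\,y(\tau)\,d\tau$, where $y(t)=Z(r_+e^{-t})$; the kernel $(t-\tau)$ is precisely the one whose resolvent is $\cosh$, so comparison with the equality case (differentiating twice gives $y''=y$, $y(0)=M$, $y'(0)=0$, hence $y=M\cosh t$) yields
$$Z(r)\le M\cosh\!\big(\ln(r_+/r)\big)=\tfrac{M}{2}\Big(\tfrac{r_+}{r}+\tfrac{r}{r_+}\Big).$$
Setting $x=\tfrac{r_+}{r}+\tfrac{r}{r_+}\ge 2$, the elementary chain $\tfrac{x}{2}\le x-1\le e^{x-2}$ together with $x-2=\tfrac{(r_+-r)^2}{rr_+}$ upgrades this to $Z(r)\le M\,e^{(r_+-r)^2/(rr_+)}$; since $z(r,v)\le Z(r)$ and $z(r,v)=|\hatz|(r,v)e^{\alpha v}$, this is exactly \eqref{z_0}.

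I expect the main obstacle to be organizing the Gronwall so that it genuinely produces the exponential factor rather than a crude geometric-series bound: one must choose the sup-functional $Z$ over the causal past (so that both the $\zeta_0$-term and the iterated integrand are controlled by the same quantity) and reduce the two-variable integral inequality to the one-dimensional radial Volterra inequality via the Bondi change of variables and Fubini. A subsidiary technical point is the a priori finiteness of $Z(r)$, needed to cancel it across the inequality; this is handled by first working on a strip $\{\tilde v\le V\}$, where continuity of $\frac\zeta\nu$ and $\nu<0$ give boundedness on the relevant compact set, and then noting that the final bound $M\,e^{(r_+-r)^2/(rr_+)}$ is independent of $V$.
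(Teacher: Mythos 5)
Your proposal is correct and follows the same overall architecture as the paper's proof: kill the middle term of~\eqref{triple_zero} via $\widehat{\frac\theta\lambda}(r_+,\cdot)\equiv 0$, cancel the $\alpha$-factors exactly by working with $e^{\alpha v}\bigl|\hatz\bigr|$, introduce a sup-functional over the causal past of $(\ug(v),v)$ (using the nesting $J^-(\ugs(\tilde v),\tilde v)\subset J^-(\ug(v),v)$ to close the inequality), and run a Gronwall argument in the radial variable. The one genuine difference is in how the Gronwall is executed. The paper decouples the two integrals crudely -- it bounds $\int_{r(\ug(v),\tilde v)}^{r_+}{\cal Z}(\tilde s)\,d\tilde s$ by $\int_r^{r_+}{\cal Z}(\tilde s)\,d\tilde s$, pulls it out, computes $\int_0^v\frac{-\lambda}{r^2}\,d\tilde v=\frac1r-\frac1{r_+-\ug(v)}\leq\frac1r-\frac1{r_+}$, and applies linear Gronwall with that constant, which directly produces $e^{(\frac1r-\frac1{r_+})(r_+-r)}=e^{(r_+-r)^2/(rr_+)}$. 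You instead keep the coupling, apply Fubini after the Bondi change of variables to get the exact Volterra kernel $\ln(\tilde s/r)/\tilde s$, and solve it via the $\cosh$ resolvent, obtaining the sharper intermediate bound $\frac12\bigl(\frac{r_+}{r}+\frac{r}{r_+}\bigr)\max|\zeta_0|\,e^{-\alpha v}$, which you then correctly weaken to the stated form via $\frac x2\leq e^{x-2}$ for $x\geq 2$. Both routes are valid; yours buys a slightly better constant at the cost of a less standard Gronwall step, and your closing remarks on a priori finiteness and measurability of the sup-functional address the only technical points that need care.
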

\begin{proof}
Combining~\eqref{triple_zero} with $\widehat{\frac\theta\lambda}(r_+,v)\equiv 0$ and the bounds on the exponentials, we have
\begin{eqnarray}
\Bigl|\hatz\Bigr|(r,v)&\leq&|\zeta_0|(\ug(v))e^{-\alpha v}\label{four_zero}\\
&&+\int_{0}^{v}\int_{r(\ug(v),\tilde v)}^{r_+}\frac{1}{\tilde s}\Bigl[\hatz\Bigr](\tilde s,\tilde v)
\,d\tilde s\times\nonumber\\
&&\qquad\qquad\qquad\times
\Bigl[\frac{(-\lambda)(\ug(v),\tilde v)}{r(\ug(v),\tilde v)}\Bigr]
e^{-\alpha(v-\tilde v)}\,d\tilde v.\nonumber
\end{eqnarray}
For $r\leq s< r_+$, define
$$
{\cal Z}_{(r,v)}^\alpha(s)=\left\{
\begin{array}{ll}
\max_{\tilde v\in[0,v]}\left\{e^{\alpha\tilde v}\Bigl|\hatz\Bigr|(s,\tilde v)\right\}&{\rm if}\ r_+-\ug(v)\leq s<r_+,\\
\max_{\tilde v\in[\vgs(\ug(v)),v]}\left\{e^{\alpha\tilde v}\Bigl|\hatz\Bigr|(s,\tilde v)\right\}&{\rm if}\ r\leq s\leq r_+-\ug(v).
\end{array}
\right.
$$
Here the maximum is taken over the projection of $J^-(\ug(v),v)\cap\cg_s$ on the $v$-axis (see the figure below).

\begin{center}
\begin{turn}{45}
\begin{psfrags}
\psfrag{a}{{\tiny $\vgsd(\ug(v))$}}
\psfrag{y}{{\tiny $(\ug(v),\vgsd(\ug(v)))$}}
\psfrag{u}{{\tiny $v$}}
\psfrag{z}{{\tiny $r_+-s_1$}}
\psfrag{x}{{\tiny $\ug(v)$}}
\psfrag{f}{{\tiny $U$}}
\psfrag{g}{{\tiny $\cg_{s_2}$}}
\psfrag{h}{{\tiny $\cg_{s_1}$}}
\psfrag{v}{{\tiny $u$}}
\psfrag{b}{{\tiny $v$}}
\psfrag{c}{{\tiny $0$}}
\psfrag{p}{{\tiny $(\ug(v),v)$}}
\includegraphics[scale=.8]{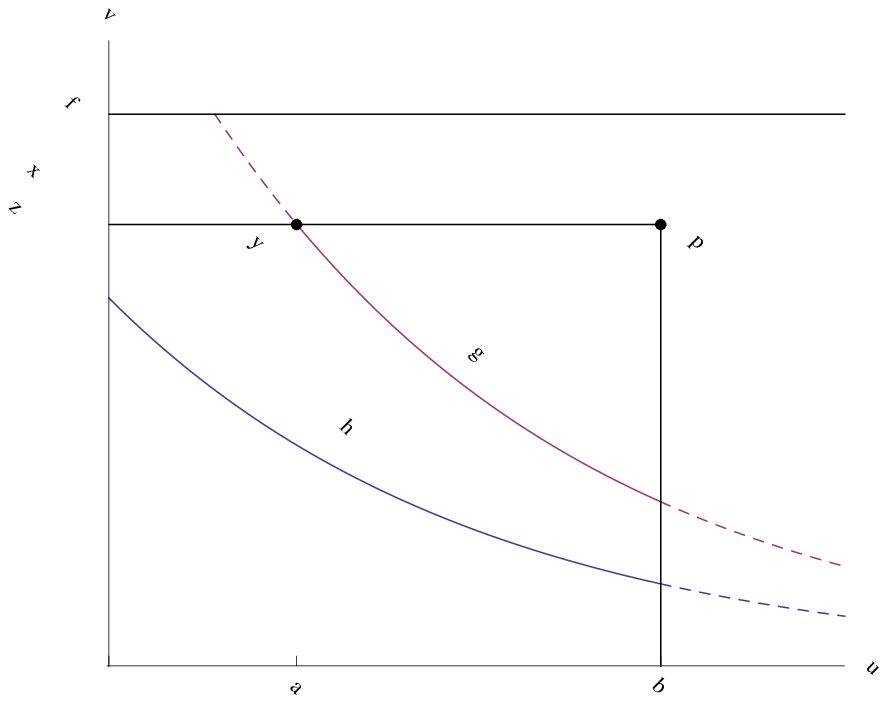}
\end{psfrags}
\end{turn}
\end{center}

Note that ${\cal Z}_{(r,v)}^\alpha(r)=e^{\alpha v}\bigl|\hatz\bigr|(r,v)$. Inequality~\eqref{four_zero} implies
\begin{eqnarray}
{\cal Z}_{(r,v)}^\alpha(r)&\leq& |\zeta_0|(\ug(v))\nonumber\\
&&+\int_{0}^{v}\int_{r(\ug(v),\tilde v)}^{r_+}{\cal Z}_{(r,v)}^\alpha(\tilde s)
\,d\tilde s
\Bigl[\frac{(-\lambda)(\ug(v),\tilde v)}{[r(\ug(v),\tilde v)]^2}\Bigr]
\,d\tilde v\nonumber\\
&\leq& \max_{s\in[r,r_+]}|\zeta_0|(\ugs(v))\nonumber\\
&&+\int_{r}^{r_+}{\cal Z}_{(r,v)}^\alpha(\tilde s)
\,d\tilde s\int_{0}^{v}
\Bigl[\frac{(-\lambda)(\ug(v),\tilde v)}{[r(\ug(v),\tilde v)]^2}\Bigr]
\,d\tilde v\nonumber\\
&\leq& \max_{u\in[0,\ug(v)]}|\zeta_0|(u)
+
\Bigl(\frac 1r-\frac 1{r_+-\ug(v)}\Bigr)
\int_{r}^{r_+}{\cal Z}_{(r,v)}^\alpha(\tilde s)
\,d\tilde s.\nonumber
\end{eqnarray}
We still consider $r\leq s <r_+$.
Let $\tilde v\in[0,v]$ if $r_+-\ug(v)\leq s<r_+$, 
and $\tilde v\in [\vgs(\ug(v)),v]$ if $r\leq s\leq r_+-\ug(v)$.
In this way $(\ugs(\tilde v),\tilde v)\in J^-(\ug(v),v)$.
In the same way one can show that
$$
e^{\alpha\tilde v}\Bigl|\hatz\Bigr|(s,\tilde v)\leq \max_{u\in[0,\ugs(v)]}|\zeta_0|(u)
+
\Bigl(\frac 1s-\frac 1{r_+-\ugs(v)}\Bigr)
\int_{s}^{r_+}{\cal Z}_{(r,v)}^\alpha(\tilde s)
\,d\tilde s
$$
because $J^-(\ugs(\tilde v),\tilde v)\cap \cg_{\tilde s}\subset J^-(\ug(v),v)\cap \cg_{\tilde s}$, for $s\leq\tilde s<r_+$, and so
${\cal Z}_{(s,\tilde v)}^\alpha(\tilde s)\leq {\cal Z}_{(r,v)}^\alpha(\tilde s)$.
Since $ \ugs(v)\leq \ug(v)$ for $r\leq s<r_+$, we have
\begin{eqnarray*}
{\cal Z}_{(r,v)}^\alpha(s)
&\leq& \max_{u\in[0,\ug(v)]}|\zeta_0|(u)
+
\Bigl(\frac 1r-\frac 1{r_+}\Bigr)
\int_{s}^{r_+}{\cal Z}_{(r,v)}^\alpha(\tilde s)
\,d\tilde s.
\end{eqnarray*}
Using Gronwall's inequality, we get
$$
{\cal Z}_{(r,v)}^\alpha(r)\leq e^{\frac{(r_+-r)^2}{rr_+}}\max_{u\in[0,\ug(v)]}|\zeta_0|(u).
$$
This establishes~\eqref{z_0}.
\end{proof}

\begin{Lem}\label{lso}
 Let $r_0\leq r<r_+$ and $v>0$. Then
\begin{equation}\label{z_00}
\Bigl|\hatz\Bigr|(r,v)\leq e^{\frac{(r_+-r)^2}{rr_+}}\max_{u\in[0,\ug(v)]}|\zeta_0|(u).
\end{equation}
\end{Lem}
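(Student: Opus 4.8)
The plan is to obtain Lemma~\ref{lso} as the special case $\alpha=0$ of Lemma~\ref{run}: once one checks that the two exponential hypotheses of Lemma~\ref{run} hold with $\alpha=0$ throughout the region $r_0\leq r<r_+$, the bound~\eqref{z_00} is exactly what~\eqref{z_0} gives when $\alpha=0$. So the whole argument reduces to a sign analysis of $\rmu$ along the relevant integration paths.

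The key step I would carry out first is to show that $\rmu(r,\varpi)\geq 0$ whenever $r_0\leq r<r_+$. By~\eqref{eta} one has $\rmu=-\frac{2}{r^2}\eta$, so this is equivalent to $\eta(r,\varpi)\leq 0$ there. Since $\varpi$ is nondecreasing (Lemma~\ref{sign}) and equals $\varpi_0$ on $\{0\}\times[0,\infty[$, we have $\varpi\geq\varpi_0$ and therefore $\eta(r,\varpi)\leq\eta_0(r)$. It then suffices to recall that $\eta_0\leq 0$ on $[r_0,r_+)$: for $\Lambda\leq 0$ the function $\eta_0$ is strictly decreasing with its unique zero at $r_0$, while for $\Lambda>0$ it is strictly convex with zeros at $r_0$ and at a point beyond $r_+$; either way $\eta_0(r)\leq 0=\eta_0(r_0)$ for $r_0\leq r<r_+$. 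This chain gives $\eta(r,\varpi)\leq\eta_0(r)\leq 0$, hence $\rmu\geq 0$ on the whole region.

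With this sign established, I would verify the two hypotheses of Lemma~\ref{run} using only the monotonicity of $r$ and the signs $\kappa>0$ and $1-\mu<0$ from Lemma~\ref{sign}. For the first exponential, the integration runs along $\bar v\mapsto(\ug(v),\bar v)$ with $\bar v\in[\tilde v,v]$; as $\partial_v r=\lambda<0$, decreasing $v$ can only increase $r$, so $r(\ug(v),\bar v)\geq r(\ug(v),v)=r\geq r_0$ and $\rmu\geq 0$ along the path. Hence the integrand $\frac{\lambda}{1-\mu}\rmu=\kappa\,\rmu$ is nonnegative, the exponent $-\int_{\tilde v}^v\kappa\,\rmu$ is $\leq 0$, and the first bound holds with $\alpha=0$. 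For the second exponential, the variable $\bar s$ ranges over $[r(\ug(v),\tilde v),\tilde s]\subseteq[r_0,r_+]$, where again $\rmu\geq 0$; since $\frac{1}{1-\mu}<0$, the integrand $\frac{1}{1-\mu}\rmu$ is nonpositive, the exponent is $\leq 0$, and the required bound $\leq 1$ follows.

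Applying Lemma~\ref{run} with $\alpha=0$ then yields~\eqref{z_00} directly. I expect the only genuine content to be the sign of $\rmu$ on $[r_0,r_+)$, which is precisely where the location of the zero $r_0$ of $\eta_0$ enters; everything else is bookkeeping with the signs and monotonicities of Lemma~\ref{sign}. The one point deserving care is confirming that the integration paths really stay inside $\{r\geq r_0\}$, and this is exactly where $\partial_v r<0$ is used to guarantee $r(\ug(v),\tilde v)\geq r$.
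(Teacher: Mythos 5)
Your proposal is correct and follows essentially the same route as the paper: both establish $\partial_r(1-\mu)\geq 0$ on $\{r_0\leq r\leq r_+\}$ via $\eta\leq\eta_0\leq 0$ (using $\varpi\geq\varpi_0$ and the location of the zero $r_0$ of $\eta_0$), conclude that all the exponentials in~\eqref{triple_zero} are bounded by $1$, and then invoke Lemma~\ref{run} with $\alpha=0$. Your additional verification that the integration paths remain in $\{r\geq r_0\}$ is a detail the paper leaves implicit, but it matches the intended argument.
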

\begin{proof}
We bound the exponentials in~\eqref{triple_zero}. 
From~\eqref{eta}, the definition of $\eta$ and $\varpi\geq\varpi_0$,
$$
-\partial_r(1-\mu)=\frac{2\eta}{r^2}\leq\frac{2\eta_0}{r^2}=-\partial_r(1-\mu)(r,\varpi_0)\leq 0.
$$
Therefore in the region $J^-(\cg_{r_0})$ the exponentials are bounded by 1.
Applying Lemma~\ref{run} with $\alpha=0$ we obtain~\eqref{z_0} with $\alpha=0$, which is precisely~\eqref{z_00}.
\end{proof}

According to~\eqref{z_00}, the function $\frac\zeta\nu$ is bounded in the region $J^-(\cg_{r_0})$, say by $\hat\delta$.
From~\eqref{kappa_u},
\begin{eqnarray}
\kappa(u,v)&=&e^{\int_0^u\bigl(\frac{\zeta^2}{r\nu}\bigr)(\tilde u,v)\,d\tilde u}\nonumber\\
&\geq&e^{\hat\delta^2\int_0^u(\frac{\nu}{r})(\tilde u,v)\,d\tilde u}\nonumber\\
&\geq&\left(\frac{r_0}{r_+}\right)^{\hat\delta^2}.\label{k_min}
\end{eqnarray}

We recall from Part 1 that equations (\ref{r_v}), (\ref{nu_v}), (\ref{omega_v}) and (\ref{kappa_at_u}) imply
\begin{equation}\label{Ray} 
\partial_v\left(\frac{1-\mu}{\nu}\right)=-\,\frac{\theta^2}{\nu r\kappa},
\end{equation}
which is the Raychaudhuri equation in the $v$ direction. We also recall that the integrated form of~\eqref{omega_u} is
\begin{eqnarray} 
\omega(u,v)=\omega_0(v)e^{-\int_0^u\bigl(\frac{\zeta^2}{r\nu}\bigr)(u',v)\,du'}\qquad\qquad\qquad\qquad\qquad\qquad\qquad&&\nonumber\\ \qquad\qquad\ \  +
\int_0^ue^{-\int_s^u\frac{\zeta^2}{r\nu}(u',v)\,du'}\left(\frac{1}{2}\left(1+\frac{e^2}{r^2}
-\frac{\Lambda}{3}r^2\right)\frac{\zeta^2}{\nu}\right)(s,v)\,ds.&&\label{omega_final}
\end{eqnarray}
These will be used in the proof of the following result.

\begin{Prop}\label{psd}
  Let $r_0<\ckrp\leq r<r_+$ and $v>0$. Then there exists $\alpha>0$ (given by~\eqref{alp_new} below) such that
\begin{equation}\label{z_0_bis}
\Bigl|\hatz\Bigr|(r,v)\leq e^{\frac{(r_+-r)^2}{rr_+}}\max_{u\in[0,\ug(v)]}|\zeta_0|(u)e^{-\alpha v},
\end{equation}
\begin{equation} 
\Bigl|\widehat{\frac\theta\lambda}\Bigr|(r,v) 
\leq \hat C_r\max_{u\in[0,\ug(v)]}|\zeta_0|(u)e^{-\alpha v},\label{t_0}
\end{equation}
\begin{equation}
|\hat\theta|(r,v)\leq C\max_{u\in[0,\ug(v)]}|\zeta_0|(u)e^{-\alpha v}.\label{theta_f}
\end{equation}
For $(u,v)\in J^-(\cg_{\ckrp})$, and $U$ sufficiently small, we have
\begin{equation}\label{mass_2}
\varpi_0\leq\varpi(u,v)\leq \varpi_0+
C\left(\sup_{\tilde u\in[0,u]}|\zeta_0|(\tilde u)\right)^2.
\end{equation}
Moreover, the curve $\cg_{\ckrp}$ intersects every line of constant $u$ provided that $0 < u \leq \min\{r_+-\ckrp,U\}$. Therefore, $\ugrp(\infty)=0$.
\end{Prop}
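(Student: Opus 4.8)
The overall strategy is a bootstrap resting on Lemmas~\ref{run} and~\ref{lso}. Since $\ckrp>r_0$, the whole region $\{\ckrp\le r<r_+\}$ lies in $J^-(\cg_{r_0})$, so Lemma~\ref{lso} already supplies the non-decaying bound~\eqref{z_00}, and with it the lower bound $\kappa\ge(r_0/r_+)^{\hat\delta^2}=:\kappa_{\min}>0$ coming from~\eqref{k_min}. The task is to upgrade~\eqref{z_00} to the decaying estimate~\eqref{z_0_bis} by re-examining the two exponentials in~\eqref{triple_zero} and showing that Lemma~\ref{run} applies with a \emph{strictly positive} $\alpha$. The second exponential is treated exactly as in Lemma~\ref{lso}: on $\{r\ge r_0\}$ one has $\widehat{\partial_r(1-\mu)}\ge0$ while $\widehat{1-\mu}<0$, so the integrand is nonpositive and, the inner integration running upward from $r(\ug(v),\tilde v)\ge\ckrp$, the exponential is $\le1$. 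For the first exponential I would use $\frac{\lambda}{1-\mu}=\kappa$ together with the strict positivity of $\partial_r(1-\mu)$ off $r_0$: since $\eta\le\eta_0$ and $-\eta_0\ge c_0:=\min_{[\ckrp,r_+]}(-\eta_0)>0$ (the minimum being positive because the monotonicity/convexity of $\eta_0$ forces $\eta_0<0$ strictly on $\,]r_0,r_+]$), one gets $\partial_r(1-\mu)\ge 2c_0/r_+^2$. Along the segment $\{(\ug(v),\bar v)\}$, whose radius stays in $[\ckrp,r_+]$ where $\kappa\ge\kappa_{\min}$, this yields $\frac{\lambda}{1-\mu}\partial_r(1-\mu)\ge\alpha:=\kappa_{\min}\,2c_0/r_+^2>0$, the positive constant promised in the statement, and Lemma~\ref{run} then delivers~\eqref{z_0_bis}.

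Estimate~\eqref{t_0} follows by inserting~\eqref{z_0_bis} into the representation~\eqref{t_z}: the exponential there is again $\le1$ on $\{r\ge r_0\}$, and because $\ugs(v)\le\ug(v)$ for $s\ge r$ the maximum of $|\zeta_0|$ factors out, leaving the $v$-independent constant $\hat C_r=\int_r^{r_+}\frac{1}{\tilde s}e^{(r_+-\tilde s)^2/(\tilde s r_+)}\,d\tilde s$. Next I would establish~\eqref{mass_2}: the lower inequality $\varpi\ge\varpi_0$ is the monotonicity of $\varpi$ in $u$ from Lemma~\ref{sign}, while for the upper inequality I use the integrated form~\eqref{omega_final}. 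There the exponential weights are controlled through $\int_0^u\frac{(\zeta/\nu)^2(-\nu)}{r}\,d\tilde u\le\hat\delta^2\ln(r_+/\ckrp)$ (using $\int_0^u(-\nu)/r=\ln(r_+/r)\le\ln(r_+/\ckrp)$), and the integral term is similarly $O(\hat\delta^2)$; since $\hat\delta=O(\sup|\zeta_0|)$ by~\eqref{z_00}, taking $U$ small makes $\hat\delta$ small and a linearization $e^{C\hat\delta^2}\le1+C'\hat\delta^2$ produces $\varpi\le\varpi_0+C(\sup|\zeta_0|)^2$. Finally~\eqref{theta_f} comes from $\theta=\frac\theta\lambda\cdot\lambda$: here $\kappa\le1$ (because $\partial_u\kappa\le0$ and $\kappa(0,\cdot)=1$) and~\eqref{mass_2} bounds $-\lambda=\kappa(\mu-1)$ uniformly on $\{\ckrp\le r<r_+\}$, so combining with~\eqref{t_0} and $\hat C_r\le\hat C_{\ckrp}$ gives a constant $C$ independent of $r$.

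For the last assertion, fix $u\in\,]0,\min\{r_+-\ckrp,U\}]$. Then $r(u,0)=r_+-u\ge\ckrp$ and $r$ decreases in $v$. Suppose, for contradiction, that $r(u,v)>\ckrp$ for all $v$ in the domain; then the half-line $\{u\}\times[0,\infty[$ lies in $\mysigma$, since $r$ being bounded away from $0$ precludes the breakdown of Theorem~\ref{bdown}. On this half-line $r(u,v)\in[\ckrp,r_+-u]\subset[\ckrp,r_+)$, where $(1-\mu)(\,\cdot\,,\varpi)\le(1-\mu)(\,\cdot\,,\varpi_0)<0$ (using $\varpi\ge\varpi_0$); hence $-\lambda=\kappa\bigl(-(1-\mu)\bigr)\ge\kappa_{\min}\,c_1$ with $c_1:=\min_{[\ckrp,r_+-u]}\bigl(-(1-\mu)(\,\cdot\,,\varpi_0)\bigr)>0$. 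Integrating gives $r(u,v)\le r_+-u-\kappa_{\min}c_1\,v\to-\infty$, which is absurd. Therefore $\cg_{\ckrp}$ meets every such line of constant $u$; since its lower parametrization domain is $\,]\ugrp(\infty),\min\{r_+-\ckrp,U\}]$, we conclude $\ugrp(\infty)=0$.

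I expect the main obstacle to be the passage from~\eqref{z_00} to~\eqref{z_0_bis}, that is, producing a strictly positive $\alpha$: this is exactly the step where the lower bound on $\kappa$ and the strict sign of $\partial_r(1-\mu)$ away from $r_0$ must be combined, and where one must verify that the relevant integration paths in~\eqref{triple_zero} remain in the region $[\ckrp,r_+]$ so that both hypotheses of Lemma~\ref{run} hold uniformly.
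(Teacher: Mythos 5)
Your proposal is correct and follows essentially the same route as the paper: the positive $\alpha$ is obtained exactly as in~\eqref{alp_new} by combining the $\kappa$ lower bound of type~\eqref{k_min} with the strict positivity of $\partial_r(1-\mu)(\,\cdot\,,\varpi_0)$ on $[\ckrp,r_+]$, then Lemma~\ref{run}, then \eqref{t_z} for~\eqref{t_0}, \eqref{omega_final} with the small-$U$ linearization for~\eqref{mass_2}, and $\kappa\leq 1$ plus the boundedness of $1-\mu$ for~\eqref{theta_f}. The only (harmless) deviation is in the final claim, where you bound $-(1-\mu)$ below via $(1-\mu)(r,\varpi)\leq(1-\mu)(r,\varpi_0)$ and compactness of $[\ckrp,r_+-u]$, while the paper instead derives $(1-\mu)(u,v)\leq(1-\mu)(u,0)<0$ from the monotonicity of $\nu$ in $v$ and the Raychaudhuri equation~\eqref{Ray}; both yield the same contradiction $r(u,v)\to-\infty$.
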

\begin{proof}
In $J^-(\cg_{\ckrp})$, we have
$\partial_r(1-\mu)(r,\varpi_0)
\geq\min_{r\in[\ckrp,r_+]}\partial_r(1-\mu)(r,\varpi_0)>0$ and
\begin{eqnarray} 
-\,\frac{\lambda}{1-\mu}\partial_r(1-\mu)&\leq&-\kappa\, \partial_r(1-\mu)(r,\varpi_0)\nonumber\\
&\leq &-\inf_{J^-(\cg_{\ckrp})}\kappa\times\partial_r(1-\mu)(r,\varpi_0)\nonumber\\
&\leq &
-\Bigl(\frac{\ckrp}{r_+}\Bigr)^{\hat\delta^2}\min_{r\in[\ckrp,r_+]}\partial_r(1-\mu)(r,\varpi_0)\nonumber\\
&=:&-\alpha<0,\label{alp_new}
\end{eqnarray}
where we have used~\eqref{k_min} (with $\ckrp$ instead of $r_0$).
Thus, we can improve the bounds on the exponentials in~\eqref{triple_zero} that involve integrals in $v$ as follows:
$$
e^{-\int_{\tilde v}^{v}\bigl[\frac{\lambda}{1-\mu}\partial_r(1-\mu)\bigr](\ug(v),\bar{v})\,d\bar{v}}\leq e^{-\alpha(v-\tilde v)}.
$$
Since 
$$
\frac{1}{\widehat{1-\mu}}\widehat{\partial_r(1-\mu)}\leq 0,
$$
as before, we have
$$
e^{\int_{r(\ug(v),\tilde v)}^{\tilde s}\bigl[\frac{1}{\widehat{1-\mu}}\widehat{\partial_r(1-\mu)}\bigr](\bar{s},\tilde v)\,d\bar{s}}\leq 1.
$$
We apply Lemma~\ref{run} again, this time with a positive $\alpha$, to get~\eqref{z_0_bis}.

Now we may use~\eqref{t_z} and~\eqref{z_0_bis} to obtain
\begin{eqnarray*}
\Bigl|\widehat{\frac\theta\lambda}\Bigr|(r,v)&\leq& e^{\frac{(r_+-r)^2}{rr_+}}\ln\Bigl(\frac{r_+}r\Bigr)\max_{u\in[0,\ug(v)]}|\zeta_0|(u)e^{-\alpha v}\nonumber\\
&=&\hat C_r\max_{u\in[0,\ug(v)]}|\zeta_0|(u)e^{-\alpha v}.
\end{eqnarray*}

In order to bound $\varpi$ in $J^-(\cg_{\ckrp})$,
we note that
$$
-\int_0^{u}\left(\frac{\zeta^2}{r\nu}\right)(\tilde u,v)\,d\tilde u\leq C^2\left(\sup_{\tilde u\in[0,u]}|\zeta_0|(\tilde u)\right)^2\ln\left(\frac{r_+}{\ckrp}\right),
$$
$$
\left|1+\frac{e^2}{r^2}-\frac\Lambda 3r^2\right|\leq 1+\frac{e^2}{\ckrp^2}+\frac{|\Lambda|}{3}r_+^2
$$
and
$$
-\int_0^{u}\nu(\tilde u,v)\,d\tilde u=r_+-r(u,v)\leq r_+-\ckrp.
$$
From~\eqref{omega_final}, we conclude that 
$$ 
\varpi(u,v)\leq \varpi_0e^{C\left(\sup_{\tilde u\in[0,u]}|\zeta_0|(\tilde u)\right)^2}+
C\left(\sup_{\tilde u\in[0,u]}|\zeta_0|(\tilde u)\right)^2.
$$ 
Inequality~\eqref{mass_2} follows from $e^x\leq 1+2x$, for small $x$, since
$\zeta_0$ is continuous and $\zeta_0(0)=0$.

Given that $\kappa\leq 1$, we have $(1-\mu)\leq\lambda$. Moreover, since $\varpi$ is bounded
in the region $J^-(\cg_{\ckrp})$, $1-\mu$ is bounded from below, and so $\lambda$ is also bounded from below.
Hence~\eqref{t_0} implies~\eqref{theta_f}.

Let $0<u\leq\min\{r_+-\ckrp,U\}$. We claim that
\begin{equation}\label{stop}
\sup\left\{v\in[0,\infty[:(u,v)\in J^-(\cg_{\ckrp})\right\}<\infty.
\end{equation}
To see this, first note that~\eqref{nu_v} shows that $v\mapsto\nu(u,v)$ is decreasing in $J^-(\cg_{\ckrp})$, as $\partial_r(1-\mu)\geq 0$ for $r_0 \leq r \leq r_+$ (recall that $\eta(r,\omega) \leq \eta_0(r)$).
Then~\eqref{Ray} shows $v\mapsto(1-\mu)(u,v)$ is also decreasing in $J^-(\cg_{\ckrp})$. Thus, as long as $v$ 
is such that $(u,v)\in J^-(\cg_{\ckrp})$, we have
$(1-\mu)(u,v)\leq (1-\mu)(u,0)<0$.
Combining the previous inequalities with~\eqref{k_min}, we get
$$
\lambda(u,v)\leq \left(\frac{\ckrp}{r_+}\right)^{\hat\delta^2}(1-\mu)(u,0)<0.
$$
Finally, if \eqref{stop} did not hold for a given $u$, we would have
\begin{eqnarray*}
r(u,v)&=&r(u,0)+\int_0^v\lambda(u,v')\,dv'\\ &\leq& r(u,0)+\left(\frac{\ckrp}{r_+}\right)^{\hat\delta^2}(1-\mu)(u,0)\,v\to-\infty,
\end{eqnarray*}
as $v \to \infty$, which is a contradiction. This establishes the claim. 
\end{proof}

\section{The region $J^-(\cg_{\ckrm})\cap J^+(\cg_{\ckrp})$}\label{medium}

In this section, we treat the region $\ckrm\leq r\leq\ckrp$. Recall that we assume that $r_-<\ckrm<r_0<\ckrp<r_+$. By decreasing $\ckrm$, if necessary, we will also assume that 
\begin{equation}\label{assume}
-(1-\mu)(\ckrm,\varpi_0) \leq - (1-\mu)(\ckrp,\varpi_0).
\end{equation}

In Subsection~\eqref{medium-1}, we obtain estimates~\eqref{z-0} and~\eqref{t-0-a} for $\frac\zeta\nu$ and
$\frac\theta\lambda$, which will allow us to obtain the lower bound~\eqref{k_min_2} for $\kappa$, the upper bound~\eqref{mass_2_bis} for $\varpi$,
and to prove that the domain of
$\vgr(\,\cdot\,)$ is $\,]0,\min\{r_+-\ckrm,U\}]$.
In Subsection~\eqref{nu-lambda}, we obtain upper and lower bounds for $\lambda$ and $\nu$, as well as more information about the region $\ckrm\leq r\leq\ckrp$.
In Subsection~\eqref{medium-3}, we use the results from the previous subsection
to improve the estimates on $\frac\zeta\nu$ and $\frac\theta\lambda$ to~\eqref{z_f} and~\eqref{t-f}.
We also obtain the bound~\eqref{theta_ff} for $\theta$.

As in the previous section, the solution with general $\zeta_0$ is qualitatively still a small perturbation of the Reissner-Nordstr\"{o}m solution~\eqref{primeira}$-$\eqref{ultima}: $\omega$, $\kappa$, $\zeta$ and $\theta$ remain close to $\omega_0$, $1$ and $0$, respectively. Moreover, $\lambda$ is bounded from below by a negative constant, and away from zero by a constant depending on $\ckrp$ and $\ckrm$, as is also the case in the Reissner-Nordstr\"{o}m solution (see equation~\eqref{primeira}). Likewise, $\nu$ has a similar behavior to its Reissner-Nordstr\"{o}m counterpart (see equation~\eqref{segunda}): when multiplied by $u$, $\nu$ behaves essentially like $\lambda$.

\subsection{First estimates}\label{medium-1}
By reducing $U>0$, if necessary, we can assume  $U\leq r_+-\ckrp$.
We turn our attention to the region $J^-(\cg_{\ckrm})\cap J^+(\cg_{\ckrp})$.
Substituting~\eqref{field_6} in~\eqref{field_7} with $\ckr=\ckrp$, we get
\begin{eqnarray} 
\frac\zeta\nu(u,v)&=&
\frac\zeta\nu(u,\vgrp(u))e^{-\int_{\vgrp(u)}^{v}\bigl[\frac{\lambda}{1-\mu}\partial_r(1-\mu)\bigr](u,\tilde v)\,d\tilde v}\label{double}\\
&&+\int_{\vgrp(u)}^{v}\frac\theta\lambda(\ugrp(\tilde v),\tilde v)e^{-\int_{\ugrp(\tilde v)}^{u}
\bigl[\frac{\nu}{1-\mu}\partial_r(1-\mu)\bigr](\tilde u,\tilde v)\,d\tilde u}\times\nonumber\\
&&\qquad\qquad\times
\Bigl[\frac{(-\lambda)}{r}\Bigr](u,\tilde v)e^{-\int_{\tilde v}^{v}\bigl[\frac{\lambda}{1-\mu}\partial_r(1-\mu)\bigr](u,\bar{v})\,d\bar{v}}\,d\tilde v\nonumber\\
&&+\int_{\vgrp(u)}^{v}\left(\int_{\ugrp(\tilde v)}^{u}\Bigl[\frac\zeta\nu\frac{(-\nu)}r\Bigr](\tilde u,\tilde v)
e^{-\int_{\tilde u}^{u}\bigl[\frac{\nu}{1-\mu}\partial_r(1-\mu)\bigr](\bar{u},\tilde v)\,d\bar{u}}\,d\tilde u\right)\times\nonumber\\
&&\qquad\qquad\times\Bigl[\frac{(-\lambda)}{r}\Bigr](u,\tilde v)
e^{-\int_{\tilde v}^{v}\bigl[\frac{\lambda}{1-\mu}\partial_r(1-\mu)\bigr](u,\bar{v})\,d\bar{v}}\,d\tilde v.\nonumber
\end{eqnarray}
We make the change of coordinates~\eqref{coordinates}.
Then, \eqref{double} may be written
\begin{eqnarray} 
\hatz(r,v)&=&
\hatz(\ckrp,\vgrp(\ug(v)))e^{-\int_{\vgrp(\ug(v))}^{v}\bigl[\frac{\lambda}{1-\mu}\partial_r(1-\mu)\bigr](\ug(v),\tilde v)\,d\tilde v}\label{triple}\\
&&+\int_{\vgrp(\ug(v))}^{v}\widehat{\frac\theta\lambda}(\ckrp,\tilde v)
e^{\int_{r(\ug(v),\tilde v)}^{\ckrp}\bigl[\frac{1}{\widehat{1-\mu}}\widehat{\partial_r(1-\mu)}\bigr](\tilde{s},\tilde v)\,d\tilde{s}}\times\nonumber\\
&&\qquad\qquad\times
\Bigl[\frac{(-\lambda)(\ug(v),\tilde v)}{r(\ug(v),\tilde v)}\Bigr]e^{-\int_{\tilde v}^{v}\bigl[\frac{\lambda}{1-\mu}\partial_r(1-\mu)\bigr](\ug(v),\bar{v})\,d\bar{v}}\,d\tilde v\nonumber\\
&&+\int_{\vgrp(\ug(v))}^{v}\left(\int_{r(\ug(v),\tilde v)}^{\ckrp}\Bigl[\hatz\frac{1}{\tilde s}\Bigr](\tilde s,\tilde v)
e^{\int_{r(\ug(v),\tilde v)}^{\tilde s}\bigl[\frac{1}{\widehat{1-\mu}}\widehat{\partial_r(1-\mu)}\bigr](\bar{s},\tilde v)\,d\bar{s}}\,d\tilde s\right)\times\nonumber\\
&&\qquad\qquad\times
\Bigl[\frac{(-\lambda)(\ug(v),\tilde v)}{r(\ug(v),\tilde v)}\Bigr]
e^{-\int_{\tilde v}^{v}\bigl[\frac{\lambda}{1-\mu}\partial_r(1-\mu)\bigr](\ug(v),\bar{v})\,d\bar{v}}\,d\tilde v.\nonumber
\end{eqnarray}
For $(r,v)$ such that $(\ug(v),v)\in J^-(\cg_{\ckrm})\cap J^+(\cg_{\ckrp})$, $\vgrp(\ug(v))$ is well defined because $U\leq r_+-\ckrp$.

\begin{center}
\begin{turn}{45}
\begin{psfrags}
\psfrag{a}{{\tiny $\tilde v$}}
\psfrag{y}{{\tiny $(\ug(v),\tilde v)$}}
\psfrag{d}{{\tiny $\vgrp(\ug(v))$}}
\psfrag{u}{{\tiny $v$}}
\psfrag{z}{{\tiny \!\!\!\!$r_+-s_1$}}
\psfrag{x}{{\tiny $\ug(v)$}}
\psfrag{f}{{\tiny $U=r_+-\ckrp$}}
\psfrag{g}{{\tiny \!\!$\cg_{\ckr}$}}
\psfrag{j}{{\tiny \!\!$\cg_{s}$}}
\psfrag{h}{{\tiny \!\!$\cg_{\ckrp}$}}
\psfrag{v}{{\tiny $u$}}
\psfrag{b}{{\tiny $v$}}
\psfrag{c}{{\tiny $0$}}
\psfrag{p}{{\tiny $(\ug(v),v)$}}
\includegraphics[scale=.8]{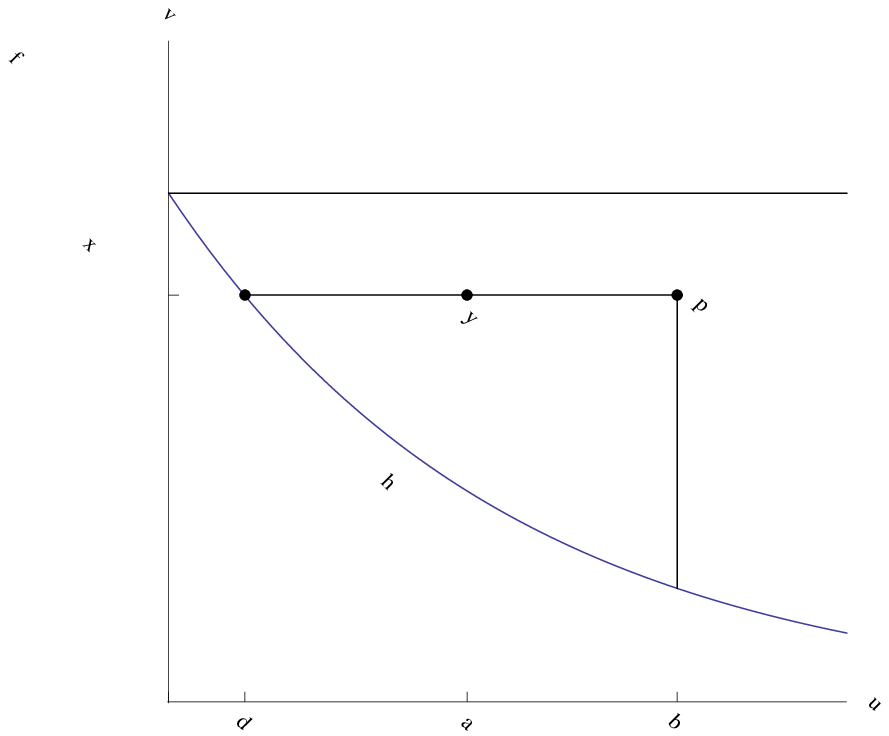}
\end{psfrags}
\end{turn}
\end{center}

\begin{Lem}\label{sto}
Let $\ckrm\leq r\leq\ckrp$. Then 
 \begin{equation}\label{z-0}
\Bigl|\hatz\Bigr|(r,v)\leq \tilde C \max_{u\in[0,\ug(v)]}|\zeta_0|(u),
\end{equation}
 \begin{equation}\label{t-0-a}
\Bigl|\widehat{\frac\theta\lambda}\Bigr|(r,v)\leq C \max_{u\in[0,\ug(v)]}|\zeta_0|(u).
\end{equation}
\end{Lem}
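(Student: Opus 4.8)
The plan is to repeat the Gronwall scheme of Lemma~\ref{run} and Proposition~\ref{psd} one more time, but now feeding in as ``boundary data'' the bounds \eqref{z_0_bis} and \eqref{t_0} already available on $\cg_{\ckrp}$, rather than the initial data on $\{v=0\}$. Thus I would estimate $\hatz$ from the representation~\eqref{triple} (which already starts the integration at $\cg_{\ckrp}$), and estimate $\widehat{\frac\theta\lambda}$ from the analogue of~\eqref{t_z} obtained by integrating~\eqref{field_6} from $\ugrp(v)$ rather than from $0$, namely
\[
\widehat{\frac\theta\lambda}(r,v)=\widehat{\frac\theta\lambda}(\ckrp,v)\,e^{\int_r^{\ckrp}\bigl[\frac{1}{\widehat{1-\mu}}\widehat{\rmu}\bigr](\bar s,v)\,d\bar s}+\int_r^{\ckrp}\frac1{\tilde s}\,\hatz(\tilde s,v)\,e^{\int_r^{\tilde s}\bigl[\frac{1}{\widehat{1-\mu}}\widehat{\rmu}\bigr](\bar s,v)\,d\bar s}\,d\tilde s.
\]
Since \eqref{z-0} is needed to prove \eqref{t-0-a}, I would establish the bound for $\hatz$ first, by Gronwall, and then simply substitute it into the displayed identity to obtain \eqref{t-0-a} with no further fixed-point argument.

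The decisive step, and the main obstacle, is to bound the exponentials in~\eqref{triple} uniformly in $v$ by a constant depending only on the fixed radii $\ckrm,\ckrp$. On $\ckrm\le r\le\ckrp$, which contains $r_0$, the factor $\rmu$ changes sign, so the crude bound by $1$ used in Lemma~\ref{lso} fails in the blue-shift part $r<r_0$; moreover the $v$-integrals run over intervals whose length grows with $v$. The remedy is the change of variables $d\tilde v=dr/\lambda$ along lines of constant $u$ (from~\eqref{r_v}), which together with $\kappa=\lambda/(1-\mu)$ from~\eqref{kappa_0} turns the $v$-integral in the exponent into a bounded radius integral,
\[
-\int_{\tilde v}^{v}\Bigl[\tfrac{\lambda}{1-\mu}\rmu\Bigr](\ug(v),\bar v)\,d\bar v=\int_{r}^{r(\ug(v),\tilde v)}\frac{\rmu}{1-\mu}\,dr',
\]
with both limits in $[\ckrm,\ckrp]$. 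It then suffices to bound the integrand $\frac{\rmu}{1-\mu}$ pointwise. Exactly as in Lemma~\ref{lso}, \eqref{eta} and $\eta\le\eta_0$ give $\rmu(r',\varpi)\ge\rmu(r',\varpi_0)$, while $\varpi\ge\varpi_0$ (Lemma~\ref{sign}) together with $\partial_\varpi(1-\mu)=-2/r'<0$ gives $0>(1-\mu)(r',\varpi_0)\ge(1-\mu)(r',\varpi)$; combining these one checks that
\[
\frac{\rmu}{1-\mu}(r',\varpi)\le\max\Bigl\{0,\ \max_{s\in[\ckrm,\ckrp]}\frac{\rmu(s,\varpi_0)}{(1-\mu)(s,\varpi_0)}\Bigr\}=:K<\infty,
\]
a constant determined by the background Reissner--Nordström solution alone, since $(1-\mu)(\cdot,\varpi_0)$ is bounded away from zero on the compact $[\ckrm,\ckrp]\subset\,]r_-,r_+[$. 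The point I would stress is that this upper bound is \emph{independent of $\varpi$}: the growth of $|1-\mu|$ as $\varpi$ increases only helps, so no upper bound on $\varpi$ (which is only obtained afterwards, in~\eqref{mass_2_bis}) is needed here. Hence every exponential in~\eqref{triple} and in the identity above is bounded by $e^{K(\ckrp-\ckrm)}$, and by the same change of variables the outer measure $\frac{(-\lambda)}{r}\,d\tilde v$ becomes $dr'/r'$ over the bounded interval $[r,\ckrp]$.

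With the exponentials controlled, I would close the estimate for $\hatz$ by the Gronwall argument of Lemma~\ref{run}. Bounding the boundary terms $\hatz(\ckrp,\vgrp(\ug(v)))$ and $\widehat{\frac\theta\lambda}(\ckrp,\cdot)$ via \eqref{z_0_bis} and \eqref{t_0} (using $e^{-\alpha v}\le1$) by $\text{const}\cdot\max_{u\in[0,\ug(v)]}|\zeta_0|(u)$ — the causal past intersected with $\cg_{\ckrp}$ projects into $[0,\ug(v)]$, so the maxima match — the first two terms of~\eqref{triple} are immediately of the required form. The third, double-integral term still contains $\hatz$, so I would introduce the max-function ${\cal Z}_{(r,v)}(s)$ over the projection of $J^-(\ug(v),v)\cap\cg_s$ onto the $v$-axis, as in Lemma~\ref{run} but now with $\alpha=0$, derive an integral inequality in the radius variable $s$ over $[r,\ckrp]$, and apply Gronwall to obtain~\eqref{z-0}. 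Finally, substituting~\eqref{z-0} and the boundary bound~\eqref{t_0} into the displayed representation for $\widehat{\frac\theta\lambda}$ and using the exponential bounds yields~\eqref{t-0-a}. The only genuinely delicate ingredient throughout is the uniform-in-$v$ exponential estimate across the blue-shift part $r<r_0$; once that is secured by the radius change of variables and the $\varpi$-independent bound on $\frac{\rmu}{1-\mu}$, the remainder is a routine repetition of the argument of Section~\ref{red-shift}.
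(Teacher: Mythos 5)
Your proposal is correct and follows essentially the same route as the paper's proof: the same representation \eqref{triple} and the $\cg_{\ckrp}$-based version of \eqref{field_6} for $\widehat{\frac\theta\lambda}$, the same conversion of the exponents into radius integrals over $[\ckrm,\ckrp]$ bounded via $\eta\le\eta_0$ and $\varpi\ge\varpi_0$, the same Gronwall scheme with the max-function ${\cal Z}_{(r,v)}(s)$, and the same final substitution to obtain \eqref{t-0-a}. The only (immaterial) difference is your choice of the constant dominating $\frac{\rmu}{1-\mu}$ — a maximum of the background quantity over $[\ckrm,\ckrp]$ together with $\max\{0,\cdot\}$, which bypasses the normalization \eqref{assume} — whereas the paper uses \eqref{assume} and \eqref{umm} to pin the constant down at $\ckrm$ as $c_{\ckrm}$ in \eqref{c}.
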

\begin{proof}
From~\eqref{assume} we have
\begin{eqnarray}
 (1-\mu)(r,\varpi)&\leq& (1-\mu)(r,\varpi_0)\nonumber\\ 
 &\leq& \max\left\{(1-\mu)(\ckrm,\varpi_0),(1-\mu)(\ckrp,\varpi_0)\right\}\label{umm}\\
 &=&(1-\mu)(\ckrm,\varpi_0)\nonumber
\end{eqnarray}
and
\begin{eqnarray*}
\frac{\partial_r(1-\mu)}{1-\mu}&=&\frac{2\eta/r^2}{-(1-\mu)}\ \leq\ \frac{2\eta_0/r^2}{-(1-\mu)}\ \leq\  \frac{2\eta_0(\ckrm)/\ckrm^2}{-(1-\mu)}\\
&\leq& \frac{2\eta_0(\ckrm)/\ckrm^2}{-(1-\mu)(r,\varpi_0)}\ \leq\  \frac{2\eta_0(\ckrm)/\ckrm^2}{-(1-\mu)(\ckrm,\varpi_0)}\ =:\ c_{\ckrm}.
\end{eqnarray*}
(For the second inequality, see the graph of $\eta_0$ in Section~\ref{cans}.)
Each of the five exponentials in~\eqref{triple} is bounded by
\begin{equation}\label{c}
e^{c_{\ckrm}(\ckrp-\ckrm)}=:C.
\end{equation}
Hence, for $(r,v)$ such that $(\ug(v),v)\in J^-(\cg_{\ckrm})\cap J^+(\cg_{\ckrp})$, we have from~\eqref{triple}
\begin{eqnarray}
\Bigl|\hatz\Bigr|(r,v)&\leq&C\Bigl|\hatz\Bigr|(\ckrp,\vgrp(\ug(v)))\label{four}\\
&&+C^2\int_{\vgrp(\ug(v))}^{v}\Bigl|\widehat{\frac\theta\lambda}\Bigr|(\ckrp,\tilde v)
\Bigl[\frac{(-\lambda)(\ug(v),\tilde v)}{r(\ug(v),\tilde v)}\Bigr]\,d\tilde v\nonumber\\
&&+C^2\int_{\vgrp(\ug(v))}^{v}\int_{r(\ug(v),\tilde v)}^{\ckrp}\Bigl|\hatz\Bigr|(\tilde s,\tilde v)
\,d\tilde s
\Bigl[\frac{(-\lambda)(\ug(v),\tilde v)}{[r(\ug(v),\tilde v)]^2}\Bigr]
\,d\tilde v.\nonumber
\end{eqnarray}
For $r\leq s\leq \ckrp$, define
\begin{equation}\label{z}
 {\cal Z}_{(r,v)}(s)=\max_{\tilde v\in[\vgs(\ug(v)),v]}\Bigl|\hatz\Bigr|(s,\tilde v)
\end{equation}
and
\begin{equation}\label{t}
{\cal T}_{(r,v)}(\ckrp)=\max_{\tilde v\in[\vgrp(\ug(v)),v]}\Bigl|\widehat{\frac\theta\lambda}\Bigr|(\ckrp,\tilde v).
\end{equation}

\begin{center}
\begin{turn}{45}
\begin{psfrags}
\psfrag{a}{{\tiny $\vgs(\ug(v))$}}
\psfrag{y}{{\tiny $(\ug(v),\vgs(\ug(v)))$}}
\psfrag{u}{{\tiny $v$}}
\psfrag{z}{{\tiny \!\!\!\!$r_+-s_1$}}
\psfrag{x}{{\tiny $\ug(v)$}}
\psfrag{f}{{\tiny $U=r_+-\ckrp$}}
\psfrag{g}{{\tiny \!\!$\cg_{\ckr}$}}
\psfrag{j}{{\tiny \!\!$\cg_{s}$}}
\psfrag{h}{{\tiny \!\!$\cg_{\ckrp}$}}
\psfrag{v}{{\tiny $u$}}
\psfrag{b}{{\tiny $v$}}
\psfrag{c}{{\tiny $0$}}
\psfrag{p}{{\tiny $(\ug(v),v)$}}
\includegraphics[scale=.8]{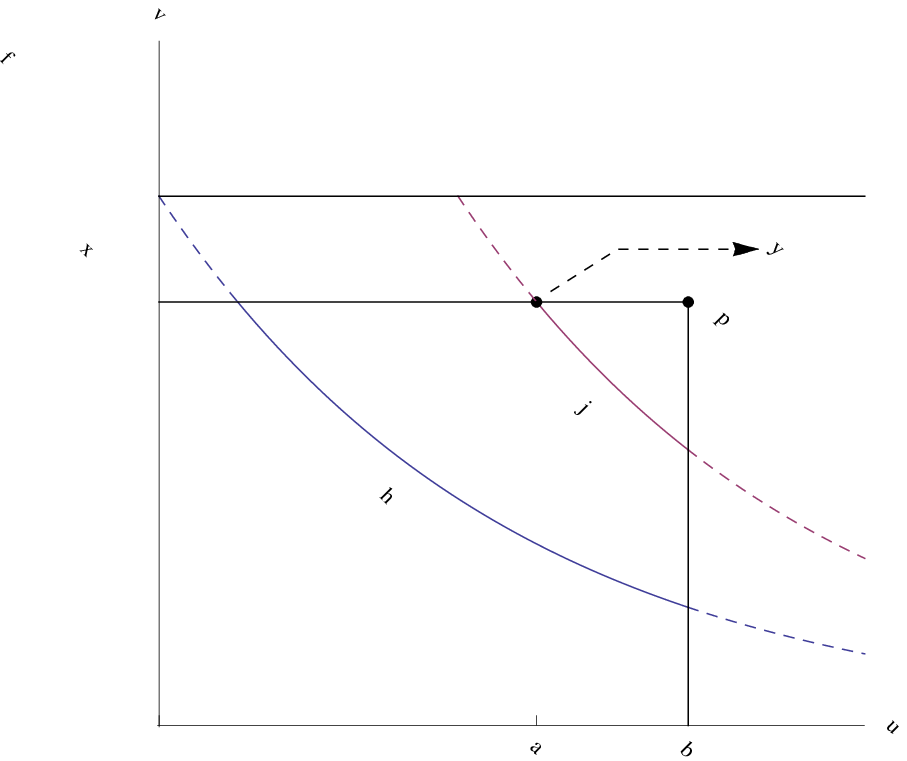}
\end{psfrags}
\end{turn}
\end{center}
\noindent Recall that $[\vgs(\ug(v)),v]$ is the projection of $J^-(\ug(v),v)\cap\cg_s$ on the $v$-axis.

Note that ${\cal Z}_{(r,v)}(r)=\bigl|\hatz\bigr|(r,v)$. Inequality~\eqref{four} implies
\begin{eqnarray}
{\cal Z}_{(r,v)}(r)&\leq& C{\cal Z}_{(r,v)}(\ckrp)\nonumber\\
&&+C^2\int_{\vgrp(\ug(v))}^{v}{\cal T}_{(r,v)}(\ckrp)
\Bigl[\frac{(-\lambda)(\ug(v),\tilde v)}{r(\ug(v),\tilde v)}\Bigr]\,d\tilde v\nonumber\\
&&+C^2\int_{\vgrp(\ug(v))}^{v}\int_{r(\ug(v),\tilde v)}^{\ckrp}{\cal Z}_{(r,v)}(\tilde s)
\,d\tilde s
\Bigl[\frac{(-\lambda)(\ug(v),\tilde v)}{[r(\ug(v),\tilde v)]^2}\Bigr]
\,d\tilde v\nonumber\\
&\leq& C{\cal Z}_{(r,v)}(\ckrp)
+C^2
\ln\Bigl(\frac{\ckrp}{r}\Bigr)
{\cal T}_{(r,v)}(\ckrp)
+C^2
\Bigl(\frac 1r-\frac 1\ckrp\Bigr)
\int_{r}^{\ckrp}{\cal Z}_{(r,v)}(\tilde s)
\,d\tilde s.\nonumber
\end{eqnarray}
Again consider $r\leq s\leq\ckrp$ and let $\tilde v\in [\vgs(\ug(v)),v]$, so that $(\ugs(\tilde v),\tilde v)\in J^-(\ug(v),v)\cap J^+(\cg_{\ckrp})$.
In the same way one can show that
$$
\Bigl|\hatz\Bigr|(s,\tilde v)\leq C{\cal Z}_{(r,v)}(\ckrp)
+C^2
\ln\Bigl(\frac{\ckrp}{s}\Bigr)
{\cal T}_{(r,v)}(\ckrp)
+C^2
\Bigl(\frac 1s-\frac 1\ckrp\Bigr)
\int_{s}^{\ckrp}{\cal Z}_{(r,v)}(\tilde s)
\,d\tilde s,
$$
because $J^-(\ugs(\tilde v),\tilde v)\cap \cg_{\ckrp}\subset J^-(\ug(v),v)\cap \cg_{\ckrp}$.
Therefore,
$$
{\cal Z}_{(r,v)}(s)\leq C{\cal Z}_{(r,v)}(\ckrp)
+C^2
\ln\Bigl(\frac{\ckrp}{r}\Bigr)
{\cal T}_{(r,v)}(\ckrp)
+C^2
\Bigl(\frac 1r-\frac 1\ckrp\Bigr)
\int_{s}^{\ckrp}{\cal Z}_{(r,v)}(\tilde s)
\,d\tilde s.
$$
Using Gronwall's inequality, we get
\begin{equation}\label{zf}
{\cal Z}_{(r,v)}(r)\leq C\Bigl[{\cal Z}_{(r,v)}(\ckrp)+C\ln\Bigl(\frac{\ckrp}{r}\Bigr){\cal T}_{(r,v)}(\ckrp)\Bigr]e^{\frac{C^2(\ckrp-r)^2}{r\ckrp}}.
\end{equation}

To bound ${\cal Z}_{(r,v)}$ and ${\cal T}_{(r,v)}$, it is convenient at this point to use~\eqref{z_00} and
\begin{eqnarray}
\Bigl|\widehat{\frac\theta\lambda}\Bigr|(r,v)&\leq 
&\hat C_r \max_{u\in[0,\ug(v)]}|\zeta_0|(u)\label{t_00}
\end{eqnarray}
(valid for $\ckrp \leq r < r_+$), in spite of having the better estimates~\eqref{z_0_bis} and~\eqref{t_0}.
Indeed, if these better estimates are used, the improvement is just $e^{-\alpha \vgrp(\ug(v))}$ (that is, an exponential factor computed over $\cg_{\ckrp}$ for the same value of $u$); to turn this into an exponential decay in $v$ we must first obtain a more accurate control of the various quantities in the region $\ckrm\leq r\leq\ckrp$.
Applying first the definition~\eqref{z} and then~\eqref{z_00}, we have
\begin{eqnarray}
{\cal Z}_{(r,v)}(\ckrp)&=&\max_{\tilde v\in[\vgrp(\ug(v)),v]}\Bigl|\hatz\Bigr|(\ckrp,\tilde v)\nonumber\\
&\leq&e^{\frac{(r_+-\ckrp)^2}{\ckrp r_+}}\max_{\tilde v\in[\vgrp(\ug(v)),v]}\ \ \max_{u\in[0,\ugrp(\tilde v)]}|\zeta_0|(u)\nonumber\\
&\leq&e^{\frac{(r_+-\ckrp)^2}{\ckrp r_+}}\max_{u\in[0,\ug(v)]}|\zeta_0|(u),\label{part-1a}
\end{eqnarray}
because $\ugrp(\tilde v)\leq \ug(v)$.
 Applying first the definition~\eqref{t} and then~\eqref{t_00}, we have
\begin{eqnarray}
{\cal T}_{(r,v)}(\ckrp)&\leq&e^{\frac{(r_+-\ckrp)^2}{\ckrp r_+}}\ln\Bigl(\frac{r_+}{\ckrp}\Bigr)\max_{\tilde v\in[\vgrp(\ug(v)),v]}\ \
\max_{u\in[0,\ugrp(\tilde v)]}|\zeta_0|(u)\nonumber\\
&\leq&\hat C_{\ckrp}\max_{u\in[0,\ug(v)]}|\zeta_0|(u).\label{part-2a}
\end{eqnarray}
We use~\eqref{part-1a} and~\eqref{part-2a} in~\eqref{zf}. This yields~\eqref{z-0}.

Finally, writing~\eqref{field_6} in the $(r,v)$ coordinates (with $\ckr = \ckrp$) gives
\begin{eqnarray*}
\widehat{\frac\theta\lambda}(r,v)&=&\widehat{\frac\theta\lambda}(\ckrp,v)
e^{\int_r^{\ckrp}\bigl[\frac{1}{\widehat{1-\mu}}\widehat{\partial_r(1-\mu)}\bigr](\tilde{s},v)\,d\tilde{s}}\\
&&+\int_{r}^{\ckrp}\Bigl[\hatz\frac 1 {\tilde s}\Bigr](\tilde s,v)
e^{\int_r^{\tilde s}\bigl[\frac{1}{\widehat{1-\mu}}\widehat{\partial_r(1-\mu)}\bigr](\bar{s},v)\,d\bar{s}}\,d\tilde s.
\end{eqnarray*}
The exponentials are bounded by the constant $C$ in~\eqref{c}. We use the estimates~\eqref{t_00} and~\eqref{z-0} to obtain
\begin{eqnarray*}
\Bigl|\widehat{\frac\theta\lambda}\Bigr|(r,v)&\leq&C\Bigl|\widehat{\frac\theta\lambda}\Bigr|(\ckrp,v)
+C\int_{r}^{\ckrp}\Bigl[\Bigl|\hatz\Bigr|\frac 1 {\tilde s}\Bigr](\tilde s,v)\,d\tilde s\\
&\leq&C\hat C_{\ckrp}\max_{u\in[0,u_{\ckrp}(v)]}|\zeta_0|(u)\\ &&+C
\tilde C\ln\Bigl(\frac{\ckrp}{r}\Bigr) \max_{u\in[0,\ug(v)]}|\zeta_0|(u)\\
&=&C\max_{u\in[0,\ug(v)]}|\zeta_0|(u),
\end{eqnarray*}
which is~\eqref{t-0-a}. 
\end{proof}

According to~\eqref{z_00} and~\eqref{z-0}, the function $\frac\zeta\nu$ is bounded in the region $J^-(\cg_{\ckrm})$, let us say by $\hat\delta$.
Arguing as in the deduction of~\eqref{k_min}, we obtain
\begin{equation}
\kappa(u,v)
\geq\left(\frac{\ckrm}{r_+}\right)^{\hat\delta^2}.\label{k_min_2}
\end{equation}

\begin{Lem}
For $(u,v)\in J^-(\cg_{\ckrm})$, and $U \leq r_+ - \ckrp$ sufficiently small, we have
\begin{equation}\label{mass_2_bis}
\varpi_0\leq\varpi(u,v)\leq \varpi_0+
C\left(\sup_{\tilde u\in[0,u]}|\zeta_0|(\tilde u)\right)^2.
\end{equation}
The curve $\cg_{\ckrm}$ intersects every line of constant $u$. Therefore, $\ugr(\infty)=0$.
\end{Lem}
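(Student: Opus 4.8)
The plan is to obtain the two-sided mass bound exactly as in Proposition~\ref{psd}, and then to prove the intersection statement by an argument that avoids the $v$-monotonicity of $\nu$ used there, which is no longer available in the present region.

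For the lower bound in~\eqref{mass_2_bis} I would simply invoke Lemma~\ref{sign}: since $\varpi$ is nondecreasing in $u$ and $\varpi(0,v)=\varpi_0$, one has $\varpi(u,v)\geq\varpi_0$ throughout $\mysigma$. For the upper bound I would repeat the computation that proves~\eqref{mass_2}, with $\ckrp$ replaced by $\ckrm$: starting from the integrated equation~\eqref{omega_final}, I bound $\bigl|\frac\zeta\nu\bigr|$ by $C\max_{[0,u]}|\zeta_0|$ using~\eqref{z_00} and~\eqref{z-0}, estimate $-\int_s^u\left(\frac{\zeta^2}{r\nu}\right)(\tilde u,v)\,d\tilde u\leq C^2(\sup|\zeta_0|)^2\ln\frac{r_+}{\ckrm}$ through the identity $\int_s^u\left(\frac{-\nu}{r}\right)(\tilde u,v)\,d\tilde u=\ln\frac{r(s,v)}{r(u,v)}\leq\ln\frac{r_+}{\ckrm}$, bound $\bigl|1+\frac{e^2}{r^2}-\frac\Lambda3 r^2\bigr|$ on $[\ckrm,r_+]$, and use $\int_0^u(-\nu)=r_+-r(u,v)\leq r_+-\ckrm$; the role of~\eqref{k_min_2} is precisely to keep the exponentials in~\eqref{omega_final} uniformly bounded. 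Since $\zeta_0$ is continuous with $\zeta_0(0)=0$ and $U$ is small, $e^x\leq1+2x$ then turns $\varpi_0 e^{C(\sup|\zeta_0|)^2}+C(\sup|\zeta_0|)^2$ into~\eqref{mass_2_bis}.

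The substantive point is the intersection claim. Note first that, since $U\leq r_+-\ckrp$, we have $\min\{r_+-\ckrm,U\}=U$, so it suffices to meet every line of constant $u$ with $u\in\,]0,U]$. Fix such a $u$. Its line starts at $r(u,0)=r_+-u\geq\ckrp$ and, by Proposition~\ref{psd}, crosses $\cg_{\ckrp}$ at a finite $v=\vgrp(u)$; I must show that $r(u,\cdot)$ continues down to $\ckrm$ at a finite value of $v$. The route used for $\cg_{\ckrp}$ --- deduce from~\eqref{nu_v} that $v\mapsto\nu(u,v)$ decreases because $\partial_r(1-\mu)\geq0$, then use~\eqref{Ray} to make $1-\mu$ decrease --- is blocked here, because on the sub-region $\ckrm\leq r<r_0$ one has $\eta_0>0$ and hence no control on the sign of $\partial_r(1-\mu)=-2\eta/r^2$. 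This is the main obstacle.

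I would get around it by bounding $1-\mu$ away from zero pointwise instead of by monotonicity. Combining $\varpi\geq\varpi_0$ with the estimate~\eqref{umm} already obtained in the proof of Lemma~\ref{sto} (which rests on the normalization~\eqref{assume} and the shape of $r\mapsto(1-\mu)(r,\varpi_0)$ on $[\ckrm,\ckrp]$) gives the uniform bound $(1-\mu)(u,v)\leq(1-\mu)(\ckrm,\varpi_0)<0$ wherever $\ckrm\leq r\leq\ckrp$. Together with~\eqref{k_min_2} and $\lambda=\kappa(1-\mu)$ from~\eqref{kappa_at_u}, this yields $\lambda(u,v)\leq\left(\frac{\ckrm}{r_+}\right)^{\hat\delta^2}(1-\mu)(\ckrm,\varpi_0)=:-c<0$ uniformly along the portion of the line with $\ckrm\leq r\leq\ckrp$. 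Integrating $\lambda$ from $\vgrp(u)$, if $r$ never reached $\ckrm$ we would have $r(u,v)\leq\ckrp-c\,(v-\vgrp(u))\to-\infty$, which is absurd; the only alternative, that the line terminates at a finite $v$ while $r$ stays $\geq\ckrm>0$, is excluded by the breakdown criterion of Theorem~\ref{bdown}. Hence $\cg_{\ckrm}$ meets every line of constant $u$ in $\,]0,U]$, and consequently $\ugr(\infty)=0$.
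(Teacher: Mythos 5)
Your proposal is correct and follows essentially the same route as the paper: the mass bound is obtained by repeating the proof of~\eqref{mass_2} verbatim with $\ckrm$ in place of $\ckrp$, and the intersection claim is proved by replacing the $v$-monotonicity argument of Proposition~\ref{psd} with the pointwise bound $(1-\mu)\leq(1-\mu)(\ckrm,\varpi_0)<0$ from~\eqref{umm}, combined with~\eqref{k_min_2}, to get $\lambda\leq -c_\lambda<0$ and hence $r\to-\infty$ if the line never reached $\cg_{\ckrm}$. Your explicit appeal to the breakdown criterion to rule out premature termination of the line is a detail the paper leaves implicit, but the argument is the same.
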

\begin{proof}
The proof of~\eqref{mass_2_bis} is identical to the proof of~\eqref{mass_2}.

Because $\varpi$ is bounded, the function $1-\mu$ is bounded below in $J^-(\cg_{\ckrm})$.
Also, by~\eqref{umm}, the function $1-\mu$ is bounded above in $J^-(\cg_{\ckrm})\cap J^+(\cg_{\ckrp})$ by $(1-\mu)(\ckrm,\varpi_0)$.

 We claim that for each $0<u\leq U$
\begin{equation}\label{stop2}
\sup\left\{v\in\left[0,\infty\right[:(u,v)\in J^-(\cg_{\ckrm})\right\}<\infty.
\end{equation}
The proof is similar to the proof of~\eqref{stop}: since $\kappa$ is bounded below by a positive constant 
and $1-\mu$ is bounded above by a negative constant, $\lambda$ is bounded above by a negative constant in $J^-(\cg_{\ckrm})\cap J^+(\cg_{\ckrp})$, say $-c_\lambda$.
Then, as long as $(u,v)$ belongs to $J^-(\cg_{\ckrm})$, we have the upper bound for $r(u,v)$ given by
$$
r(u,v)\leq r_+-u-c_\lambda v,
$$
since $0<u\leq U\leq r_+-\ckrp$). Finally, if \eqref{stop2} did not hold for a given $u$, we would have $r(u,v)\to -\infty$
as $v \to \infty$, which is a contradiction. This proves the claim.
\end{proof}

\subsection{Estimates for $\nu$, $\lambda$ and the region $J^-(\cg_{\ckrm})\cap J^+(\cg_{\ckrp})$}\label{nu-lambda}

\begin{Lem}
In the region $J^-(\cg_{\ckrm})\cap J^+(\cg_{\ckrp})$,
we have the following estimates from above and from below on $\lambda$ and $\nu$:
\begin{equation}\label{similar1}
-\tilde C\leq\lambda\leq -\tilde c
\end{equation}
and
\begin{equation}\label{similar2}
-\frac{\tilde C}{u}\leq\nu\leq-\frac{\tilde c}{u},
\end{equation}
where the constants $\tilde c$ and $\tilde C$ depend on $\ckrp$ and $\ckrm$.

Furthermore, if $0 < \delta < r_+-r_0$ and $(u,v)\in\cg_{r_+ - \delta}$ then
\begin{equation}\label{bound_lambda}
-C\,\partial_r(1-\mu)(r_+,\varpi_0)\,\delta\leq\lambda(u,v)\leq -c\,\partial_r(1-\mu)(r_+,\varpi_0)\,\delta
\end{equation}
and
\begin{equation}\label{bound_nu}
-C\frac{\delta}{u}\leq\nu(u,v)\leq-c\frac{\delta}{u},
\end{equation}
where the constants $0<c<1<C$ may be chosen independently of $\delta$. 
Given $\eps>0$ then $1-\eps<c<1$ and $1<C<1+\eps$ for small enough $\delta$.
\end{Lem}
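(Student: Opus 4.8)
The plan is to obtain the $\lambda$ bounds purely algebraically from the constraint $\lambda=\kappa(1-\mu)$ in~\eqref{kappa_at_u}, and the $\nu$ bounds by exploiting the Raychaudhuri equation~\eqref{Ray} for the quantity $\frac{1-\mu}{\nu}$, whose Reissner--Nordstr\"om value~\eqref{segunda} is constant in $v$. I would establish~\eqref{similar1} first and then feed it into the argument for~\eqref{similar2}.

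For~\eqref{similar1}, writing $\lambda=\kappa(1-\mu)$ and combining three facts valid throughout $J^-(\cg_{\ckrm})\cap J^+(\cg_{\ckrp})$ suffices: the lower bound $\kappa\ge(\ckrm/r_+)^{\hat\delta^2}$ from~\eqref{k_min_2} together with $\kappa\le1$; the upper bound $(1-\mu)\le(1-\mu)(\ckrm,\varpi_0)<0$ from~\eqref{umm}; and a lower bound on $1-\mu$ coming from the mass bound~\eqref{mass_2_bis} (since $\varpi$ bounded above forces $1-\mu$ bounded below). Thus $|1-\mu|$ is pinched between two positive constants depending on $\ckrm,\ckrp$, and $\lambda=-\kappa|1-\mu|$ is pinched between $-\tilde C$ and $-\tilde c$.

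For~\eqref{similar2}, set $q:=\frac{1-\mu}{\nu}>0$. By~\eqref{Ray}, $\partial_v q=-\frac{\theta^2}{\nu r\kappa}\ge0$, so $q$ is nondecreasing in $v$; dividing by $q$ and using $\lambda=\kappa(1-\mu)$ gives the clean logarithmic form $\partial_v\ln q=\frac{\theta^2}{|1-\mu|r\kappa}=\frac{(\theta/\lambda)^2\kappa|1-\mu|}{r}$, whence $q(u,v)=|1-\mu|(u,0)\exp\bigl(\int_0^v\frac{(\theta/\lambda)^2\kappa|1-\mu|}{r}(u,\tilde v)\,d\tilde v\bigr)$, where I used $q(u,0)=-(1-\mu)(u,0)=|1-\mu|(u,0)$. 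This initial value is comparable to $u$ on $(0,U]$, since $-(1-\mu)(r_+-u,\varpi_0)/u\to\partial_r(1-\mu)(r_+,\varpi_0)>0$ as $u\to0$ and $\varpi(u,0)\ge\varpi_0$ only decreases it. To bound the exponent I would split the vertical segment at $\cg_{\ckrp}$: on its portion in $J^-(\cg_{\ckrp})$ the integrand is controlled by $(\theta/\lambda)^2\lesssim e^{-2\alpha\tilde v}$ via~\eqref{t_0}, hence integrable, while on its portion in the intermediate region $(\theta/\lambda)^2$ is bounded by~\eqref{t-0-a} and the $v$-length is at most $(\ckrp-\ckrm)/\tilde c$ because $\lambda\le-\tilde c$ there by~\eqref{similar1}. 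The exponent is therefore bounded (indeed $O((\max_{[0,u]}|\zeta_0|)^2)$); combined with $|1-\mu|$ pinched between positive constants in the intermediate region, $\nu=-|1-\mu|/q$ inherits $-\tilde C/u\le\nu\le-\tilde c/u$.

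For the sharp estimates~\eqref{bound_lambda}--\eqref{bound_nu} on $\cg_{r_+-\delta}$, write $k:=\partial_r(1-\mu)(r_+,\varpi_0)$ and note $u\le\delta$ there, so $m(\delta):=\max_{[0,\delta]}|\zeta_0|\to0$. The decisive observation is that along the horizontal segment at fixed $v$ from the axis to $\cg_{r_+-\delta}$ one has $\int_0^u(-\nu)\,ds=r(0,v)-r(u,v)=\delta$; feeding this into~\eqref{omega_final} (with $|\frac\zeta\nu|\lesssim m(\delta)$ near the horizon) improves the mass bound to $0\le\varpi-\varpi_0\lesssim m(\delta)^2\delta$, and likewise $\kappa=1-O(m(\delta)^2\delta)$. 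Consequently $(1-\mu)(r_+-\delta,\varpi)=-k\delta\bigl(1+O(\delta)+O(m(\delta)^2)\bigr)$ and $\lambda=\kappa(1-\mu)=-k\delta(1+o(1))$, which gives~\eqref{bound_lambda} with constants tending to $1$. For~\eqref{bound_nu} I use $\nu=(1-\mu)/q$ again: here the whole integration segment lies in $J^-(\cg_{\ckrp})$, so the growth exponent is bounded by $\frac{\delta k}{r_+}\int_0^\infty(\theta/\lambda)^2\lesssim m(\delta)^2\delta\to0$ and $q(u,v)=|1-\mu|(u,0)(1+o(1))=ku(1+o(1))$; thus $\nu=-\frac{k\delta(1+o(1))}{ku(1+o(1))}=-\frac\delta u(1+o(1))$, the factors of $k$ cancelling exactly as in~\eqref{segunda}. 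The main obstacle is precisely this last point: getting constants that converge to $1$ forces me to show that $\varpi-\varpi_0$ and $\kappa-1$ are $O(m(\delta)^2\delta)$ rather than merely $O(m(\delta)^2)$ near the horizon, which hinges on the cancellation $\int_0^u(-\nu)=\delta$ and on the exponential $v$-decay of $\theta/\lambda$ inherited from the red-shift region.
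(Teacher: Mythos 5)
Your proof is correct, and while the first half retraces the paper's steps in different clothing, the second half takes a genuinely different route to the sharp constants. For \eqref{similar1}--\eqref{similar2} your argument is essentially the paper's: writing $\lambda=\kappa(1-\mu)$ and pinching $\kappa$ and $|1-\mu|$ is the same as the paper's integration of $\partial_u\bigl(\frac{\lambda}{1-\mu}\bigr)$, and your evolution of $q=\frac{1-\mu}{\nu}$ via \eqref{Ray} is the reciprocal of the paper's integration of \eqref{ray_v_bis}, with the same initial-value analysis of $(1-\mu)(u,0)\sim -\partial_r(1-\mu)(r_+,\varpi_0)\,u$ (the paper gets this from $C^1$ continuity of $\partial_u(1-\mu)(\cdot,0)$; you get it from the Taylor expansion of $(1-\mu)(r_+-u,\varpi_0)$ plus the mass bound — do note that the \emph{upper} bound $|1-\mu|(u,0)\leq Cu$ needs $\varpi(u,0)-\varpi_0=O(u)$, which your refined mass estimate supplies but your one-line remark about $\varpi(u,0)\geq\varpi_0$ does not). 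The real divergence is in \eqref{bound_lambda}--\eqref{bound_nu}. The paper obtains the lower bound on $1-\mu$ along $\cg_{r_+-\delta}$ by integrating $\partial_u(1-\mu)=\nu\partial_r(1-\mu)-(1-\mu)\frac{\nu}{r}\bigl(\frac{\zeta}{\nu}\bigr)^2$ in $u$ with an integrating factor, and makes the constants tend to $1$ by shrinking $U=U(\delta)$ so that $\tilde\delta(U)$ and $\eps(U,\delta)$ are small. You instead use the pointwise identity $(1-\mu)(r,\varpi)=(1-\mu)(r,\varpi_0)-\frac{2(\varpi-\varpi_0)}{r}$ together with a refined mass estimate $\varpi-\varpi_0=O\bigl(m(\delta)^2\delta\bigr)$, whose extra factor of $\delta$ comes from the cancellation $\int_0^u(-\nu)\,ds=r_+-r\leq\delta$ in \eqref{omega_final}; since $u\leq\delta$ on $\cg_{r_+-\delta}$ forces $m(\delta)=\max_{[0,\delta]}|\zeta_0|\to 0$, your constants converge to $1$ automatically, without re-shrinking $U$ as a function of $\delta$. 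This is a cleaner mechanism for the asymptotics (the paper only remarks $u\leq\delta$ in passing), at the modest cost of needing the sharpened mass bound; both arguments are valid, and yours would need one extra sentence to cover moderate $\delta$ up to $r_+-r_0$, where the ratio $(1-\mu)(r_+-\delta,\varpi_0)/(-\delta)$ is merely bounded between positive constants rather than close to $\partial_r(1-\mu)(r_+,\varpi_0)$ — a point the paper also treats only in a parenthetical.
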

\begin{proof}
From~\eqref{kappa_u} we obtain 
(the Raychaudhuri equation)
$$
\partial_u\left(\frac{\lambda}{1-\mu}\right)=\frac{\lambda}{1-\mu}\left(\frac{\zeta}{\nu}\right)^2\frac{\nu}{r},
$$
and from~\eqref{Ray} we obtain (the Raychaudhuri equation)
\begin{equation}\label{ray_v_bis} 
\partial_v\left(\frac{\nu}{1-\mu}\right)=\frac{\nu}{1-\mu}\left(\frac{\theta}{\lambda}\right)^2\frac{\lambda}{r}.
\end{equation}
Let $\hat\delta>0$. By decreasing $U$, if necessary, using~\eqref{z_0_bis}, \eqref{t_0}, \eqref{z-0} and~\eqref{t-0-a}, we have
$\bigl|\frac{\theta}{\lambda}\bigr|<\hat\delta$ and 
$\bigl|\frac{\zeta}{\nu}\bigr|<\hat\delta$ in $J^-(\cg_{\ckrm})$.
Since 
$\int_0^u\frac\nu r(\tilde u,v)\,d\tilde{u}=\ln\Bigl(\frac{r(u,v)}{r_+}\Bigr)$,
$\int_0^v\frac\lambda r(u,\tilde v)\,d\tilde{v}=\ln\Bigl(\frac{r(u,v)}{r_+-u}\Bigr)$,
$\frac{\ckrm}{r_+}\leq\frac{r(u,v)}{r_+}\leq 1$ and $\frac{\ckrm}{r_+}\leq\frac{r(u,v)}{r_+-u}\leq 1$, 
for $(u,v)\in J^-(\cg_{\ckrm})$ we have
\begin{equation}\label{close_one}
\Bigl(\frac{\ckrm}{r_+}\Bigr)^{\hat\delta^2}\leq e^{\int_0^u\bigl((\frac{\zeta}{\nu})^2\frac{\nu}{r}\bigr)(\tilde u,v)\,d\tilde u}\leq 1,
\end{equation}
\begin{equation}\label{perto_um}
\Bigl(\frac{\ckrm}{r_+}\Bigr)^{\hat\delta^2}\leq e^{\int_0^v\bigl((\frac{\theta}{\lambda})^2\frac{\lambda}{r}\bigr)(u,\tilde v)\,d\tilde v}\leq 1.
\end{equation}
So,
integrating the Raychaudhuri equations, we get
\begin{equation}\label{lambda_mu}
\Bigl(\frac{\ckrm}{r_+}\Bigr)^{\hat\delta^2}=
\Bigl(\frac{\ckrm}{r_+}\Bigr)^{\hat\delta^2}\frac{\lambda}{1-\mu}(0,v)\leq\frac{\lambda}{1-\mu}(u,v)\leq\frac{\lambda}{1-\mu}(0,v)=1
\end{equation}
(as $\kappa(0,v)=1$), and
\begin{equation}\label{nu_mu}
-\Bigl(\frac{\ckrm}{r_+}\Bigr)^{\hat\delta^2}\frac{1}{1-\mu}(u,0)
\leq
\frac{\nu}{1-\mu}(u,v)\leq\frac{\nu}{1-\mu}(u,0)=-\,\frac{1}{1-\mu}(u,0).
\end{equation}
To bound $(1-\mu)(u,0)$, using~\eqref{lambda_u} and~\eqref{kappa_u}, we compute
\begin{equation}\label{mu_u}
 \partial_u(1-\mu)=\partial_u\Bigl(\frac\lambda\kappa\Bigr)
 =\nu\partial_r(1-\mu)-(1-\mu)\frac{\nu}{r}\Bigl(\frac\zeta\nu\Bigr)^2.
\end{equation}
At the point $(u,v)=(0,0)$ this yields
$$
\partial_u(1-\mu)(0,0)=-\partial_r(1-\mu)(r_+,\varpi_0).
$$
Fix $0<\eps<1$. Since the function
$u\mapsto(1-\mu)(u,0)$ is $C^1$, by decreasing $U$ if necessary, we have
$$
-\frac{\partial_r(1-\mu)(r_+,\varpi_0)}{1-\eps}<\partial_u(1-\mu)(u,0)<-\frac{\partial_r(1-\mu)(r_+,\varpi_0)}{1+\eps}
$$
for $0\leq u\leq U$, and so
$$
-\frac{\partial_r(1-\mu)(r_+,\varpi_0)}{1-\eps}\,u<(1-\mu)(u,0)<-\frac{\partial_r(1-\mu)(r_+,\varpi_0)}{1+\eps}\,u.
$$
Using these inequalities in~\eqref{nu_mu} immediately gives
\begin{equation}\label{nu_mu_bis} 
\Bigl(\frac{\ckrm}{r_+}\Bigr)^{\hat\delta^2}
\frac{1-\eps}{\partial_r(1-\mu)(r_+,\varpi_0)\,u}\leq\frac{\nu}{1-\mu}(u,v)\leq\frac{1+\eps}{\partial_r(1-\mu)(r_+,\varpi_0)\,u}.
\end{equation}

To obtain bounds on $\lambda$ and $\nu$ from~\eqref{lambda_mu} and~\eqref{nu_mu_bis},
recall that, in accordance with~\eqref{umm}, in the region $J^-(\cg_{\ckrm})\cap J^+(\cg_{\ckrp})$ the function $1-\mu$ is bounded 
above by a negative constant. 
On the other hand, the bounds we obtained earlier on $\varpi$ in 
$J^-(\cg_{\ckrm})$ imply that $1-\mu$ is bounded below in $J^-(\cg_{\ckrm})$. 
In summary, there exist $\overline{c}$ and $\overline{C}$ such that
$$
-\overline{C}\leq 1-\mu\leq -\overline{c}. 
$$
Therefore, from~\eqref{lambda_mu} and~\eqref{nu_mu_bis}, in the region $J^-(\cg_{\ckrm})\cap J^+(\cg_{\ckrp})$,
we get~\eqref{similar1} and~\eqref{similar2}:
$$
-\overline{C}\leq\lambda\leq-\overline{c}\Bigl(\frac{\ckrm}{r_+}\Bigr)^{\hat\delta^2},
$$
$$
-\overline{C}
\frac{1+\eps}{\partial_r(1-\mu)(r_+,\varpi_0)}\frac 1u
\leq\nu\leq
-\overline{c}\Bigl(\frac{\ckrm}{r_+}\Bigr)^{\hat\delta^2}\frac{1-\eps}{\partial_r(1-\mu)(r_+,\varpi_0)}
\frac1 u.
$$
By decreasing $\overline c$ and increasing $\overline C$, if necessary, we can guarantee~\eqref{similar1} and~\eqref{similar2}
hold, without having to further decrease $U$.

Now suppose that $(u,v)\in\cg_{r_+ - \delta}$. Then
\begin{eqnarray}
(1-\mu)(u,v)&=&(1-\mu)(r_+ - \delta,\varpi)\nonumber\\
&\leq&(1-\mu)(r_+ - \delta,\varpi_0)\nonumber\\
&\leq&-\frac{\partial_r(1-\mu)(r_+,\varpi_0)}{1+\eps}\,\delta,\label{negative}
\end{eqnarray}
where $\eps$ is any fixed positive number, provided that $\delta$ is sufficiently small. If $\delta$ is not small, then~\eqref{negative} also holds
but with $1+\eps$ replaced by a larger constant.

Using again~\eqref{mu_u},
$$
(1-\mu)(u,v)=-\int_0^ue^{-\int_{\tilde u}^u\bigl(\frac{\nu}{r}\bigl(\frac\zeta\nu\bigr)^2\bigr)(\bar u,v)\,d\bar u}
\Bigl(\frac{2\nu}{r^2}\eta\Bigr)(\tilde u,v)\,d\tilde u.
$$
We take into account that
$$
e^{-\int_{\tilde u}^u\bigl(\frac{\nu}{r}\bigl(\frac\zeta\nu\bigr)^2\bigr)(\bar u,v)\,d\bar u}\leq\Bigl(\frac{r_+}{r_+-\delta}\Bigr)^{\hat{\delta}^2}
$$
and
\begin{eqnarray*}
-\,\frac{2\nu}{r^2}\eta&\geq& 2\nu\left(-\,\frac{e^2}{r^3}-\frac\Lambda 3r+\frac{\varpi_0}{r^2}\right)+2\nu\frac{\tilde{\delta}}{r^2}\\
&=&\nu\partial_r(1-\mu)(r,\varpi_0)+2\nu\frac{\tilde{\delta}}{r^2}
\end{eqnarray*}
provided $U$ is chosen small enough so that $\varpi\leq\varpi_0+\tilde{\delta}$ in $J^-(\cg_{r_+-\delta})$. We get
\begin{eqnarray}
(1-\mu)(u,v)
&\geq&\Bigl(\frac{r_+}{r_+-\delta}\Bigr)^{\hat{\delta}^2}
\left((1-\mu)(r_+-\delta,\varpi_0)-\,\frac{2\tilde{\delta}\delta}{r_+(r_+-\delta)}\right)\nonumber\\
&\geq&-\,\Bigl(\frac{r_+}{r_+-\delta}\Bigr)^{\hat{\delta}^2}
\left(\frac{\partial_r(1-\mu)(r_+,\varpi_0)}{1-\eps}+\frac{4\tilde{\delta}}{r_+^2}\right)\,\delta,\label{negative_below}
\end{eqnarray}
where $0<\eps<1$,
provided $\delta$ is sufficiently small. 
We notice that in
the case under consideration the integration is done between $r_+$ and $r_+-\delta$ and so
the left hand sides of~\eqref{close_one} and~\eqref{perto_um} 
can be improved to
$\bigl(\frac{r_+-\delta}{r_+}\bigr)^{\hat\delta^2}$.
Estimates
\eqref{lambda_mu}, \eqref{negative} and~\eqref{negative_below} yield, for $\hat\delta\leq 1$,
\begin{eqnarray} 
-\,\Bigl(\frac{r_+}{r_+-\delta}\Bigr)
\left(\frac{\partial_r(1-\mu)(r_+,\varpi_0)}{1-\eps}+\frac{4\tilde{\delta}}{r_+^2}\right)\,\delta\leq\lambda\qquad\qquad\qquad\qquad&&\label{bound_lambda2}\\
\leq
-\Bigl(\frac{r_+-\delta}{r_+}\Bigr)\frac{\partial_r(1-\mu)(r_+,\varpi_0)}{1+\eps}\,\delta,&&\nonumber
\end{eqnarray}
 whereas estimates
\eqref{nu_mu_bis}, \eqref{negative} and~\eqref{negative_below} yield, again for $\hat\delta\leq 1$,
\begin{eqnarray}
-\,\Bigl(\frac{r_+}{r_+-\delta}\Bigr)
\left(\frac{1+\eps}{1-\eps}+\frac{4\tilde{\delta}(1+\eps)}{r_+^2\partial_r(1-\mu)(r_+,\varpi_0)}\right)\,\frac\delta u\leq\nu
\qquad\qquad\ \ &&\label{bound_nu2}\\
\leq
-\Bigl(\frac{r_+-\delta}{r_+}\Bigr)\frac{1-\eps}{1+\eps}\,\frac\delta u.&&\nonumber
\end{eqnarray}
Estimates~\eqref{bound_lambda}
and~\eqref{bound_nu} are established. Note that $u \leq \delta$
when $(u,v)\in\cg_{r_+-\delta}$. Since
\[
c = c(\delta, \eps, \tilde \delta) = c(\delta, \eps(U,\delta), \tilde \delta(U)) = c(\delta, \eps(U(\delta),\delta), \tilde \delta(U(\delta))),  
\]
and analogously for $C$, we see that $c$ and $C$ can be chosen arbitrarily close to one, provided that $\delta$
is sufficiently small.
\end{proof}

\begin{Lem} Let $\eps>0$. If $\delta$ is sufficiently small, then
for any point $(u,v)\in \cg_{r_+-\delta}$ we have
\begin{equation}\label{region_delta} 
\delta\, e^{-[\partial_r(1-\mu)(r_+,\varpi_0)+\eps]\,v}
\leq u \leq
\delta\, e^{-[\partial_r(1-\mu)(r_+,\varpi_0)-\eps]\,v}.
\end{equation}
For any point $(u,v)\in J^-(\cg_{\ckrm})\cap J^+(\cg_{r_+-\delta})$ we have
\begin{equation}\label{region}
\delta\, e^{-[\partial_r(1-\mu)(r_+,\varpi_0)+\eps]\,v}
\leq u\leq\  \delta\, e^{\frac{r_+}{\tilde c}}e^{-[\partial_r(1-\mu)(r_+,\varpi_0)-\eps]\,v}.
\end{equation}
\end{Lem}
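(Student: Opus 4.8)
The plan is to prove \eqref{region_delta} first, by treating the level set $\cg_{r_+-\delta}$ as a graph $v\mapsto(\udz(v),v)$ and deriving a differential inequality for $\ln\udz$; the estimate \eqref{region} will then follow from \eqref{region_delta} together with the monotonicity of $r$ and a single integration of the lower bound for $-\nu$.

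For \eqref{region_delta}, differentiate the identity $r(\udz(v),v)=r_+-\delta$ to obtain $\udz'(v)=-\frac{\lambda}{\nu}(\udz(v),v)$. On $\cg_{r_+-\delta}$, estimates \eqref{bound_lambda} and \eqref{bound_nu} give $-\lambda\in[c\,\ru\,\delta,\,C\,\ru\,\delta]$ and $-\nu\in[c\,\delta/u,\,C\,\delta/u]$, with $c,C\to1$ as $\delta\to0$. Since $\lambda,\nu<0$ we have $\udz'(v)<0$ and $\bigl|\tfrac{\lambda}{\nu}\bigr|=\tfrac{-\lambda}{-\nu}\in[\tfrac cC\,\ru\,\udz(v),\,\tfrac Cc\,\ru\,\udz(v)]$, so that
\[
-[\ru+\eps]\le\frac{d}{dv}\ln\udz(v)\le-[\ru-\eps]
\]
once $\delta$ is small enough that $\tfrac cC$ and $\tfrac Cc$ are close enough to $1$. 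As $\udz(0)=\delta$ (because $r(u,0)=r_+-u$), integrating from $0$ to $v$ gives exactly \eqref{region_delta}.

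For \eqref{region}, fix $(u,v)\in J^-(\cg_{\ckrm})\cap J^+(\cg_{r_+-\delta})$, so that $\ckrm\le r(u,v)\le r_+-\delta$. Along the line of constant $v$ the map $\tilde u\mapsto r(\tilde u,v)$ is strictly decreasing (Lemma~\ref{sign}), and $r(u,v)\le r_+-\delta=r(\udz(v),v)$ forces $u\ge\udz(v)$; combined with the lower bound in \eqref{region_delta} this is the left inequality in \eqref{region}. For the right inequality, integrate $\nu$ along the same segment:
\[
(r_+-\delta)-r(u,v)=\int_{\udz(v)}^{u}(-\nu)(\tilde u,v)\,d\tilde u\le(r_+-\delta)-\ckrm<r_+.
\]
Feeding in the lower bound $-\nu\ge\tilde c/\tilde u$ supplied by \eqref{similar2}, the integral dominates $\tilde c\,\ln\frac{u}{\udz(v)}$, so $\tilde c\,\ln\frac{u}{\udz(v)}<r_+$ and $u<e^{r_+/\tilde c}\,\udz(v)$; combining with the upper bound in \eqref{region_delta} yields $u\le\delta\,e^{r_+/\tilde c}\,e^{-[\ru-\eps]\,v}$, which is \eqref{region}.

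The main obstacle is the lower bound $-\nu\ge\tilde c/\tilde u$ on the whole integration segment. Estimate \eqref{similar2} is established only where $\ckrm\le r\le\ckrp$, since its proof needs $1-\mu$ bounded away from $0$; as $r\nearrow r_+-\delta$ the factor $1-\mu$ degenerates and $-\nu$ is only of size $\delta/\tilde u$, so one cannot simply quote \eqref{similar2} up to $\cg_{r_+-\delta}$. The careful argument splits the segment at $\cg_{\ckrp}$: on $\ckrm\le r\le\ckrp$ one uses \eqref{similar2} directly, while on $\ckrp\le r\le r_+-\delta$ one writes $-\nu=\frac{\nu}{1-\mu}\,(-(1-\mu))$ and controls $\frac{\nu}{1-\mu}$ through the Raychaudhuri relation \eqref{ray_v_bis} and $-(1-\mu)$ through its profile near $r_+$, so that the logarithmic growth of $u$ over this part is measured by $\int_{\ckrp}^{r_+-\delta}\ru/|1-\mu|\,dr$. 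Assembling these contributions into a single prefactor that stays uniform in $\delta$ is the delicate bookkeeping step; the two monotonicity comparisons and the integration of $-\nu$ are otherwise routine.
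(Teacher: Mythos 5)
Your derivation of \eqref{region_delta} and of the lower bound in \eqref{region} is exactly the paper's: differentiate $r(\udz(v),v)=r_+-\delta$ to get $\udp(v)=-\lambda/\nu$, feed in \eqref{bound_lambda} and \eqref{bound_nu} with $c,C\to 1$, integrate from $\udz(0)=\delta$, and then use monotonicity of $r$ in $u$ (the paper phrases the lower bound via $\vdz(u)\leq v$ rather than $u\geq\udz(v)$, which is the same thing). For the upper bound in \eqref{region} the paper does precisely the step you were reluctant to take: it integrates \eqref{similar2} along the entire segment from $\udz(v)$ to $u$, obtaining $\tilde c\ln\frac{u}{\udz(v)}\leq (r_+-\delta)-r(u,v)\leq r_+$, without remarking that for small $\delta$ part of that segment has $r>\ckrp$ and so lies outside the region $J^-(\cg_{\ckrm})\cap J^+(\cg_{\ckrp})$ where \eqref{similar2} was established. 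So the ``obstacle'' you flag is a real gap in the write-up, and your proposed repair (split at $\cg_{\ckrp}$, use \eqref{similar2} on the inner piece and control $\frac{\nu}{1-\mu}$ via the Raychaudhuri monotonicity and \eqref{nu_mu_bis} on the outer piece) is the natural way to make the step honest. Note, however, that you do not actually need the prefactor to be uniform in $\delta$: the lemma fixes $\delta$, and on the sub-segment with $\ckrp\leq r\leq r_+-\delta$ one has $-(1-\mu)(r,\varpi)\geq -(1-\mu)(r,\varpi_0)\geq c\,\delta$, so \eqref{nu_mu_bis} yields $-\nu\geq\tilde c_\delta/\tilde u$ there with a $\delta$-dependent constant; the resulting bound $u\leq e^{r_+/\tilde c_\delta}\,\udz(v)$ is all that the subsequent uses of \eqref{region} (which only invoke it in the form $u\leq C e^{-[\partial_r(1-\mu)(r_+,\varpi_0)-\eps]v}$ for some constant $C$) require. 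Apart from this point, on which you are if anything more careful than the source, the proposal is correct and follows the paper's route.
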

\begin{proof}
Obviously, we have
$$
r(\udz(v),v)=r_+-\delta.
$$
Since $r$ is $C^1$ and $\nu$ does not vanish, $v\mapsto\udz(v)$ is $C^1$.
Differentiating both sides of the last equality with respect to $v$ we obtain
$$
\udp(v)=-\,\frac{\lambda(\udz(v),v)}{\nu(\udz(v),v)}.
$$
Using~\eqref{bound_lambda} and~\eqref{bound_nu}, we have
$$
-\,\frac{C}{c}\partial_r(1-\mu)(r_+,\varpi_0)\udz(v)\leq
\udp(v)\leq-\,\frac{c}{C}\partial_r(1-\mu)(r_+,\varpi_0)\udz(v).
$$
Integrating the last inequalities between $0$ and $v$, as $\udz(0)=\delta$, we have
$$\delta e^{-\,\frac{C}{c}\partial_r(1-\mu)(r_+,\varpi_0)v}\leq
\udz(v)\leq \delta e^{-\,\frac{c}{C}\partial_r(1-\mu)(r_+,\varpi_0)v}.
$$
This proves~\eqref{region_delta}.

Let $(u,v)\in J^-(\cg_{\ckrm})\cap J^+(\cg_{r_+ - \delta})$.
Integrating~\eqref{similar2} between $\udz(v)$ and $u$, we get
$$
1\leq\frac{u}{\udz(v)}\leq e^{\frac{r_+}{\tilde c}}.
$$ 
Combining $\vdz(u)\leq v$ with the first inequality in~\eqref{region_delta} applied at the point $(u,\vdz(u))$,
\begin{eqnarray*}
u&\geq& \delta\, e^{-[\partial_r(1-\mu)(r_+,\varpi_0)+\eps]\,\vdz(u)}\\
&\geq& \delta\, e^{-[\partial_r(1-\mu)(r_+,\varpi_0)+\eps]\,v},
\end{eqnarray*}
and combining $u\leq e^{\frac{r_+}{\tilde c}}\udz(v)$ with the second inequality in~\eqref{region_delta} applied at the point $(\udz(v),v)$,
\begin{eqnarray*}
u&\leq& e^{\frac{r_+}{\tilde c}}\udz(v)\ \leq\  \delta\, e^{\frac{r_+}{\tilde c}}e^{-[\partial_r(1-\mu)(r_+,\varpi_0)-\eps]\,v}.
\end{eqnarray*}
\end{proof}

\subsection{Improved estimates}\label{medium-3}

\begin{Lem}
 Let $\ckrm\leq r\leq\ckrp$. Then 
\begin{equation}\label{z_f}
 \Bigl|\hatz\Bigr|(r,v)\leq \tilde C_{\ckrm} \max_{u\in[0,\ug(v)]}|\zeta_0|(u)e^{-\alpha v}.
\end{equation}
\end{Lem}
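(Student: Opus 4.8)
The plan is to rerun the Gronwall scheme of Lemma~\ref{sto}, but now carrying the exponential weight $e^{\alpha\tilde v}$, feeding the decay into the estimate through the boundary values on $\cg_{\ckrp}$ by means of the sharp red-shift bounds~\eqref{z_0_bis} and~\eqref{t_0}. The decay~\eqref{z_f} is \emph{not} generated inside the medium region: there $\partial_r(1-\mu)$ changes sign at $r_0$ and the interior exponentials in~\eqref{triple} are only bounded by the constant $C$ of~\eqref{c}, not by a decaying factor. The decay must therefore be imported from $\cg_{\ckrp}$, and the mechanism that allows this is that the medium region is \emph{thin in the $v$ direction}: along any line of constant $u$, the $v$-time needed to travel from $\cg_{\ckrp}$ to the point $(u,v)$ (with $\ckrm\leq r(u,v)\leq\ckrp$) is uniformly bounded.

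First I would establish this crossing-time bound. Since $(u,v)$ and $(u,\vgrp(u))$ both lie in $J^-(\cg_{\ckrm})\cap J^+(\cg_{\ckrp})$, where~\eqref{similar1} gives $\lambda\leq-\tilde c<0$, integrating $\partial_v r=\lambda$ along constant $u$ yields
$$
\ckrp-r(u,v)=\int_{\vgrp(u)}^{v}(-\lambda)(u,\tilde v)\,d\tilde v\geq \tilde c\,(v-\vgrp(u)),
$$
so that $v-\vgrp(u)\leq(\ckrp-\ckrm)/\tilde c=:T$. Consequently, writing $u=\ug(v)$, for every $\tilde v\in[\vgrp(\ug(v)),v]$ one has the uniform bound $e^{\alpha(v-\tilde v)}\leq e^{\alpha(v-\vgrp(\ug(v)))}\leq e^{\alpha T}$; since $\vgs(\ug(v))\geq\vgrp(\ug(v))$ for $s\leq\ckrp$, this covers all the values of $\tilde v$ that will appear in the integrals.

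Next I would multiply~\eqref{triple} by $e^{\alpha v}$ and control the three resulting terms. For the boundary term I would apply~\eqref{z_0_bis} at $(\ckrp,\vgrp(\ug(v)))$, producing a factor $e^{-\alpha\vgrp(\ug(v))}$; combined with $e^{\alpha v}$ this gives $e^{\alpha(v-\vgrp(\ug(v)))}\leq e^{\alpha T}$, and with the bound $C$ on the interior exponentials this term is at most a constant times $\max_{u\in[0,\ug(v)]}|\zeta_0|(u)$. For the $\widehat{\frac\theta\lambda}$ term I would insert~\eqref{t_0} at $\ckrp$, which carries $e^{-\alpha\tilde v}$; then $e^{\alpha v}e^{-\alpha\tilde v}=e^{\alpha(v-\tilde v)}\leq e^{\alpha T}$, and the remaining $\int_{\vgrp(\ug(v))}^{v}\frac{(-\lambda)}{r}\,d\tilde v\leq\ln(\ckrp/\ckrm)$, so this term is likewise controlled. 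For the double-integral term, writing $e^{\alpha v}\bigl|\hatz\bigr|(\tilde s,\tilde v)=e^{\alpha(v-\tilde v)}\,e^{\alpha\tilde v}\bigl|\hatz\bigr|(\tilde s,\tilde v)$ and using $e^{\alpha(v-\tilde v)}\leq e^{\alpha T}$, I would rephrase everything in terms of the weighted maximal function of~\eqref{z} endowed with the weight of Lemma~\ref{run},
$$
{\cal Z}_{(r,v)}^\alpha(s)=\max_{\tilde v\in[\vgs(\ug(v)),v]}\left\{e^{\alpha\tilde v}\bigl|\hatz\bigr|(s,\tilde v)\right\}.
$$

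Collecting the three bounds produces an integral inequality for ${\cal Z}_{(r,v)}^\alpha$ of exactly the same shape as the one in Lemma~\ref{sto} (with the harmless extra constants $e^{\alpha T}$), so Gronwall's inequality yields ${\cal Z}_{(r,v)}^\alpha(r)\leq\tilde C_{\ckrm}\max_{u\in[0,\ug(v)]}|\zeta_0|(u)$. Since ${\cal Z}_{(r,v)}^\alpha(r)=e^{\alpha v}\bigl|\hatz\bigr|(r,v)$, this is precisely~\eqref{z_f}. I expect the only real obstacle to be the bookkeeping of the exponential weights: the entire argument hinges on the crossing-time bound $v-\vgrp(u)\leq T$, which is what converts the boundary decay $e^{-\alpha\vgrp(u)}$ into the desired $e^{-\alpha v}$, and which rests squarely on the strictly negative upper bound $\lambda\leq-\tilde c$ from~\eqref{similar1}.
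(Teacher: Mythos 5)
Your proof is correct and follows essentially the same route as the paper: the decay is imported from $\cg_{\ckrp}$ via \eqref{z_0_bis} and \eqref{t_0}, converted into decay in $v$ by the crossing-time bound $v-\vgrp(\ug(v))\leq(\ckrp-r)/\tilde c$ coming from \eqref{similar1} (this is exactly \eqref{diff} in the paper), and closed by a Gronwall argument in the $r$ variable. The only cosmetic difference is that the paper does not rerun a weighted Gronwall scheme: it simply reuses the already-established inequality \eqref{zf} from Lemma~\ref{sto} and substitutes the improved bounds on the boundary quantities ${\cal Z}_{(r,v)}(\ckrp)$ and ${\cal T}_{(r,v)}(\ckrp)$.
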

\begin{proof}
Applying first the definition~\eqref{z} and then~\eqref{z_0_bis},
\begin{eqnarray}
{\cal Z}_{(r,v)}(\ckrp)&=&\max_{\tilde v\in[\vgrp(\ug(v)),v]}\Bigl|\hatz\Bigr|(\ckrp,\tilde v)\nonumber\\
&\leq&e^{\frac{(r_+-\ckrp)^2}{\ckrp r_+}}\max_{\tilde v\in[\vgrp(\ug(v)),v]}\ \ \max_{u\in[0,\ugrp(\tilde v)]}|\zeta_0|(u)e^{-\alpha \tilde v}\nonumber\\
&\leq&e^{\frac{(r_+-\ckrp)^2}{\ckrp r_+}}\max_{u\in[0,\ugrp(\vgrp(\ug(v)))]}|\zeta_0|(u)e^{-\alpha \vgrp(\ug(v))}\nonumber\\
&=&e^{\frac{(r_+-\ckrp)^2}{\ckrp r_+}}\max_{u\in[0,\ug(v)]}|\zeta_0|(u)e^{-\alpha \vgrp(\ug(v))}\label{zr}
\end{eqnarray}
because $\ugrp(\tilde v)\geq \ugrp(v)$. 
Integrating~\eqref{similar1} between $\vgrp(u_r(v))$ and $v$, we get
\begin{equation}\label{diff}
v-\vgrp(\ug(v))\leq{\textstyle\frac{\ckrp-r}{\tilde c}}=:
 c_{r,\ckrp}\leq c_{\ckrm,\ckrp}.
\end{equation}
This allows us to continue the estimate~\eqref{zr}, to obtain 
\begin{eqnarray}
{\cal Z}_{(r,v)}(\ckrp)\leq e^{\frac{(r_+-\ckrp)^2}{\ckrp r_+}}e^{\alpha c_{\ckrm,\ckrp}}\max_{u\in[0,\ug(v)]}|\zeta_0|(u)e^{-\alpha v}.\label{part-1}
\end{eqnarray}
 Applying first the definition~\eqref{t} and then~\eqref{t_0}, and repeating the computations that lead to~\eqref{zr} and~\eqref{diff},
\begin{eqnarray}
{\cal T}_{(r,v)}(\ckrp)&\leq&e^{\frac{(r_+-\ckrp)^2}{\ckrp r_+}}\ln\Bigl(\frac{r_+}{\ckrp}\Bigr)\max_{\tilde v\in[\vgrp(\ug(v)),v]}\ \
\max_{u\in[0,\ugrp(\tilde v)]}|\zeta_0|(u)e^{-\alpha \tilde v}\nonumber\\
&\leq&e^{\frac{(r_+-\ckrp)^2}{\ckrp r_+}}e^{\alpha c_{\ckrm,\ckrp}}\ln\Bigl(\frac{r_+}{\ckrp}\Bigr)\max_{u\in[0,\ug(v)]}|\zeta_0|(u)e^{-\alpha v}\label{part-2}
\end{eqnarray}
We use~\eqref{part-1} and~\eqref{part-2} in~\eqref{zf}. This yields~\eqref{z_f}.
\end{proof}

\begin{Lem}
 Let $\ckrm\leq r\leq\ckrp$. Then 
 \begin{equation}\label{t-f} 
\Bigl|\widehat{\frac\theta\lambda}\Bigr|(r,v)\leq\overline{C}\max_{u\in[0,\ug(v)]}|\zeta_0|(u)e^{-\alpha v},
\end{equation}
\begin{equation}
|\hat\theta|(r,v)\leq\check C\max_{u\in[0,\ug(v)]}|\zeta_0|(u)e^{-\alpha v}.\label{theta_ff}
\end{equation}
\end{Lem}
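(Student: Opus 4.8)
The plan is to reuse, almost verbatim, the argument that produced~\eqref{t-0-a} in the proof of Lemma~\ref{sto}, but feeding in the sharp, exponentially decaying estimates in place of the coarse ones. Concretely, I would start from the Bondi-coordinate representation of $\widehat{\frac\theta\lambda}$ obtained there by writing~\eqref{field_6} with $\ckr=\ckrp$ in the $(r,v)$ variables,
$$
\widehat{\frac\theta\lambda}(r,v)=\widehat{\frac\theta\lambda}(\ckrp,v)\,e^{\int_r^{\ckrp}\bigl[\frac{1}{\widehat{1-\mu}}\widehat{\partial_r(1-\mu)}\bigr](\tilde s,v)\,d\tilde s}+\int_{r}^{\ckrp}\Bigl[\hatz\frac 1{\tilde s}\Bigr](\tilde s,v)\,e^{\int_r^{\tilde s}\bigl[\frac{1}{\widehat{1-\mu}}\widehat{\partial_r(1-\mu)}\bigr](\bar s,v)\,d\bar s}\,d\tilde s,
$$
and bound all the exponentials by the constant $C$ of~\eqref{c}, exactly as in Lemma~\ref{sto} (this uses only the sign and size of $\frac{\partial_r(1-\mu)}{1-\mu}$ in $J^-(\cg_{\ckrm})\cap J^+(\cg_{\ckrp})$, which was already controlled there).

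For the boundary term I would invoke the sharp estimate~\eqref{t_0} on the curve $\cg_{\ckrp}$, i.e.\ at $r=\ckrp$, together with the monotonicity $\ugrp(v)\leq\ug(v)$ (valid because $\ckrp\geq r$ and $r$ decreases in $u$), to obtain $\bigl|\widehat{\frac\theta\lambda}\bigr|(\ckrp,v)\leq\hat C_{\ckrp}\max_{u\in[0,\ug(v)]}|\zeta_0|(u)\,e^{-\alpha v}$. For the integral term I would insert the improved bound~\eqref{z_f} for $\hatz$: since $\tilde s\geq r$ gives $u_{\tilde s}(v)\leq\ug(v)$ for every $\tilde s\in[r,\ckrp]\subseteq[\ckrm,\ckrp]$, the factor $\max_{u\in[0,\ug(v)]}|\zeta_0|(u)\,e^{-\alpha v}$ is constant in the radial integration variable and pulls out, leaving $\int_r^{\ckrp}\frac{d\tilde s}{\tilde s}=\ln\bigl(\frac{\ckrp}{r}\bigr)\leq\ln\bigl(\frac{\ckrp}{\ckrm}\bigr)$. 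Adding the two contributions yields~\eqref{t-f} with a constant $\overline C$ depending only on $\ckrm$ and $\ckrp$.

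Finally, for~\eqref{theta_ff} I would write $\hat\theta=\lambda\,\widehat{\frac\theta\lambda}$ and use the upper bound $|\lambda|\leq\tilde C$ from~\eqref{similar1}, valid throughout $J^-(\cg_{\ckrm})\cap J^+(\cg_{\ckrp})$, precisely as~\eqref{t_0} was turned into~\eqref{theta_f} in Proposition~\ref{psd}; this produces $\check C=\tilde C\,\overline C$.

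The only genuinely delicate point—and the reason this lemma is placed after, and kept separate from, Lemma~\ref{sto}—is the propagation of the exponential decay $e^{-\alpha v}$. This is exactly where the Bondi-coordinate representation pays off: because the inner integral above runs over $r$ at \emph{fixed} $v$, the decaying factor $e^{-\alpha v}$ behaves as a constant with respect to the integration variable and factors out cleanly, so no Gronwall loop in $v$ is required at this stage. All the substantive work—the $v$-Gronwall argument and the sharp control of $\kappa$, $\varpi$, $\lambda$ and $\nu$ that fix the rate $\alpha$—has already been done in~\eqref{z_f} and the preceding lemmas, so I expect the remaining obstacle here to be purely bookkeeping rather than analytic.
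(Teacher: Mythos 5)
Your proposal is correct and follows essentially the same route as the paper: the Bondi-coordinate form of~\eqref{field_6} with $\ckr=\ckrp$, the exponentials bounded by the constant $C$ of~\eqref{c}, the boundary term controlled by~\eqref{t_0} and the integral term by~\eqref{z_f}, and finally~\eqref{theta_ff} obtained from the bound on $|\lambda|$ in~\eqref{similar1}. No gaps.
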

\begin{proof}
Just like inequality~\eqref{t-0-a} was obtained from~\eqref{t_00} (that is, \eqref{t_0} with $\alpha=0$) and~\eqref{z-0}, inequality~\eqref{t-f} will be obtained from~\eqref{t_0} and~\eqref{z_f}.
Writing~\eqref{field_6} in the $(r,v)$ coordinates,
\begin{eqnarray*}
\widehat{\frac\theta\lambda}(r,v)&=&\widehat{\frac\theta\lambda}(\ckrp,v)
e^{\int_r^{\ckrp}\bigl[\frac{1}{\widehat{1-\mu}}\widehat{\partial_r(1-\mu)}\bigr](\tilde{s},v)\,d\tilde{s}}\\
&&+\int_{r}^{\ckrp}\Bigl[\hatz\frac 1 {\tilde s}\Bigr](\tilde s,v)
e^{\int_r^{\tilde s}\bigl[\frac{1}{\widehat{1-\mu}}\widehat{\partial_r(1-\mu)}\bigr](\bar{s},v)\,d\bar{s}}\,d\tilde s.
\end{eqnarray*}
The exponentials are bounded by the constant $C$ in~\eqref{c}. We use the estimates~\eqref{t_0} and~\eqref{z_f} to obtain
\begin{eqnarray*}
\Bigl|\widehat{\frac\theta\lambda}\Bigr|(r,v)&\leq&C\Bigl|\widehat{\frac\theta\lambda}\Bigr|(\ckrp,v)
+C\int_{r}^{\ckrp}\Bigl[\Bigl|\hatz\Bigr|\frac 1 {\tilde s}\Bigr](\tilde s,v)\,d\tilde s\\
&\leq&C\hat C_{\ckrp}\max_{u\in[0,\ugrp(v)]}|\zeta_0|(u)e^{-\alpha v}\\ &&+C
\tilde C_{\ckrm}\ln\Bigl(\frac{\ckrp}{r}\Bigr) \max_{u\in[0,\ug(v)]}|\zeta_0|(u)e^{-\alpha v}\\
&=&\overline{C}\max_{u\in[0,\ug(v)]}|\zeta_0|(u)e^{-\alpha v}.
\end{eqnarray*}

Using~\eqref{similar1}, the function $\lambda$ is bounded from below in $J^-(\cg_{\ckrm})\cap J^+(\cg_{\ckrp})$. Hence~\eqref{t-f} implies~\eqref{theta_ff}.
\end{proof}

\begin{Rmk}
For use in\/ {\rm Part~3}, we observe that~\eqref{t_0} and~\eqref{t-f} imply
\begin{equation}\label{canto-part}
\lim_{\stackrel{(u,v)\to(0,\infty)}{{\mbox{\tiny{$(u,v)\in J^-(\Gamma_{\ckrm})$}}}}}\,\Bigl|\frac\theta\lambda\Bigr|(u,v)=0.
\end{equation}
\end{Rmk}

\section{The region $J^-(\gam)\cap J^+(\cg_{\ckrm})$}\label{large}

In this section, we define a curve $\gam$ to the future of $\cg_{\ckrm}$.
Our first aim is to obtain the bounds in Corollary~\ref{cor-8}, $r(u,v)\geq r_--\frac\eps 2$ and $\varpi(u,v)\leq\varpi_0+\frac\eps 2$,
for $(u,v)\in J^-(\gam)\cap J^+(\cg_{\ckrm})$ with $u\leq U_\eps$. In the process, we will bound
$\int_{\ugr(v)}^u\bigl[\bigl|\frac\zeta\nu\bigr||\zeta|\bigr](\tilde u,v)\,d\tilde u$ (this is inequality~\eqref{K}).
Then we will obtain a lower bound on $\kappa$, as well as upper and lower bounds on $\lambda$ and $\nu$. Therefore 
this region, where $r$ may already be below $r_-$, is still a small perturbation of the Reissner-Nordstr\"{o}m solution.

We choose a positive number\footnote{We always have $-\partial_r(1-\mu)(r_-,\varpi_0)>\partial_r(1-\mu)(r_+,\varpi_0)$ (see Appendix~A of Part~3). So,
in particular, we may choose $\beta=-\,\frac{\partial_r(1-\mu)(r_+,\varpi_0)}{\partial_r(1-\mu)(r_-,\varpi_0)}$.}
\begin{equation}\label{beta} 
0<\beta<{\textstyle\frac 12\left(\sqrt{1-8\frac{\partial_r(1-\mu)(r_+,\varpi_0)}{\partial_r(1-\mu)(r_-,\varpi_0)}}-1\right)},
\end{equation}
and define $\gam=\gamma_{\ckrm,\beta}$ to be the curve parametrized by
\begin{equation}\label{small_gamma}
u\mapsto\big(u,(1+\beta)\,\vgr(u)),
\end{equation}
for $u\in[0,U]$.
Since the curve $\cg_{\ckrm}$ is spacelike, so is $\gam$ ($u\mapsto\vgr(u)$ is strictly decreasing).
\begin{Lem}\label{boot_r}
For each $\beta$ satisfying~\eqref{beta} there exist $r_-<\overline{\ckrm}<r_0$ and $0<\eps_0<r_-$ for which, whenever\/ $\ckrm$ and $\eps$ are chosen satisfying $r_-<\ckrm\leq\overline{\ckrm}$ and $0<\eps\leq\eps_0$, the following holds:
there exists $U_\eps$ (depending on $\ckrm$ and $\eps$) such that if $(u,v)\in J^-(\gam)\cap J^+(\cg_{\ckrm})$,
with $0<u\leq U_\eps$, and
\begin{equation}\label{r_omega}
r(u,v)\geq r_--\eps, 
\end{equation}
then 
\begin{equation}\label{r_omega_2}
r(u,v)\geq r_--{\textstyle\frac\eps 2}\quad{\rm and}\quad\varpi(u,v)\leq\varpi_0+{\textstyle\frac\eps 2}.
\end{equation}
\end{Lem}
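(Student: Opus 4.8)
The plan is to run a bootstrap argument, treating the hypothesis $r(u,v)\ge r_--\eps$ as the a priori assumption and improving it to the two bounds in~\eqref{r_omega_2}. The first observation is that, by Lemma~\ref{sign}, $r$ decreases in both $u$ and $v$, so the pointwise assumption propagates to the whole causal past: $r_--\eps\le r\le\ckrm$ on $J^-(u,v)\cap J^+(\cg_{\ckrm})$. This confines $r$ to a compact interval bounded away from $0$, on which $\kappa$, $|1-\mu|$ and $\partial_r(1-\mu)$ are uniformly controlled (using also the crude bound $\varpi_0\le\varpi\le\varpi_0+C(\sup_{[0,U]}|\zeta_0|)^2$ inherited from~\eqref{mass_2_bis} on $\cg_{\ckrm}$). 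I will establish the mass bound $\varpi\le\varpi_0+\frac\eps2$ first, and then deduce the radius bound $r\ge r_--\frac\eps2$ from it by a purely algebraic argument. The whole quantifier structure is respected by fixing $\beta$ as in~\eqref{beta} first, choosing $\overline{\ckrm}$ and $\eps_0$ next, and selecting $U_\eps$ last.

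For the mass bound I integrate~\eqref{omega_u} along a segment of constant $v$ from $\cg_{\ckrm}$ to $(u,v)$, so that $\varpi(u,v)=\varpi(\ugr(v),v)+\frac12\int_{\ugr(v)}^u(1-\mu)\bigl(\frac\zeta\nu\bigr)^2\nu\,d\tilde u$, the boundary value being $\le\varpi_0+C(\sup|\zeta_0|)^2$ by~\eqref{mass_2_bis}. Since the integrand equals $\frac12|1-\mu|\,\bigl|\frac\zeta\nu\bigr||\zeta|$ in absolute value and $|1-\mu|$ is bounded on the compact $r$-range, the whole problem reduces to controlling $\int_{\ugr(v)}^u\bigl[\bigl|\frac\zeta\nu\bigr||\zeta|\bigr](\tilde u,v)\,d\tilde u$, the estimate~\eqref{K}. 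Using $-\nu\,d\tilde u=-dr$ along constant $v$ turns this into $\int_{r(u,v)}^{\ckrm}\bigl|\hatz\bigr|^2\,dr$, so a uniform bound on $\bigl|\hatz\bigr|$ over the region, integrated over an $r$-interval of length at most $(\ckrm-r_-)+\eps$, makes it as small as we please once $\ckrm$ is close to $r_-$ and $\eps$ is small. Feeding this back gives $\varpi\le\varpi_0+C'(\sup_{[0,U_\eps]}|\zeta_0|)^2\le\varpi_0+\frac\eps2$ for $U_\eps$ small, since $\zeta_0$ is continuous with $\zeta_0(0)=0$.

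The main obstacle is therefore the uniform control of $\bigl|\hatz\bigr|$ in the strip $J^-(\gam)\cap J^+(\cg_{\ckrm})$, which is genuinely a blue-shift region. I will estimate $\frac\zeta\nu$ and $\frac\theta\lambda$ from~\eqref{field_7} and~\eqref{field_6} with $\cg_{\ckrm}$ as base curve, where the boundary data already carry the decay $e^{-\alpha v}$ from~\eqref{z_f} and~\eqref{t-f}. The difficulty is that $\partial_r(1-\mu)(r_-,\varpi_0)<0$, so the weights $e^{-\int\kappa\,\partial_r(1-\mu)\,dv}$ in~\eqref{field_7} now amplify rather than damp. The restriction to $J^-(\gam)$ is exactly what tames this: there $v-\vgr(u)\le\beta\,\vgr(u)$, so the amplification over the strip is at most $e^{(-\partial_r(1-\mu)(r_-,\varpi_0))\,\beta\,\vgr(u)}$, while the boundary data decay like $e^{-\alpha\,\vgr(u)}$ with $\alpha$ close to $\partial_r(1-\mu)(r_+,\varpi_0)$. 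Closing the coupled Gronwall estimate for $\frac\zeta\nu$ and $\frac\theta\lambda$ over the region---where sampling the $\frac\theta\lambda$ source back on $\cg_{\ckrm}$ raises the total $v$-extent to $(1+\beta)\,\vgr(u)$, which is the origin of the factor $\beta(1+\beta)$---produces a net exponent that is negative precisely when $\beta(1+\beta)\bigl(-\partial_r(1-\mu)(r_-,\varpi_0)\bigr)<2\,\partial_r(1-\mu)(r_+,\varpi_0)$, and this is exactly condition~\eqref{beta}. Here I first fix $\beta$ as in~\eqref{beta}, then pick $\overline{\ckrm}$ close enough to $r_-$ that the values of $\partial_r(1-\mu)$ on $[\ckrm,r_0]$ stay close enough to the horizon values to preserve the strict inequality with a margin, and finally take $\eps_0$ small. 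Getting the exponents right in this coupled blue-shift Gronwall---in particular tracing the $\beta(1+\beta)$ weight---is the delicate computation.

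For the radius bound I use an algebraic argument that avoids the blue-shift ODE entirely. On the solution $1-\mu<0$ away from the $v$-axis (Lemma~\ref{sign}), and for fixed $\varpi$ the function $r\mapsto(1-\mu)(r,\varpi)$ is negative exactly on the interval $\bigl(r_-(\varpi),r_+(\varpi)\bigr)$ between its two smallest zeros. Since $r(u,v)\le\ckrm<r_+\le r_+(\varpi)$, the sign condition forces $r(u,v)>r_-(\varpi(u,v))$. Now $\varpi\mapsto r_-(\varpi)$ is continuous and decreasing with $r_-(\varpi_0)=r_-$, so combining with the mass bound just proved and shrinking $U_\eps$ further, if necessary, so that $\varpi-\varpi_0$ is small enough to give $r_-(\varpi(u,v))\ge r_--\frac\eps2$, we conclude $r(u,v)>r_-(\varpi(u,v))\ge r_--\frac\eps2$. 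This establishes~\eqref{r_omega_2} and closes the bootstrap.
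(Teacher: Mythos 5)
Your overall architecture matches the paper's: a bootstrap in which the a priori bound $r\geq r_--\eps$ propagates to the causal past by monotonicity, the mass bound is reduced to controlling $\int_{\ugr(v)}^u\bigl[\bigl|\frac\zeta\nu\bigr||\zeta|\bigr](\tilde u,v)\,d\tilde u$ (the paper's~\eqref{K}) with the ultimate smallness supplied by $\sup_{[0,U_\eps]}|\zeta_0|\to 0$, and the radius bound is then deduced algebraically from $1-\mu\leq 0$; your formulation via the decreasing function $\varpi\mapsto r_-(\varpi)$ is equivalent to the paper's inspection of the graph of $(1-\mu)(\,\cdot\,,\varpi_0)$. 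The quantifier bookkeeping is also handled in the right order.

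The gap is in the central estimate of~\eqref{K}. You bound $\int\bigl|\frac\zeta\nu\bigr||\zeta|\,d\tilde u=\int\bigl|\hatz\bigr|^2\,dr$ by $\bigl(\sup\bigl|\hatz\bigr|\bigr)^2\,(\ckrm-r_-+\eps)$, which requires a pointwise bound on $\bigl|\frac\zeta\nu\bigr|$ along the segment that is uniform in $v$. Under condition~\eqref{beta} alone no such bound is available: the estimate coming from~\eqref{field_7} is boundary decay $e^{-\alpha\vgr(u)}\leq e^{-\frac{\alpha}{1+\beta}v}$ times blue-shift amplification $e^{(-\partial_r(1-\mu)(r_--\eps_0,\varpi_0))\beta v}$ (this is~\eqref{zetaNuGamma}), and its exponent is negative only when $\beta(1+\beta)\bigl(-\partial_r(1-\mu)(r_-,\varpi_0)\bigr)<\partial_r(1-\mu)(r_+,\varpi_0)$ --- a factor of $2$ short of~\eqref{beta}. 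For the canonical admissible choice $\beta=-\partial_r(1-\mu)(r_+,\varpi_0)/\partial_r(1-\mu)(r_-,\varpi_0)$ the pointwise bound on $\frac\zeta\nu$ grows exponentially in $v$, so $\bigl(\sup\bigl|\hatz\bigr|\bigr)^2$ grows twice as fast and your bound on~\eqref{K} diverges as $v\to\infty$. Your ``coupled Gronwall'' for the pointwise quantities $\frac\zeta\nu$, $\frac\theta\lambda$ cannot produce the asserted threshold with the factor $2$, because in a purely pointwise scheme the amplification hits both factors of the quadratic quantity. The missing ingredient is the BV estimate of Proposition~13.2 of~\cite{Dafermos2} (inequality~\eqref{Big}), which controls the fluxes $\int|\theta|\,d\tilde v$ and $\int|\zeta|\,d\tilde u$ by their values on $\cg_{\ckrm}$ and on the ingoing segment with \emph{no} blue-shift amplification, so that they decay like $e^{-\frac{\alpha}{1+\beta^+}v}$ (this is~\eqref{int_theta_zeta}). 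The paper then estimates $\int\bigl|\frac\zeta\nu\bigr||\zeta|\leq\sup\bigl|\frac\zeta\nu\bigr|\cdot\int|\zeta|$, so the amplification enters only once while the boundary decay enters twice, yielding the exponent $\frac{2\alpha}{1+\beta^+}+\partial_r(1-\mu)(r_--\eps_0,\varpi_0)\beta$ and hence the full range~\eqref{beta}. You need this flux estimate (or some other mechanism immune to the amplification) to close the argument; a side remark is that the smallness you attribute to ``$\ckrm$ close to $r_-$'' is anyway unusable, since $\ckrm$ is fixed before $\eps$.
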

\begin{Cor}\label{cor-8}
Suppose that $\beta$ is given satisfying~\eqref{beta}, and let $\overline{\ckrm}$ and $\eps_0$ be as in the previous lemma. Fix $r_-<\ckrm\leq\overline{\ckrm}$ and $0<\eps<\eps_0$. If $(u,v)\in J^-(\gam)\cap J^+(\cg_{\ckrm})$ with $0<u\leq U_\eps$, then 
\begin{equation}
r(u,v)\geq r_--{\textstyle\frac\eps 2}\quad{\rm and}\quad\varpi(u,v)\leq\varpi_0+{\textstyle\frac\eps 2}.\label{r-baixo}
\end{equation}
\end{Cor}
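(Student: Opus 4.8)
The plan is to upgrade the conditional estimate of Lemma~\ref{boot_r}, whose hypothesis is precisely the bootstrap assumption~\eqref{r_omega}, into the unconditional bound~\eqref{r-baixo} by a continuity argument. The mechanism is the classical one: the improvement provided by the lemma, from $r\geq r_--\eps$ to $r\geq r_--\frac\eps2$, is \emph{strict}, so the threshold value $r=r_--\eps$ can never actually be attained, and the bootstrap assumption, which holds on $\cg_{\ckrm}$, propagates throughout the region. The structural input is the monotonicity recorded in Lemma~\ref{sign}: since $\lambda<0$ off the outgoing axis, along each constant-$u$ segment the map $v\mapsto r(u,v)$ is continuous and strictly decreasing. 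Using the parametrization~\eqref{small_gamma}, the region $J^-(\gam)\cap J^+(\cg_{\ckrm})$ is foliated by the segments $v\in[\vgr(u),(1+\beta)\,\vgr(u)]$, so it suffices to run the argument on one such segment and then let $u$ vary.

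First I would fix $u\in\,]0,U_\eps]$ and work on the single segment. At its past endpoint $v=\vgr(u)$ one has $r(u,\vgr(u))=\ckrm>r_-$, so~\eqref{r_omega} holds there with strict inequality. I would then set
$$
v_0=\inf\bigl\{v\in[\vgr(u),(1+\beta)\,\vgr(u)]:r(u,v)<r_--\eps\bigr\},
$$
with the convention that the infimum of the empty set is the future endpoint $(1+\beta)\,\vgr(u)$. If the set is empty there is nothing to prove on the segment. Otherwise, since $r$ is continuous and strictly decreasing in $v$, one has $v_0>\vgr(u)$ and $r(u,v_0)=r_--\eps$. But then $(u,v_0)\in J^-(\gam)\cap J^+(\cg_{\ckrm})$ with $0<u\leq U_\eps$ satisfies~\eqref{r_omega}, so Lemma~\ref{boot_r} applies at $(u,v_0)$ and yields $r(u,v_0)\geq r_--\frac\eps2$, contradicting $r(u,v_0)=r_--\eps$. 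Hence the set is empty and~\eqref{r_omega} holds on the whole segment.

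Since $u\in\,]0,U_\eps]$ was arbitrary, the inequality~\eqref{r_omega} holds throughout $J^-(\gam)\cap J^+(\cg_{\ckrm})$ with $0<u\leq U_\eps$. Applying Lemma~\ref{boot_r} once more at each such point then upgrades this to \emph{both} inequalities in~\eqref{r_omega_2}, namely $r\geq r_--\frac\eps2$ and $\varpi\leq\varpi_0+\frac\eps2$, which is exactly~\eqref{r-baixo}.

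I do not expect a serious obstacle here, since the genuine analytic content has already been absorbed into Lemma~\ref{boot_r}; what remains is the open/closed (equivalently, continuity) step. The only points requiring care are to confirm that the bootstrap gap is strict—it is the passage from $r\geq r_--\eps$ to $r\geq r_--\frac\eps2$ that renders the value $r=r_--\eps$ unattainable and thereby closes the argument—and to verify that the constant-$u$ segments genuinely fill $J^-(\gam)\cap J^+(\cg_{\ckrm})$, which follows from both $\cg_{\ckrm}$ and $\gam$ being spacelike graphs over $u$.
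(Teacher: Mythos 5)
Your proof is correct and is essentially the paper's argument: a continuity/bootstrap contradiction exploiting the strict gap between the hypothesis $r\geq r_--\eps$ and the conclusion $r\geq r_--\frac\eps 2$ of Lemma~\ref{boot_r}, followed by one more application of the lemma to extract both inequalities in~\eqref{r-baixo}. The only (immaterial) difference is that you locate the intermediate point by moving along constant-$u$ segments using the monotonicity of $r$ in $v$, whereas the paper moves along constant-$v$ segments using the monotonicity of $r$ in $u$.
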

\begin{proof}
On $\cg_{\ckrm}$ we have $r=\ckrm>r_->r_--{\textstyle\frac\eps 2}$.
Suppose that there exists a point $(u,v)\in J^-(\gam)\cap 
J^+(\cg_{\ckrm})$, with $0<u\leq U_\eps$, such that $r(u,v)<r_--\frac\eps 2$.
Then there exists a point $(\tilde u,v)$, with $0<\tilde u<u\leq U_\eps$, such that $r_--\eps\leq r(\tilde u,v)<r_--{\frac\eps 2}$.
The point $(\tilde u,v)$ belongs to $J^-(\gam)\cap 
J^+(\cg_{\ckrm})$.
Applying Lemma~\ref{boot_r} at the point $(\tilde u,v)$, we reach a contradiction.
The rest of the argument is immediate.
\end{proof}
\begin{proof}[Proof of\/ {\rm Lemma~\ref{boot_r}}] 
Let $(u,v)\in J^-(\gam)\cap 
J^+(\cg_{\ckrm})$ such that~\eqref{r_omega} holds.
Because of the monotonicity properties of $r$,
$$
\min_{J^-(u,v)\cap 
J^+(\cg_{\ckrm})}r\geq r_--\eps.
$$
According to Proposition~13.2 of~\cite{Dafermos2} (this result depends only on equations \eqref{theta_u} and \eqref{zeta_v}, and so does not depend on the presence of $\Lambda$), there exists a constant $\underline C$ (depending on $r_--\eps_0$) such that 
\begin{eqnarray}
&&\int_{\vgr(u)}^v|\theta|(u,\tilde v)\,d\tilde v+\int_{\ugr(v)}^u|\zeta|(\tilde u,v)\,d\tilde u\label{Big}\\
&&\qquad\leq\underline C\left(
\int_{\vgr(u)}^v|\theta|(\ugr(v),\tilde v)\,d\tilde v+\int_{\ugr(v)}^u|\zeta|(\tilde u,\vgr(u))\,d\tilde u
\right).\nonumber
\end{eqnarray}

\begin{center}
\begin{turn}{45}
\begin{psfrags}
\psfrag{a}{{\tiny $\vgr(u)$}}
\psfrag{y}{{\tiny $(\ug(v),\vgsd(\ug(v)))$}}
\psfrag{u}{{\tiny $v$}}
\psfrag{j}{{\tiny $\cg_{\ckrm}$}}
\psfrag{x}{{\tiny $\ug(v)$}}
\psfrag{f}{{\tiny \!\!\!\!$U$}}
\psfrag{d}{{\tiny \!\!$u$}}
\psfrag{v}{{\tiny $u$}}
\psfrag{b}{{\tiny \!\!\!\!\!\!\!\!\!\!\!\!\!\!\!\!\!$\ugr(v)$}}
\psfrag{c}{{\tiny $v$}}
\psfrag{p}{{\tiny $(u,v)$}}
\includegraphics[scale=.8]{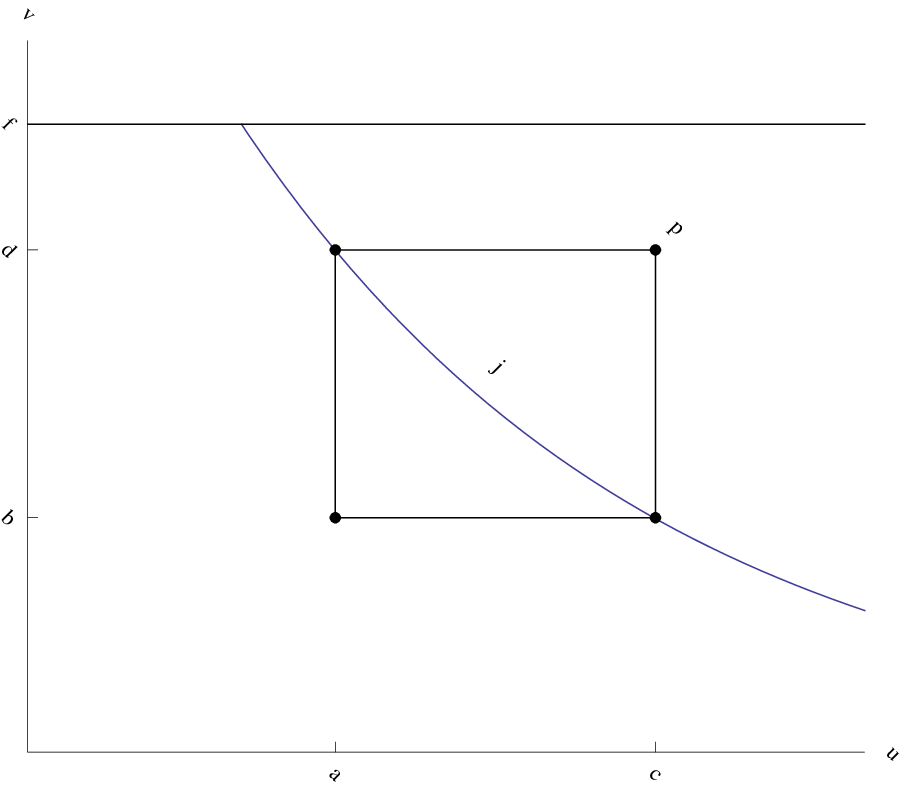}
\end{psfrags}
\end{turn}
\end{center}

The first integral on the right hand side of~\eqref{Big} can be estimated using~\eqref{theta_f}, \eqref{theta_ff} and~\eqref{small_gamma}:
\begin{eqnarray*}
\int_{\vgr(u)}^v|\theta|(\ugr(v),\tilde v)\,d\tilde v
&\leq& C\sup_{[0,u]}|\zeta_0|\int_{\vgr(u)}^ve^{-\alpha\tilde v}\,d\tilde v\\
&\leq&C\sup_{[0,u]}|\zeta_0|e^{-\alpha\vgr(u)}\beta\vgr(u)\\
&\leq&C\sup_{[0,u]}|\zeta_0|e^{-\,\frac\alpha{1+\beta}v}\beta v\\
&\leq&\tilde C\sup_{[0,u]}|\zeta_0|e^{-\,\frac\alpha{1+\beta^+}v},
\end{eqnarray*}
where we have used $\vgr(u)=\frac{\vgam(u)}{1+\beta}\geq\frac{v}{1+\beta}$ and $\vgr(u)\leq v$,
and denoted by $\beta^+$ a fixed number strictly greater than $\beta$.
The second integral on the right hand side of~\eqref{Big} can be estimated using~\eqref{z_0_bis}, \eqref{z_f} and~\eqref{small_gamma}:
\begin{eqnarray*}
\int_{\ugr(v)}^u|\zeta|(\tilde u,\vgr(u))\,d\tilde u&=&\int_{\ugr(v)}^u\Bigl|\frac\zeta\nu\Bigr|(-\nu)(\tilde u,\vgr(u))\,d\tilde u\nonumber\\
&\leq&C(r_+-\ckrm)\sup_{[0,u]}|\zeta_0|e^{-\alpha \vgr(u)}\nonumber\\
&\leq&\tilde C\sup_{[0,u]}|\zeta_0|e^{-\,\frac\alpha{1+\beta}v}.\nonumber
\end{eqnarray*}
These lead to the following estimate for the left hand side of~\eqref{Big}:
\begin{eqnarray}
\int_{\vgr(u)}^v|\theta|(u,\tilde v)\,d\tilde v+\int_{\ugr(v)}^u|\zeta|(\tilde u,v)\,d\tilde u\leq
C\sup_{[0,u]}|\zeta_0|e^{-\,\frac\alpha{1+\beta^+}v}.\label{int_theta_zeta}
\end{eqnarray}
In order to use~\eqref{field_7}, note that, using $\eta(r,\varpi) \leq \eta_0(r)$,
\begin{eqnarray*}
e^{-\int_{\vgr(u)}^{v}\bigl[\frac{\lambda}{1-\mu}\partial_r(1-\mu)\bigr](u,\tilde v)\,d\tilde v}&\leq&
e^{-\partial_r(1-\mu)(r_--\eps_0,\varpi_0)\beta\vgr(u)}\\ &\leq& e^{-\partial_r(1-\mu)(r_--\eps_0,\varpi_0)\beta v}.
\end{eqnarray*}
Thus,
\begin{eqnarray} 
\nonumber
\Bigl|\frac\zeta\nu\Bigr|(u,v)&\leq&\Bigl|\frac\zeta\nu\Bigr|(u,\vgr(u))
e^{-\int_{\vgr(u)}^{v}\bigl[\frac{\lambda}{1-\mu}\partial_r(1-\mu)\bigr](u,\tilde v)\,d\tilde v}
\\
\nonumber
&&+\int_{\vgr(u)}^{v}\frac{|\theta|} r(u,\bar v)e^{-\int_{\bar v}^{v}\bigl[\frac{\lambda}{1-\mu}
\partial_r(1-\mu)\bigr](u,\tilde{v})\,d\tilde{v}}\,d\bar v
\\
\nonumber
&\leq&C\sup_{[0,u]}|\zeta_0|e^{-\alpha \vgr(u)}e^{-\partial_r(1-\mu)(r_--\eps_0,\varpi_0)\beta v}
\\
\nonumber
&&+\frac{e^{-\partial_r(1-\mu)(r_--\eps_0,\varpi_0)\beta v}}{r_--\eps_0}\int_{\vgr(u)}^{v}|\theta|(u,\bar v)\,d\bar v
\\
\nonumber
&\leq&C\sup_{[0,u]}|\zeta_0|e^{-\frac\alpha{1+\beta} v}e^{-\partial_r(1-\mu)(r_--\eps_0,\varpi_0)\beta v}
\\
\nonumber
&&+\frac{e^{-\partial_r(1-\mu)(r_--\eps_0,\varpi_0)\beta v}}{r_--\eps_0}C\sup_{[0,u]}|\zeta_0|e^{-\,\frac\alpha{1+\beta^+}v}
\\
&\leq&C\sup_{[0,u]}|\zeta_0|e^{-\bigl(\frac\alpha{1+\beta^+}+\partial_r(1-\mu)(r_--\eps_0,\varpi_0)\beta\bigr) v}.\label{zetaNuGamma}
\end{eqnarray}
Clearly, the right hand side of the last inequality also bounds $\max_{\tilde u\in[\ugr(v),u]}\bigl|\frac\zeta\nu\bigr|(\tilde u,v)$.
In order to bound $\varpi(u,v)$, note that
\begin{eqnarray}
&&\int_{\ugr(v)}^u\Bigl[\Bigl|\frac\zeta\nu\Bigr||\zeta|\Bigr](\tilde u,v)\,d\tilde u\nonumber\\ 
&&\ \ \ \ \leq
C\sup_{[0,u]}|\zeta_0|e^{-\Bigl(\frac\alpha{1+\beta^+}+\partial_r(1-\mu)(r_--\eps_0,\varpi_0)\beta\Bigr) v}
\int_{\ugr(v)}^u|\zeta|(\tilde u,v)\,d\tilde u\nonumber\\
&&\ \ \ \ \leq C\bigl(\sup_{[0,u]}|\zeta_0|\bigr)^2e^{-\Bigl(\frac{2\alpha}{1+\beta^+}+\partial_r(1-\mu)(r_--\eps_0,\varpi_0)\beta\Bigr) v}.\label{K}
\end{eqnarray}
Using~\eqref{omega_final} and the last estimate, we get
\begin{eqnarray*}
\varpi(u,v)&\leq&
\varpi(\ugr(v),v)e^{\frac{1}{r_--\eps_0}\int_{\ugr(v)}^u\bigl[\bigl|\frac\zeta\nu\bigr||\zeta|\bigr](\tilde u,v)\,d\tilde u}\\
&&+C 
\int_{\ugr(v)}^ue^{\frac{1}{r_--\eps_0}\int_{s}^u\bigl[\bigl|\frac\zeta\nu\bigr||\zeta|\bigr](\tilde u,v)\,d\tilde u}
\Bigl[\Bigl|\frac\zeta\nu\Bigr||\zeta|\Bigr](s,v)\,ds\\
&\leq&\varpi(\ugr(v),v)e^{C\bigl(\sup_{[0,u]}|\zeta_0|\bigr)^2e^{-\Bigl(\frac{2\alpha}{1+\beta^+}+\partial_r(1-\mu)(r_--\eps_0,\varpi_0)\beta\Bigr) v}}\\
&&+Ce^{C\bigl(\sup_{[0,u]}|\zeta_0|\bigr)^2e^{-\Bigl(\frac{2\alpha}{1+\beta^+}+\partial_r(1-\mu)(r_--\eps_0,\varpi_0)\beta\Bigr) v}}\times\\
&&\qquad\qquad\qquad \times
\bigl(\sup_{[0,u]}|\zeta_0|\bigr)^2e^{-\Bigl(\frac{2\alpha}{1+\beta^+}+\partial_r(1-\mu)(r_--\eps_0,\varpi_0)\beta\Bigr) v}.
\end{eqnarray*}
Let $\delta>0$. 
Using the definition of $\alpha$ in~\eqref{alp_new}, the constant in the exponent
\begin{equation}\label{exponent}
\frac{2\alpha}{1+\beta+\delta}+\partial_r(1-\mu)(r_--\eps_0,\varpi_0)\beta
\end{equation}
is positive for 
\begin{equation}\label{success} 
\textstyle
\beta<\frac 12\left(
\sqrt{(1+\delta)^2-8
\frac{\mbox{\tiny $(\frac{\ckrp}{r_+})^{\hat\delta^2}$}\min_{r\in[\ckrp,r_+]}\partial_r(1-\mu)(r,\varpi_0)}{\partial_r(1-\mu)(r_--\eps_0,\varpi_0)}}-(1+\delta)
\right).
\end{equation}
Now, the right hand side tends to
$$
{\textstyle\frac 12\left(\sqrt{1-8\frac{\partial_r(1-\mu)(r_+,\varpi_0)}{\partial_r(1-\mu)(r_-,\varpi_0)}}-1\right)}
$$
as $(\ckrp,\eps_0,\delta)\to(r_+,0,0)$. So, if $\beta$ satisfies~\eqref{beta}, we may choose $\ckrp$, $\eps_0$ and $\delta$ such that~\eqref{success} holds.
Having done this, equations~\eqref{mass_2_bis} and~\eqref{region} now imply that for each $0 <\bar\eps < \eps_0$ there exists $\bar U_{\bar\eps} > 0$ such that
\begin{eqnarray*}
\varpi(u,v)
&\leq&\varpi_0+{\textstyle\frac{\bar\eps} 2},
\end{eqnarray*}
provided that  $u\leq\bar U_{\bar\eps}$.
Since $1-\mu$
is nonpositive and $1-\mu=(1-\mu)(r,\varpi_0)-\frac{2(\varpi-\varpi_0)}{r}$, we have
$$\textstyle
(1-\mu)(r(u,v),\varpi_0)\leq\frac{2(\varpi(u,v)-\varpi_0)}{r}\leq\frac{\bar\eps}{r_--\eps_0}.
$$
Hence, by inspection of the graph of $(1-\mu)(r,\varpi_0)$, there exists $\bar\eps_0$ such that for $0<\bar\eps\leq\bar\eps_0$, we have $r(u,v)>r_--\frac\eps 2$ provided that  $u\leq\bar U_{\bar\eps}$.
For $0<u\leq U_\eps:=\min\{\bar U_{\bar\eps_0},\bar U_\eps\}$, both inequalities~\eqref{r_omega_2} hold.
\end{proof}
\begin{Rmk}
 Given $\eps>0$, we may choose $U$ sufficiently small so that if $(u,v)\in J^-(\gam)\cap J^+(\cg_{\ckrm})$, then
\begin{equation}\label{H} 
 \kappa(u,v)\geq 1-\eps.
\end{equation}
This is a consequence of~\eqref{k_min_2} and~\eqref{K}, since $r$ is bounded away from zero.
\end{Rmk}

Consider the reference subextremal Reissner-Nordstr\"{o}m black hole with renormalized mass $\varpi_0$, charge parameter $e$ and cosmological constant $\Lambda$.
The next remark will turn out to be crucial in Part~3. 
\begin{Rmk}
Suppose that there exist positive constants $C$ and $s$ such that
$|\zeta_0(u)|\leq Cu^s$. Then, instead of choosing $\beta$ according to~\eqref{beta}, in\/ {\rm Lemma~\ref{boot_r}} we may choose
\begin{equation}\label{beta-s} 
0<\beta<{\textstyle\frac 12\left(\sqrt{1-8\frac{(1+s)\partial_r(1-\mu)(r_+,\varpi_0)}{\partial_r(1-\mu)(r_-,\varpi_0)}}-1\right)}.
\end{equation}
\end{Rmk}
\begin{proof}
Let $(u,v)\in J^-(\gam)\cap J^+(\cg_{\ckrm})$.
According to~\eqref{region}, we have
\begin{eqnarray}
u&\leq&Ce^{-[\partial_r(1-\mu)(r_+,\varpi_0)-\eps]\vgr(u)}\nonumber\\
&\leq&Ce^{-[\partial_r(1-\mu)(r_+,\varpi_0)-\eps]\frac v{1+\beta}}\nonumber\\
&\leq&Ce^{-\partial_r(1-\mu)(r_+,\varpi_0)\frac v{1+\beta^+}}.\label{region_bis}
\end{eqnarray}
Thus, the exponent in the upper bound for $\varpi$ in~\eqref{exponent} may be replaced by
$$
\frac{2s\partial_r(1-\mu)(r_+,\varpi_0)}{1+\beta+\delta}+
\frac{2\alpha}{1+\beta+\delta}+\partial_r(1-\mu)(r_--\eps_0,\varpi_0)\beta.
$$
This is positive for
\begin{equation}\label{success-s} 
\textstyle
\beta<\frac 12\left(
\sqrt{(1+\delta)^2-8
\frac{\mbox{\tiny $[(\frac{\ckrp}{r_+})^{\hat\delta^2}$}+s]\min_{r\in[\ckrp,r_+]}\partial_r(1-\mu)(r,\varpi_0)}{\partial_r(1-\mu)(r_--\eps_0,\varpi_0)}}-(1+\delta)
\right).
\end{equation}
Given $\beta$ satisfying~\eqref{beta-s}, we can guarantee that it satisfies the condition above by choosing $(\ckrp,\eps_0,\delta)$ sufficiently close to $(r_+,0,0)$.
\end{proof}
\begin{Cor}
If, for example, $|\zeta_0|(u)\leq e^{-1/{u^2}}$, then instead of choosing $\beta$ according to~\eqref{beta}, in\/ {\rm Lemma~\ref{boot_r}}
we may choose any positive $\beta$.
\end{Cor}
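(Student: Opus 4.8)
The plan is to reduce this corollary to the preceding remark by exploiting the super-polynomial decay of $e^{-1/u^2}$ at the origin. First I would record the elementary fact that for every $s>0$ there is a constant $C_s>0$ such that
$$
e^{-1/u^2}\leq C_s\,u^s\qquad\text{for all }u\in[0,U].
$$
Indeed, the map $u\mapsto u^{-s}e^{-1/u^2}$ is continuous on $\,]0,U]$ and tends to $0$ as $u\searrow 0$ (since $e^{-1/u^2}$ vanishes to infinite order at the origin), hence extends continuously to the compact interval $[0,U]$ and is therefore bounded there; one takes $C_s$ to be its supremum. Consequently the hypothesis $|\zeta_0|(u)\leq Cu^s$ of the preceding remark is satisfied, with $C=C_s$, for \emph{arbitrarily large} values of $s$.

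Next I would analyze how the admissible range for $\beta$ in~\eqref{beta-s} depends on $s$. Recall from the graph of $(1-\mu)(\,\cdot\,,\varpi_0)$ in Section~\ref{cans} (and from the footnote to~\eqref{beta}) that $\partial_r(1-\mu)(r_+,\varpi_0)>0$ while $\partial_r(1-\mu)(r_-,\varpi_0)<0$, so the ratio $\frac{\partial_r(1-\mu)(r_+,\varpi_0)}{\partial_r(1-\mu)(r_-,\varpi_0)}$ is negative. Therefore the quantity under the square root in~\eqref{beta-s},
$$
1-8\,\frac{(1+s)\,\partial_r(1-\mu)(r_+,\varpi_0)}{\partial_r(1-\mu)(r_-,\varpi_0)},
$$
has the form $1+(\text{positive constant})\times(1+s)$ and hence increases without bound as $s\to\infty$. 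It follows that the right-hand side of~\eqref{beta-s} tends to $+\infty$ as $s\to\infty$.

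Combining these two observations closes the argument: given any prescribed $\beta>0$, I would choose $s$ so large that
$$
\beta<{\textstyle\frac 12\left(\sqrt{1-8\frac{(1+s)\partial_r(1-\mu)(r_+,\varpi_0)}{\partial_r(1-\mu)(r_-,\varpi_0)}}-1\right)},
$$
that is, so that~\eqref{beta-s} holds, and then apply the preceding remark with this $s$ and with $C=C_s$, obtaining the conclusion of Lemma~\ref{boot_r} for the given $\beta$. The only point demanding any care—and it is mild—is the first step: one must ensure the polynomial domination $e^{-1/u^2}\leq C_s u^s$ holds uniformly on the whole interval $[0,U]$, not merely near $u=0$, which is precisely why the compactness of $[0,U]$ (equivalently, the boundedness of the continuous function $u^{-s}e^{-1/u^2}$) is invoked. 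Since the preceding remark imposes no restriction on the magnitude of the constant $C$, the possible growth of $C_s$ with $s$ is entirely harmless.
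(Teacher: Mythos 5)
Your proposal is correct and follows exactly the route the paper intends: the corollary is stated without proof immediately after the remark on $|\zeta_0|(u)\leq Cu^s$, and the evident argument is that $e^{-1/u^2}$ is dominated by $C_su^s$ on $[0,U]$ for every $s>0$, while the right-hand side of~\eqref{beta-s} tends to $+\infty$ as $s\to\infty$ because $\partial_r(1-\mu)(r_+,\varpi_0)>0$ and $\partial_r(1-\mu)(r_-,\varpi_0)<0$. Your observations that the domination must hold on all of $[0,U]$ (via compactness) and that the size of $C_s$ is irrelevant are both correct and complete the argument.
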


\begin{Lem}\label{lambda-and-nu}
Suppose that $\beta$ is given satisfying~\eqref{beta}.
Choose $\ckrm$ and $\eps_0$ as in the statement of\/ {\rm Lemma~\ref{boot_r}}.
Let $\gam$ be the curve parametrized by~\eqref{small_gamma}.
Let also $\delta>0$, $\beta^-<\beta$ and $\beta^+>\beta$.
There exist constants, $\tilde c$, $\tilde C$, $\overline c$ and $\overline C$, such that
 for  $(u,v)\in\gam$, with $0<u\leq U_{\eps_0}$, we have
 \begin{eqnarray} 
 &&
 \tilde ce^{(1+\delta)\partial_r(1-\mu)(r_--\eps_0,\varpi_0)\frac\beta{1+\beta}\,v}\label{lambda-below}\\
 &&\qquad\qquad\qquad\qquad\leq
  -\lambda(u,v)\leq\nonumber\\
  &&\qquad\qquad\qquad\qquad\qquad\qquad
  \tilde Ce^{(1-\delta)\partial_r(1-\mu)(\ckrm,\varpi_0)\frac\beta{1+\beta}\,v}\label{lambda-above}
 \end{eqnarray}
 and
  \begin{eqnarray}
 &&
 \overline 
 cu^{\mbox{\tiny$-\,\frac{1+\beta^+}{1+\beta^-}\frac{\partial_r(1-\mu)(r_--\eps_0,\varpi_0)}{\partial_r(1-\mu)(r_+,\varpi_0)}\beta$}\,-1}
 \nonumber\\ 
 &&\qquad\qquad\qquad\qquad\leq
  -\nu(u,v)\leq\nonumber\\
  &&\qquad\qquad\qquad\qquad\qquad\qquad
  \overline Cu^{\mbox{\tiny$-\,\frac{1+\beta^-}{1+\beta^+}\frac{\partial_r(1-\mu)(\ckrm,\varpi_0)}{\partial_r(1-\mu)(r_+,\varpi_0)}\beta$}\,-1}.
  \label{nu-above}
  \end{eqnarray}
\end{Lem}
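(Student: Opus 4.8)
The plan is to reduce both estimates to control of the single quantity $-(1-\mu)$ along $\gam$, together with the ratios $\frac{\lambda}{1-\mu}=\kappa$ and $\frac{\nu}{1-\mu}$, which are already under control. From \eqref{kappa_u} we have $\partial_u\kappa=\kappa\nu\frac1r\bigl(\frac\zeta\nu\bigr)^2\le 0$ and $\kappa(0,\cdot)=1$, so $\kappa\le 1$; combined with \eqref{H} this gives $\kappa\in[1-\eps,1]$ on $J^-(\gam)\cap J^+(\cg_{\ckrm})$, whence on $\gam$
$$
-\lambda=\kappa\,\bigl(-(1-\mu)\bigr),\qquad -\nu=\frac{\nu}{1-\mu}\,\bigl(-(1-\mu)\bigr),\qquad \kappa\in[1-\eps,1].
$$
The factor $\frac{\nu}{1-\mu}$ I would bound exactly as in \eqref{nu_mu_bis}: integrating the Raychaudhuri equation \eqref{ray_v_bis} from $v=0$ uses only that $\bigl|\frac\theta\lambda\bigr|$ is small and that $\int_0^v\frac\lambda r\,d\tilde v=\ln\frac{r(u,v)}{r_+}$ is bounded, which remains true up to $\gam$ since $r\ge r_--\tfrac\eps2$ by Corollary~\ref{cor-8}; this yields $\frac{\nu}{1-\mu}(u,v)\sim\frac1{\partial_r(1-\mu)(r_+,\varpi_0)\,u}$, with multiplicative errors tending to $1$ as $U\to 0$.

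The crux is the estimate of $-(1-\mu)$ on $\gam$ as a function of $v$, which I would obtain by tracking the radius along the vertical segment at fixed $u$ joining $\cg_{\ckrm}$ (where $r=\ckrm$ and $v=\vgr(u)$) to $\gam$ (where $v=(1+\beta)\vgr(u)$). There $\partial_v r=\lambda=\kappa\,(1-\mu)(r,\varpi)$, and I would write $(1-\mu)(r,\varpi)=(1-\mu)(r,\varpi_0)-\frac{2(\varpi-\varpi_0)}r$. Using $(1-\mu)(r_-,\varpi_0)=0$, the mean value theorem, the monotonicity of $\partial_r(1-\mu)(\cdot,\varpi_0)=-2\eta_0/\,\cdot^{2}$ read off from the graph of $\eta_0$ in Section~\ref{cans}, and the sign lemma to handle the thin layer $r<r_-$, the endpoints $r_--\eps_0$ and $\ckrm$ of the relevant $r$-range give
$$
\partial_r(1-\mu)(r_--\eps_0,\varpi_0)\,(r-r_-)\ \le\ (1-\mu)(r,\varpi_0)\ \le\ \partial_r(1-\mu)(\ckrm,\varpi_0)\,(r-r_-).
$$
The mass correction $\frac{2(\varpi-\varpi_0)}r$ is handled by Corollary~\ref{cor-8} and \eqref{mass_2_bis}; crucially, the choice \eqref{beta} of $\beta$ makes $\varpi-\varpi_0$ decay strictly faster in $v$ than $-(1-\mu)$ (its exponent \eqref{exponent} beats the rate of $1-\mu$), so this term is absorbed into a factor $1\pm\delta$ after possibly enlarging $\delta$. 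Feeding these bounds and $\kappa\in[1-\eps,1]$ into $\partial_v(r-r_-)=\kappa\,(1-\mu)(r,\varpi)$ gives a linear differential inequality for $r-r_-$, whose integration from $\vgr(u)$ yields
$$
(\ckrm-r_-)\,e^{(1+\delta)\partial_r(1-\mu)(r_--\eps_0,\varpi_0)(v-\vgr(u))}\ \le\ r(u,v)-r_-\ \le\ (\ckrm-r_-)\,e^{(1-\delta)\partial_r(1-\mu)(\ckrm,\varpi_0)(v-\vgr(u))}.
$$

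Finally I would assemble the two estimates. On $\gam$ one has $v-\vgr(u)=\beta\,\vgr(u)=\frac\beta{1+\beta}\,v$; linearizing $-(1-\mu)$ once more against $r-r_-$ and multiplying by $\kappa\in[1-\eps,1]$ turns the displayed radius bounds into \eqref{lambda-below}--\eqref{lambda-above}, all nested $1\pm\eps$, $1\pm\delta$ factors being absorbed by renaming $\delta$. For $\nu$, I would multiply the bound $\frac{\nu}{1-\mu}\sim\frac1{\partial_r(1-\mu)(r_+,\varpi_0)u}$ by the just-obtained bounds on $-(1-\mu)$, and then convert the exponential factor $e^{-k\frac\beta{1+\beta}v}$, with $k=-\partial_r(1-\mu)(r_--\eps_0,\varpi_0)$, respectively $k=-\partial_r(1-\mu)(\ckrm,\varpi_0)$, into a power of $u$ by means of \eqref{region} and \eqref{region_bis}, which give $v=\frac{-\ln u+O(1)}{\partial_r(1-\mu)(r_+,\varpi_0)}$ up to $\pm\eps$; the resulting exponent is $-1-\frac{\partial_r(1-\mu)(r_--\eps_0,\varpi_0)}{\partial_r(1-\mu)(r_+,\varpi_0)}\beta$ up to the correction factors $\frac{1+\beta^+}{1+\beta^-}$, yielding \eqref{nu-above}. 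The main obstacle is the middle step: closing the radius estimate requires showing that the mass correction (and, in the $(1-\mu)$-differentiated form, the Raychaudhuri term $\theta^2$) is negligible compared with $-(1-\mu)$ itself, and this smallness is exactly what the restriction \eqref{beta} on $\beta$ guarantees; the remaining difficulty is purely the bookkeeping of the $\eps$, $\delta$ and $\beta^\pm$ corrections when passing from $v$-exponentials to $u$-powers.
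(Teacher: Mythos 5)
Your overall architecture is genuinely different from the paper's: the paper never tries to control $1-\mu$ pointwise through $r-r_-$; it integrates $\partial_u\lambda=\frac{\nu}{1-\mu}\partial_r(1-\mu)\,\lambda$ in $u$ starting from $\cg_{\ckrm}$ (where \eqref{similar1} already pins down $\lambda$), and pins down $\int_{\ugr(v)}^u\frac{\nu}{1-\mu}\,d\tilde u$ by the identity \eqref{integral} (the integrals of $\nu$ and $-\lambda$ along the level set $\cg_{\ckrm}$ coincide) together with the Raychaudhuri equation. In that route the mass enters only through $\partial_r(1-\mu)(r,\varpi)=\partial_r(1-\mu)(r,\varpi_0)+\frac{2(\varpi-\varpi_0)}{r^2}$, i.e.\ as an $O(\eps)$ additive perturbation of a quantity bounded away from zero, so it is trivially absorbed.

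Your route has a genuine gap at exactly the step you flag as the crux. You need the mass correction $\frac{2(\varpi-\varpi_0)}{r}$ to be negligible relative to $-(1-\mu)(r,\varpi_0)\sim|\partial_r(1-\mu)(r_-,\varpi_0)|\,(r-r_-)$, and you justify this by saying that \eqref{beta} makes $\varpi-\varpi_0$ decay faster than $-(1-\mu)$. That is not what \eqref{beta} gives. Writing $x=-\partial_r(1-\mu)(r_+,\varpi_0)/\partial_r(1-\mu)(r_-,\varpi_0)>0$, condition \eqref{beta} is equivalent to $\beta^2+\beta<2x$, which is precisely what makes the exponent \eqref{exponent} positive, i.e.\ guarantees only that $\varpi-\varpi_0\to0$ along $\gam$ at rate $\frac{2\alpha}{1+\beta}-|\partial_r(1-\mu)(r_--\eps_0,\varpi_0)|\beta$. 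On $\gam$ your own formula makes $r-r_-$ decay at rate $\approx|\partial_r(1-\mu)(r_-,\varpi_0)|\frac{\beta}{1+\beta}$, so the comparison you need is $\frac{2\alpha}{1+\beta}-|\partial_r(1-\mu)(r_-,\varpi_0)|\beta>|\partial_r(1-\mu)(r_-,\varpi_0)|\frac{\beta}{1+\beta}$, i.e.\ $\beta^2+2\beta<2x$ — a strictly stronger restriction. For any $\beta$ with $\beta^2+\beta<2x\le\beta^2+2\beta$ (a nonempty range the lemma must cover), the available a priori bounds (\eqref{mass_2_bis}, \eqref{K}) allow the mass term to dominate $|\partial_r(1-\mu)|(r-r_-)$ near $\gam$, and the linear differential inequality for $r-r_-$ cannot be closed with $1\pm\delta$ multiplicative errors; note also that the source term $\sup_{[0,u]}|\zeta_0|$ carries no rate beyond continuity, so no slack is hiding there. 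A second, related problem: your displayed two-sided bound forces $r>r_-$ throughout $J^-(\gam)\cap J^+(\cg_{\ckrm})$, whereas the paper explicitly allows $r<r_-$ in this region (Corollary~\ref{cor-8} only gives $r\ge r_--\frac\eps2$); in the layer $r\le r_-$ the two terms of $-(1-\mu)(r,\varpi)=-(1-\mu)(r,\varpi_0)+\frac{2(\varpi-\varpi_0)}{r}$ have opposite signs and your linearization gives neither the claimed lower bound \eqref{lambda-below} nor the sign control needed to iterate the ODE. The decompositions $-\lambda=\kappa(-(1-\mu))$, $-\nu=\frac{\nu}{1-\mu}(-(1-\mu))$ and the conversion of $v$-exponentials to $u$-powers via \eqref{region} are fine; it is the pointwise control of $-(1-\mu)$ itself that your argument does not deliver for all $\beta$ satisfying \eqref{beta}.
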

\begin{proof}
Let us first outline the proof.
According to~\eqref{lambda_u} and~\eqref{nu_v},
\begin{eqnarray} 
-\lambda(u,v)&=&-\lambda(\ugr(v),v)e^{\int_{\ugr(v)}^{u}\bigl[\frac{\nu}{1-\mu}\partial_r(1-\mu)\bigr](\tilde u,v)\,d\tilde u},\label{l}\\
-\nu(u,v)&=&-\nu(u,\vgr(u))e^{\int_{\vgr(u)}^v\bigl[\kappa\partial_r(1-\mu)\bigr](u,\tilde v)\,d\tilde v}.\label{n}
\end{eqnarray}
In this region we cannot proceed as was done in the previous section because we cannot guarantee $1-\mu$ is bounded away from zero.
The idea now is to use these two equations to estimate $\lambda$ and $\nu$. For this we need to obtain lower and upper bounds for
\begin{equation}\label{int-1}
\int_{\ugr(v)}^u\frac\nu{1-\mu}(\tilde u,v)\,d\tilde u
\end{equation} 
and 
\begin{equation}\label{int-2}
\int_{\vgr(u)}^v\kappa(u,\tilde v)\,d\tilde v,
\end{equation}
when $(u,v)\in J^-(\gam)\cap J^+(\cg_{\ckrm})$.
The estimates for~\eqref{int-2}, and thus for $\nu$, are easy to obtain.
We estimate~\eqref{int-1} by comparing it with 
\begin{equation}\label{int-3}
\int_{\ugr(v)}^u\frac\nu{1-\mu}(\tilde u,\vgr(\tilde u))\,d\tilde u.
\end{equation}
Using~\eqref{ray_v_bis}, we see that~\eqref{int-1} is bounded above by~\eqref{int-3}. We can also bound \eqref{int-1} from below by \eqref{int-3}, divided by $1+\eps$, once we show that
$$
e^{\int_{\vgr(\bar u)}^v\bigl[\big|\frac\theta\lambda\bigr|\frac{|\theta|}{r}\bigr](\bar u,\tilde v)\,d\tilde v}\leq 1+\eps.
$$
The estimates for $\frac\theta\lambda$ are obtained via~\eqref{field_6} and via upper estimates for~\eqref{int-1}.
To bound~\eqref{int-3} we use the fact that the integrals of $\nu$ and $\lambda$ along $\cg_{\ckrm}$ coincide.

We start the proof by differentiating the equation
$$
r(\tilde u,\vgr(\tilde u))={\ckrm}
$$
with respect to $\tilde u$, obtaining
\begin{equation}
\nu(\tilde u,\vgr(\tilde u))+\lambda(\tilde u,\vgr(\tilde u))\vgr'(\tilde u)=0.\label{derivada}
\end{equation}
For $(u,v)\in J^+(\cg_{\ckrm})$,
integrating~\eqref{derivada} between $\ugr(v)$ and $u$, we get
$$
\int_{\ugr(v)}^u\nu(\tilde u,\vgr(\tilde u))\,d\tilde u+\int_{\ugr(v)}^u\lambda(\tilde u,\vgr(\tilde u))\vgr'(\tilde u)\,d\tilde u =0.
$$
By making the change of variables $\tilde v=\vgr(\tilde u)$, this last equation can be rewritten as
\begin{equation} 
 \int_{\ugr(v)}^u\nu(\tilde u,\vgr(\tilde u))\,d\tilde u-\int_{\vgr(u)}^v\lambda(\ugr(\tilde v),\tilde v)\,d\tilde v=0,
\label{integral}
\end{equation}
as $\vgr(\ugr(v))=v$ and $\frac{d\tilde v}{d\tilde u}=\vgr'(\tilde u)$.

We may bound the integral of $\lambda$ along $\cg_{\ckrm}$ in terms of the integral of $\kappa$ along $\cg_{\ckrm}$ in the following way:
\begin{eqnarray} 
&&-\max_{\cg_{\ckrm}}(1-\mu)
\int_{\vgr(u)}^v\kappa(\ugr(\tilde v),\tilde v)\,d\tilde v\label{F}\\
&&\qquad\qquad\leq -\int_{\vgr(u)}^v\lambda(\ugr(\tilde v),\tilde v)\,d\tilde v\leq\nonumber \\
&&\qquad\qquad\qquad\qquad-\min_{\cg_{\ckrm}}(1-\mu)
\int_{\vgr(u)}^v\kappa(\ugr(\tilde v),\tilde v)\,d\tilde v.\label{A}
\end{eqnarray}
Analogously, we may bound the integral of $\nu$ along $\cg_{\ckrm}$ in terms of the integral of $\frac{\nu}{1-\mu}$ along $\cg_{\ckrm}$ in the following way:
\begin{eqnarray} 
&&-\max_{\cg_{\ckrm}}(1-\mu)
\int_{\ugr(v)}^u\frac\nu{1-\mu}(\tilde u,\vgr(\tilde u))\,d\tilde u\label{B}\\
&&\qquad\ \ \leq-\int_{\ugr(v)}^u\nu(\tilde u,\vgr(\tilde u))\,d\tilde u\leq\nonumber \\
&&\qquad\qquad\qquad -\min_{\cg_{\ckrm}}(1-\mu)
\int_{\ugr(v)}^u\frac\nu{1-\mu}(\tilde u,\vgr(\tilde u))\,d\tilde u.\label{D}
\end{eqnarray}

Let now $(u,v)\in J^-(\gam)\cap J^+(\cg_{\ckrm})$. Using successively~\eqref{ray_v_bis},
\eqref{B}, \eqref{integral} and~\eqref{A}, we get
\begin{eqnarray} 
&&\int_{\ugr(v)}^u\frac\nu{1-\mu}(\tilde u,v)\,d\tilde u\nonumber\\
&&\qquad\qquad\leq\int_{\ugr(v)}^u\frac\nu{1-\mu}(\tilde u,\vgr(\tilde u))\,d\tilde u\nonumber\\
&&\qquad\qquad\leq \mbox{\tiny$\frac{1}{-\max_{\cg_{\ckrm}}(1-\mu)}$}\int_{\ugr(v)}^u-\nu(\tilde u,\vgr(\tilde u))\,d\tilde u\nonumber\\
&&\qquad\qquad= \mbox{\tiny$\frac{1}{-\max_{\cg_{\ckrm}}(1-\mu)}$}\int_{\vgr(u)}^v-\lambda(\ugr(\tilde v),\tilde v)\,d\tilde v\nonumber\\
&&\qquad\qquad\leq \mbox{\tiny$\frac{\min_{\cg_{\ckrm}}(1-\mu)}{\max_{\cg_{\ckrm}}(1-\mu)}$}\int_{\vgr(u)}^v\kappa(\ugr(\tilde v),\tilde v)\,d\tilde v\label{argue3}\\
&&\qquad\qquad\leq \mbox{\tiny$\frac{\min_{\cg_{\ckrm}}(1-\mu)}{\max_{\cg_{\ckrm}}(1-\mu)}$}\beta \vgr(u)\label{G}\\
&&\qquad\qquad \leq \mbox{\tiny$\frac{\min_{\cg_{\ckrm}}(1-\mu)}{\max_{\cg_{\ckrm}}(1-\mu)}$}\,\beta v.\nonumber
\end{eqnarray}

We can now bound the field $\frac\theta\lambda$ for $(u,v)\in J^-(\gam)\cap J^+(\cg_{\ckrm})$.
Using~\eqref{field_6},
\begin{eqnarray}
\Bigl|\frac\theta\lambda\Bigr|(u,v)&\leq &\Bigl|\frac\theta\lambda\Bigr|(\ugr(v),v)
e^{-\int_{\ugr(v)}^{u}\bigl[\frac{\nu}{1-\mu}\partial_r(1-\mu)\bigr](\tilde u,v)\,d\tilde u}\nonumber\\
&&+\int_{\ugr(v)}^{u}\frac{|\zeta|} r(\bar u,v)e^{-\int_{\bar u}^{u}\bigl[\frac{\nu}{1-\mu}\partial_r(1-\mu)\bigr](\tilde{u},v)\,d\tilde{u}}\,d\bar u.\label{C}
\end{eqnarray}
We can bound the exponentials in~\eqref{C} by
\begin{eqnarray*}
&&e^{-\int_{\bar u}^{u}\bigl[\frac{\nu}{1-\mu}\partial_r(1-\mu)\bigr](\tilde u,v)\,d\tilde u}\\
&&\qquad\qquad\leq e^{-\partial_r(1-\mu)(r_--\eps_0,\varpi_0)\int_{\bar u}^{u}\bigl[\frac{\nu}{1-\mu}\bigr](\tilde u,v)\,d\tilde u}\\
&&\qquad\qquad\leq e^{-\partial_r(1-\mu)(r_--\eps_0,\varpi_0)\int_{\ugr(v)}^{u}\bigl[\frac{\nu}{1-\mu}\bigr](\tilde u,v)\,d\tilde u}\\
&&\qquad\qquad\leq e^{-\partial_r(1-\mu)(r_--\eps_0,\varpi_0)\mbox{\tiny$\frac{\min_{\cg_{\ckrm}}(1-\mu)}{\max_{\cg_{\ckrm}}(1-\mu)}$}\beta v}.
\end{eqnarray*}
Combining this inequality with~\eqref{t-f}, \eqref{r-baixo} and~\eqref{int_theta_zeta}, 
leads to
\begin{eqnarray} 
\nonumber
&&\Bigl|\frac\theta\lambda\Bigr|(u,v)
\leq \left(C\sup_{[0,u]}|\zeta_0|e^{-\alpha v}+C\sup_{[0,u]}|\zeta_0|e^{-\frac\alpha{1+\beta^+} v}\right)\times
\\
\nonumber
&&\qquad\qquad\qquad\qquad\qquad\qquad\times
e^{-\partial_r(1-\mu)(r_--\eps_0,\varpi_0)\mbox{\tiny$\frac{\min_{\cg_{\ckrm}}(1-\mu)}{\max_{\cg_{\ckrm}}(1-\mu)}$}\beta v}\\
&&\qquad\qquad\leq C\sup_{[0,u]}|\zeta_0|e^{-\left(\frac\alpha{1+\beta^+}+
\partial_r(1-\mu)(r_--\eps_0,\varpi_0)\mbox{\tiny$\frac{\min_{\cg_{\ckrm}}(1-\mu)}{\max_{\cg_{\ckrm}}(1-\mu)}$}\beta\right)v}.\label{thetaLambdaGamma}
\end{eqnarray}
We consider the two possible cases. Suppose first that the exponent in~\eqref{thetaLambdaGamma} is nonpositive.
Then, from~\eqref{int_theta_zeta} we get
\begin{eqnarray*}
\int_{\vgr(u)}^v\Bigl[\Big|\frac\theta\lambda\Bigr||\theta|\Bigr](u,\tilde v)\,d\tilde v
&\leq& C \int_{\vgr(u)}^v|\theta|(u,\tilde v)\,d\tilde v
\\ &\leq&
C\sup_{[0,u]}|\zeta_0|e^{-\,\frac{\alpha}{1+\beta^+}v}.
\end{eqnarray*}
Suppose now the exponent in~\eqref{thetaLambdaGamma} is positive.
Using~\eqref{thetaLambdaGamma} and~\eqref{int_theta_zeta} again,
\begin{eqnarray*}
&&\int_{\vgr(u)}^v\Bigl[\Big|\frac\theta\lambda\Bigr||\theta|\Bigr](u,\tilde v)\,d\tilde v\\ 
&&\ \leq C\sup_{[0,u]}|\zeta_0|e^{-\left(\frac\alpha{1+\beta^+}+
\partial_r(1-\mu)(r_--\eps_0,\varpi_0)\mbox{\tiny$\frac{\min_{\cg_{\ckrm}}(1-\mu)}{\max_{\cg_{\ckrm}}(1-\mu)}$}\beta\right)v}\int_{\vgr(u)}^v|\theta|(u,\tilde v)\,d\tilde v
\\
&&\ \leq
C(\sup_{[0,u]}|\zeta_0|)^2e^{-\left(\frac{2\alpha}{1+\beta^+}+
\partial_r(1-\mu)(r_--\eps_0,\varpi_0)\mbox{\tiny$\frac{\min_{\cg_{\ckrm}}(1-\mu)}{\max_{\cg_{\ckrm}}(1-\mu)}$}\beta\right)v}.
\end{eqnarray*}
Therefore, in either case, given $\eps>0$ we may choose $U$ sufficiently small so that if $(u,v)\in J^-(\gam)\cap J^+(\cg_{\ckrm})$, then
\begin{equation}\label{integrating_factor} 
e^{\frac 1{r_--\eps_0}\int_{\vgr(\bar u)}^v\bigl[\big|\frac\theta\lambda\bigr||\theta|\bigr](\bar u,\tilde v)\,d\tilde v}\leq 1+\eps,
\end{equation}
for $\bar u\in[\ugr(v),u]$. 

Next we use~\eqref{ray_v_bis}, \eqref{B}, \eqref{D} and~\eqref{integrating_factor}.
We may bound the integral of $\nu$ along $\cg_{\ckrm}$ in terms of the integral of $\frac{\nu}{1-\mu}$ on the segment
$\bigl[\ugr(v),u\bigr]\times\{v\}$ in the following way:
\begin{eqnarray}
&&-\max_{\cg_{\ckrm}}(1-\mu)
\int_{\ugr(v)}^u\frac\nu{1-\mu}(\tilde u,v)\,d\tilde u\label{Q}\\
&&\leq-\max_{\cg_{\ckrm}}(1-\mu)
\int_{\ugr(v)}^u\frac\nu{1-\mu}(\tilde u,\vgr(\tilde u))\,d\tilde u\nonumber\\
&&\qquad\ \ \leq-\int_{\ugr(v)}^u\nu(\tilde u,\vgr(\tilde u))\,d\tilde u\leq\nonumber \\
&&\qquad\qquad\qquad-\min_{\cg_{\ckrm}}(1-\mu)
\int_{\ugr(v)}^u\frac\nu{1-\mu}(\tilde u,\vgr(\tilde u))\,d\tilde u\leq\nonumber\\
&&\qquad\qquad\qquad-(1+\eps)\min_{\cg_{\ckrm}}(1-\mu)
\int_{\ugr(v)}^u\frac\nu{1-\mu}(\tilde u,v)\,d\tilde u.\label{E}
\end{eqnarray}

Now we consider $(u,v)\in\gam$. 
In~\eqref{G} we obtained an upper bound for $\int_{\ugr(v)}^u\frac\nu{1-\mu}(\tilde u,v)\,d\tilde u$. Now we use~\eqref{E} to obtain a lower bound for this quantity.
Applying successively~\eqref{E}, \eqref{integral}, \eqref{F}, and~\eqref{H},
\begin{eqnarray} 
&&\int_{\ugr(v)}^u\frac\nu{1-\mu}(\tilde u,v)\,d\tilde u\nonumber\\
&&\qquad\qquad\geq \mbox{\tiny$\frac{1}{-(1+\eps)\min_{\cg_{\ckrm}}(1-\mu)}$}\int_{\ugr(v)}^u-\nu(\tilde u,\vgr(\tilde u))\,d\tilde u\nonumber\\
&&\qquad\qquad= \mbox{\tiny$\frac{1}{-(1+\eps)\min_{\cg_{\ckrm}}(1-\mu)}$}\int_{\vgr(u)}^v-\lambda(\ugr(\tilde v),\tilde v)\,d\tilde v\nonumber\\
&&\qquad\qquad\geq \mbox{\tiny$\frac{\max_{\cg_{\ckrm}}(1-\mu)}{(1+\eps)\min_{\cg_{\ckrm}}(1-\mu)}$}\int_{\vgr(u)}^v\kappa(\ugr(\tilde v),\tilde v)\,d\tilde v\label{argue}\\
&&\qquad\qquad\geq \mbox{\tiny$\frac{(1-\eps)\max_{\cg_{\ckrm}}(1-\mu)}{(1+\eps)\min_{\cg_{\ckrm}}(1-\mu)}$}\beta \vgr(u)\nonumber\\
&&\qquad\qquad = \mbox{\tiny$\frac{(1-\eps)}{(1+\eps)}\frac{\max_{\cg_{\ckrm}}(1-\mu)}{\min_{\cg_{\ckrm}}(1-\mu)}$}\,{\textstyle\frac\beta{1+\beta}} v.\nonumber
\end{eqnarray}
Thus,
\begin{eqnarray} 
&&e^{\int_{\ugr(v)}^{u}\bigl[\frac{\nu}{1-\mu}\partial_r(1-\mu)\bigr](\tilde u,v)\,d\tilde u}\nonumber\\
&&\qquad\qquad\leq
e^{\bigl[\max_{J^-(\gam)\cap J^+(\cg_{\ckrm})}\partial_r(1-\mu)\bigr]
\int_{\ugr(v)}^{u}\frac\nu{1-\mu}(\tilde u,v)\,d\tilde u}\nonumber\\
&&\qquad\qquad\leq
e^{\bigl[\max_{J^-(\gam)\cap J^+(\cg_{\ckrm})}\partial_r(1-\mu)\bigr]
\mbox{\tiny$\frac{(1-\eps)}{(1+\eps)}\frac{\max_{\cg_{\ckrm}}(1-\mu)}{\min_{\cg_{\ckrm}}(1-\mu)}$}\, \frac\beta{1+\beta} v}\nonumber\\
&&\qquad\qquad\leq
e^{\bigl[\partial_r(1-\mu)(\ckrm,\varpi_0)+\max_{J^-(\gam)\cap J^+(\cg_{\ckrm})}\mbox{\tiny$\frac{2(\varpi-\varpi_0)}{r^2}$}\bigr]
\mbox{\tiny$\frac{(1-\eps)}{(1+\eps)}\frac{\max_{\cg_{\ckrm}}(1-\mu)}{\min_{\cg_{\ckrm}}(1-\mu)}$}\,\frac\beta{1+\beta} v}\nonumber\\
&&\qquad\qquad\leq
e^{\bigl[\partial_r(1-\mu)(\ckrm,\varpi_0)+\mbox{\tiny$\frac\eps{(r_--\eps_0)^2}$}\bigr]
\mbox{\tiny$\frac{(1-\eps)}{(1+\eps)}\frac{\max_{\cg_{\ckrm}}(1-\mu)}{\min_{\cg_{\ckrm}}(1-\mu)}$}\, \frac{\beta}{1+\beta} v}.\label{I}
\end{eqnarray}
On the other hand, using~\eqref{G},
\begin{eqnarray}
&&e^{\int_{\ugr(v)}^{u}\bigl[\frac{\nu}{1-\mu}\partial_r(1-\mu)\bigr](\tilde u,v)\,d\tilde u}\nonumber\\
&&\qquad\qquad\geq
e^{\bigl[\min_{J^-(\gam)\cap J^+(\cg_{\ckrm})}\partial_r(1-\mu)\bigr]
\int_{\ugr(v)}^{u}\frac\nu{1-\mu}(\tilde u,v)\,d\tilde u}\nonumber\\
&&\qquad\qquad\geq
e^{\bigl[\min_{J^-(\gam)\cap J^+(\cg_{\ckrm})}\partial_r(1-\mu)\bigr]
\mbox{\tiny$\frac{\min_{\cg_{\ckrm}}(1-\mu)}{\max_{\cg_{\ckrm}}(1-\mu)}$}\, \frac\beta{1+\beta}v}\nonumber\\
&&\qquad\qquad\geq
e^{\partial_r(1-\mu)(r_--\eps_0,\varpi_0)
\mbox{\tiny$\frac{\min_{\cg_{\ckrm}}(1-\mu)}{\max_{\cg_{\ckrm}}(1-\mu)}$}\, \frac\beta{1+\beta}v}.\label{J}
\end{eqnarray}

We continue assuming $(u,v)\in\gam$. Taking into account~\eqref{similar1},
estimate \eqref{I} allows us to obtain an upper bound for $-\lambda(u,v)$,
\begin{eqnarray*}
-\lambda(u,v)&=&-\lambda(\ugr(v),v)e^{\int_{\ugr(v)}^{u}\bigl[\frac{\nu}{1-\mu}\partial_r(1-\mu)\bigr](\tilde u,v)\,d\tilde u}\\
&\leq& C
e^{\mbox{\tiny$\frac{(1-\eps)}{(1+\eps)}\frac{\max_{\cg_{\ckrm}}(1-\mu)}{\min_{\cg_{\ckrm}}(1-\mu)}$}
\bigl[\partial_r(1-\mu)(\ckrm,\varpi_0)+\mbox{\tiny$\frac\eps{(r_--\eps_0)^2}$}\bigr]
\,\frac\beta{1+\beta} v}\\
&\leq&\tilde Ce^{(1-\delta)\partial_r(1-\mu)(\ckrm,\varpi_0)\frac\beta{1+\beta}\,v},
\end{eqnarray*}
and estimate~\eqref{J} allows us to obtain a lower bound for $-\lambda(u,v)$,
\begin{eqnarray*}
-\lambda(u,v)&=&-\lambda(\ugr(v),v)e^{\int_{\ugr(v)}^{u}\bigl[\frac{\nu}{1-\mu}\partial_r(1-\mu)\bigr](\tilde u,v)\,d\tilde u}\\
&\geq &c
e^{\mbox{\tiny$\frac{\min_{\cg_{\ckrm}}(1-\mu)}{\max_{\cg_{\ckrm}}(1-\mu)}\partial_r(1-\mu)(r_--\eps_0,\varpi_0)\frac\beta{1+\beta}$}
\, v}\\
&\geq&\tilde ce^{(1+\delta)\partial_r(1-\mu)(r_--\eps_0,\varpi_0)\frac\beta{1+\beta}\,v}.
\end{eqnarray*}
Next, we turn to the estimates on $\nu$. Let, again, $(u,v)\in \gam$. Using~\eqref{H},
\begin{eqnarray*}
(1-\eps){\textstyle\frac\beta{1+\beta}}\,v\leq\int_{\vgr(u)}^v\kappa(u,\tilde v)\,d\tilde v\leq{\textstyle\frac\beta{1+\beta}}\,v.
\end{eqnarray*}
These two inequalities imply
\begin{eqnarray}
&&e^{\int_{\vgr(u)}^v[\kappa\partial_r(1-\mu)](u,\tilde v)\,d\tilde v}\nonumber\\
&&\qquad\qquad\leq
e^{\bigl[\max_{J^-(\gam)\cap J^+(\cg_{\ckrm})}\partial_r(1-\mu)\bigr]
\int_{\vgr(u)}^v\kappa(u,\tilde v)\,d\tilde v}\nonumber\\
&&\qquad\qquad\leq
e^{(1-\eps)\bigl[\partial_r(1-\mu)(\ckrm,\varpi_0)+\mbox{\tiny$\frac\eps{(r_--\eps_0)^2}$}\bigr]
\mbox{\tiny$\frac\beta{1+\beta}$}\, v}\label{L}
\end{eqnarray}
and
\begin{eqnarray}
&&e^{\int_{\vgr(u)}^v[\kappa\partial_r(1-\mu)](u,\tilde v)\,d\tilde v}\nonumber\\
&&\qquad\qquad\geq
e^{\bigl[\min_{J^-(\gam)\cap J^+(\cg_{\ckrm})}\partial_r(1-\mu)\bigr]
\int_{\vgr(u)}^v\kappa(u,\tilde v)\,d\tilde v}\nonumber\\
&&\qquad\qquad\geq
e^{\partial_r(1-\mu)(r_--\eps_0,\varpi_0)
\mbox{\tiny$\frac\beta{1+\beta}$}\, v}.\label{M}
\end{eqnarray}
We note that according to~\eqref{region} we have 
\begin{eqnarray} 
&&ce^{-\partial_r(1-\mu)(r_+,\varpi_0)\frac{v}{1+\beta^-}}\label{O}\\
&&\leq ce^{-[\partial_r(1-\mu)(r_+,\varpi_0)+\tilde\eps]\frac{v}{1+\beta}}\nonumber\\
&&= ce^{-[\partial_r(1-\mu)(r_+,\varpi_0)+\tilde\eps]\vgr(u)}\nonumber\\
&&\qquad\qquad\leq u\leq \nonumber\\
&&\qquad\qquad\qquad\qquad Ce^{-[\partial_r(1-\mu)(r_+,\varpi_0)-\tilde\eps]\vgr(u)}= \nonumber\\
&&\qquad\qquad\qquad\qquad Ce^{-[\partial_r(1-\mu)(r_+,\varpi_0)-\tilde\eps]\frac{v}{1+\beta}}\leq \nonumber\\
&&\qquad\qquad\qquad\qquad Ce^{-\partial_r(1-\mu)(r_+,\varpi_0)\frac{v}{1+\beta^+}},\label{P}
\end{eqnarray}
as $(u,v)\in \gam$. (The bound~\eqref{P} is actually valid in $J^-(\gam)\cap J^+(\cg_{\ckrm})$, see~\eqref{region_bis}.) Recalling~\eqref{n} and \eqref{similar2}, and using~\eqref{L} and~\eqref{O},
\begin{eqnarray*}
-\nu(u,v)&=&-\nu(u,\vgr(u))e^{\int_{\vgr(u)}^v[\kappa\partial_r(1-\mu)](u,\tilde v)\,d\tilde v}\\
&\leq&\frac{C}{u}e^{(1-\eps)\bigl[\partial_r(1-\mu)(\ckrm,\varpi_0)+\mbox{\tiny$\frac\eps{(r_--\eps_0)^2}$}\bigr]
\mbox{\tiny$\frac\beta{1+\beta}$}\, v}\\
&\leq&\frac{C}{u}e^{\partial_r(1-\mu)(\ckrm,\varpi_0)
\mbox{\tiny$\frac\beta{1+\beta^+}$}\, v}\\
&\leq&\overline Cu^{\mbox{\tiny$-\,\frac{1+\beta^-}{1+\beta^+}\frac{\partial_r(1-\mu)(\ckrm,\varpi_0)}{\partial_r(1-\mu)(r_+,\varpi_0)}\beta$}\,-1},
\end{eqnarray*}
whereas
using~\eqref{M} and~\eqref{P},
\begin{eqnarray*}
-\nu(u,v)&=&-\nu(u,\vgr(u))e^{\int_{\vgr(u)}^v[\kappa\partial_r(1-\mu)](u,\tilde v)\,d\tilde v}\\
&\geq&\frac{c}{u}e^{\partial_r(1-\mu)(r_--\eps_0,\varpi_0)
\mbox{\tiny$\frac\beta{1+\beta}$}\, v}\\
&\geq&\overline cu^{\mbox{\tiny$-\,\frac{1+\beta^+}{1+\beta^-}\frac{\partial_r(1-\mu)(r_--\eps_0,\varpi_0)}{\partial_r(1-\mu)(r_+,\varpi_0)}\beta$}\,-1}.
\end{eqnarray*}
\end{proof}

\begin{Rmk}
Since $-\partial_r(1-\mu)(r_-,\varpi_0)>\partial_r(1-\mu)(r_+,\varpi_0)$ (see\/ {\rm Appendix~A} of\/ {\rm Part~3}), we can make our choice of $\beta$ 
and other parameters ($\ckrm$, $\eps_0$, $U$)
so that 
$$\textstyle-\,\frac{1+\beta^-}{1+\beta^+}\frac{\partial_r(1-\mu)(\ckrm,\varpi_0)}{\partial_r(1-\mu)(r_+,\varpi_0)}\beta\,-1>0$$
and
$$\textstyle-\,\frac{1+\beta^+}{1+\beta^-}\frac{\partial_r(1-\mu)(r_--\eps_0,\varpi_0)}{\partial_r(1-\mu)(r_+,\varpi_0)}\beta\,-1>0.$$
Having done so, for $(u,v)$ on the curve $\gam$, we obtain
$$cu^{s_2}\leq-\nu(u,v)\leq Cu^{s_1},$$
with $0<s_1<s_2$. 
\end{Rmk}

\section{The region $J^+(\gam)$}\label{huge}

Using~\eqref{l} and~\eqref{n}, we wish to obtain upper bounds for $-\lambda$ and for $-\nu$ in the future of $\gam$ while $r$ is greater than or equal to $r_--\eps$. 
To do so, we partition this set into two regions, one where the mass is close to $\varpi_0$ and another one where
the mass is not close to $\varpi_0$. In the former case $\partial_r(1-\mu)<0$ and in the latter case $\frac{\partial_r(1-\mu)}{1-\mu}$ is bounded.
This information is used to bound the exponentials that appear in~\eqref{l} and~\eqref{n}.

Here the solution with general $\zeta_0$ departs qualitatively from the Reissner-Nordstr\"{o}m solution~\eqref{primeira}$-$\eqref{ultima}, but the radius function remains bounded away from zero, and approaches $r_-$ as $u \to 0$. This shows that the existence of a Cauchy horizon is a stable property when $\zeta_0$ is perturbed away from zero.
\begin{Lem}\label{ln-above}
Let $0<\eps_0<r_-$. There exists $0<\eps\leq\eps_0$ such that for $(u,v)\in \{r>r_--\varepsilon\}\cap J^+(\gam)$
we have
\begin{eqnarray} 
-\lambda(u,v)&\leq&Ce^{(1-\delta)\partial_r(1-\mu)(\ckrm,\varpi_0)\frac\beta{1+\beta}\,v},\label{up-l}\\
-\nu(u,v)
&\leq&Cu^{\mbox{\tiny$-\,\frac{1+\beta^-}{1+\beta^+}\frac{\partial_r(1-\mu)(\ckrm,\varpi_0)}{\partial_r(1-\mu)(r_+,\varpi_0)}\beta$}\,-1}.\label{up-n}
\end{eqnarray}
\end{Lem}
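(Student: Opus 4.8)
The plan is to propagate into $J^+(\gam)$ the bounds already obtained on $\gam$ in Lemma~\ref{lambda-and-nu}, using the representation formulas \eqref{l} and \eqref{n} re-based on $\gam$ instead of on $\cg_{\ckrm}$. Fix $(u,v)\in\{r>r_--\eps\}\cap J^+(\gam)$ with $u\leq U_{\eps_0}$. Integrating \eqref{lambda_u} along constant $v$ from the point $(\ugam(v),v)\in\gam$, and \eqref{nu_v} along constant $u$ from the point $(u,\vgam(u))\in\gam$ (where $\vgam(u)=(1+\beta)\vgr(u)$), I obtain
\[
-\lambda(u,v)=-\lambda(\ugam(v),v)\,e^{\int_{\ugam(v)}^{u}\bigl[\frac{\nu}{1-\mu}\partial_r(1-\mu)\bigr](\tilde u,v)\,d\tilde u},
\]
\[
-\nu(u,v)=-\nu(u,\vgam(u))\,e^{\int_{\vgam(u)}^{v}\bigl[\kappa\,\partial_r(1-\mu)\bigr](u,\tilde v)\,d\tilde v}.
\]
By \eqref{lambda-above} and \eqref{nu-above} the seed factors $-\lambda(\ugam(v),v)$ and $-\nu(u,\vgam(u))$ are already bounded by the right-hand sides of \eqref{up-l} and \eqref{up-n}. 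Hence everything reduces to showing that the two exponentials are bounded by a constant independent of $(u,v)$; since I only want upper bounds, it suffices to bound each exponent from above.

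Following the description at the start of the section, I would split the region along the level set $\{\varpi=\varpi_0+\varpi_*\}$ for a fixed threshold $\varpi_*>0$. Because $\varpi$ is nondecreasing in both $u$ and $v$ (Lemma~\ref{sign}), each integration interval decomposes into an initial sub-interval where $\varpi\leq\varpi_0+\varpi_*$ (mass close to $\varpi_0$) and a final sub-interval where $\varpi>\varpi_0+\varpi_*$. On $[r_--\eps_0,\ckrm]\subset(0,r_0)$ one has $\partial_r(1-\mu)(\,\cdot\,,\varpi_0)\leq-c_0<0$ (since $\eta_0>0$ there), so from $\partial_r(1-\mu)=\partial_r(1-\mu)(r,\varpi_0)+\frac{2(\varpi-\varpi_0)}{r^2}$ I can choose $\varpi_*$ small enough that $\partial_r(1-\mu)<0$ throughout the mass-close part. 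There $\frac{\nu}{1-\mu}>0$ and $\kappa>0$, so both integrands are negative and the mass-close part contributes nonpositively to each exponent.

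On the mass-large part the difficulty is that $\partial_r(1-\mu)\sim\frac{2\varpi}{r^2}$ is itself unbounded as $\varpi\to\infty$, so numerator and denominator of $\frac{\partial_r(1-\mu)}{1-\mu}$ must be handled together. The key is the algebraic cancellation
\[
\partial_r(1-\mu)+\frac{1-\mu}{r}=\frac1r-\frac{e^2}{r^3}-\Lambda r=:g(r),
\]
in which the $\varpi$-dependence drops out, giving $\frac{\partial_r(1-\mu)}{1-\mu}=-\frac1r+\frac{g(r)}{1-\mu}$. Since $g$ is bounded on $[r_--\eps,\ckrm]$, it remains only to keep $1-\mu$ bounded away from zero on the mass-large part: using $1-\mu=(1-\mu)(r,\varpi_0)-\frac{2(\varpi-\varpi_0)}{r}$ together with $(1-\mu)(r_-,\varpi_0)=0$, one checks that for $\varpi>\varpi_0+\varpi_*$ and $\eps$ chosen small (depending on $\varpi_*$) one has $1-\mu\leq-c_*<0$; this is exactly where the freedom ``there exists $0<\eps\leq\eps_0$'' is spent. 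Consequently $\bigl|\frac{\partial_r(1-\mu)}{1-\mu}\bigr|\leq K$ on the mass-large part, so its contribution to the first exponent is at most $K\int(-\nu)\,d\tilde u=K\,[r(\text{entry})-r(u,v)]\leq K(\ckrm-(r_--\eps))$, and likewise the contribution to the second exponent is at most $K\int(-\lambda)\,d\tilde v$, again controlled by the bounded range of $r$. Combining the two parts bounds each exponent, hence each exponential, by a constant; feeding this into the two displayed identities and invoking \eqref{lambda-above} and \eqref{nu-above} yields \eqref{up-l} and \eqref{up-n}. The main obstacle is precisely the mass-large estimate: the separate blow-up of $\partial_r(1-\mu)$ and $1-\mu$ forces one to exploit their cancellation and simultaneously to calibrate $\varpi_*$ and $\eps$ so that $1-\mu$ stays uniformly negative even for the values $r$ slightly below $r_-$ that are allowed in $\{r>r_--\eps\}$.
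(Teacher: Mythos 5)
Your proposal is correct and follows essentially the same route as the paper's proof: the same re-basing of \eqref{l} and \eqref{n} on $\gam$, the same splitting of each integration interval at the mass threshold (the paper uses $\varpi_*=\min_{[r_--\eps_0,\ckrm]}\eta_0$, making the mass-close contribution nonpositive and the mass-large contribution controlled by the total variation of $r$), and the same calibration of $\eps$ to keep $1-\mu$ uniformly negative where the mass is large. The only difference is cosmetic: you justify the boundedness of $\frac{\partial_r(1-\mu)}{1-\mu}$ via the explicit cancellation $\partial_r(1-\mu)+\frac{1-\mu}{r}=\frac1r-\frac{e^2}{r^3}-\Lambda r$, whereas the paper invokes the limit \eqref{omega_bigb} together with the denominators being bounded away from zero.
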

\begin{proof}
We recall that on $\gam$ the function $r$ is bounded above by $\ckrm$ and
that
$$
\eta=\eta_0+\varpi_0-\varpi.
$$
The minimum of $\eta_0$ in the interval $[r_--\eps_0,\ckrm]$ is positive, since $\eta_0(\ckrm)>0$.
If
\begin{equation}\label{varpi-m}
\varpi<\varpi_0+\min_{r\in [r_--\eps_0,\ckrm]}\eta_0(r)
\end{equation}
then clearly
\begin{equation}\label{eta-positive}
\eta>0.
\end{equation}
On the other hand, if 
\begin{equation}\label{varpi-p}
\varpi\geq\varpi_0+\min_{r\in [r_--\eps_0,\ckrm]}\eta_0(r)
\end{equation}
then, for $r\in [r_--\eps,\ckrm]$, 
$$
(1-\mu)(r,\varpi)\leq (1-\mu)(r,\varpi_0)-\,\frac{2\min_{r\in [r_--\eps_0,\ckrm]}\eta_0(r)}{\ckrm},
$$
where we used
$$
(1-\mu)(r,\varpi)=(1-\mu)(r,\varpi_0)+\frac{2(\varpi_0-\varpi)}{r}.
$$
Choosing $0<\eps\leq\eps_0$ such that
$$
\max_{r\in[r_--\eps,\ckrm]}(1-\mu)(r,\varpi_0)\leq \frac{\min_{r\in [r_--\eps_0,\ckrm]}\eta_0(r)}{\ckrm}
$$
we have
\begin{equation}\label{mu-negative}
(1-\mu)(r,\varpi)\leq -\,\frac{\min_{r\in [r_--\eps_0,\ckrm]}\eta_0(r)}{\ckrm}<0.
\end{equation}
In case~\eqref{varpi-m} we have (recall~\eqref{eta})
$$
\frac{\nu}{1-\mu}\partial_r(1-\mu)<0\qquad{\rm and}\qquad \frac{\lambda}{1-\mu}\partial_r(1-\mu)<0.
$$
In case~\eqref{varpi-p}, the absolute value of 
$$ 
-\,\frac{1}{1-\mu}\partial_r(1-\mu)
$$ 
is bounded, say by $C$. Indeed, this is a consequence of two facts: (i) the denominators $1-\mu$ and $r$ are bounded away from zero (we recall
$\eta$ also has a denominator equal to $r$);
(ii) the equality  
\begin{equation}\label{omega_bigb}
\lim_{\omega\rightarrow+\infty}-\,\frac{1}{1-\mu}\partial_r(1-\mu)=\frac 1r.
\end{equation}
We define
\begin{eqnarray*}
\Pi_v&=&
\biggl\{u\in\,]0,U]:(u,v)\in \{r>r_--\varepsilon\}\cap J^+(\gam)\ \\
&&\qquad\qquad\qquad\qquad{\rm and}\ \varpi(u,v)<\varpi_0+\min_{r\in [r_--\eps_0,\ckrm]}\eta_0(r)\biggr\},
\end{eqnarray*}
\begin{eqnarray*}
\Pi^v&=&
\biggl\{u\in\,]0,U]:(u,v)\in \{r>r_--\varepsilon\}\cap J^+(\gam)\ \\
&&\qquad\qquad\qquad\qquad{\rm and}\ \varpi(u,v)\geq\varpi_0+\min_{r\in [r_--\eps_0,\ckrm]}\eta_0(r)\biggr\},
\end{eqnarray*}
\begin{eqnarray*}
\tilde\Pi_u&=&
\biggl\{v\in\,]0,\infty[\,:(u,v)\in \{r>r_--\varepsilon\}\cap J^+(\gam)\ \\
&&\qquad\qquad\qquad\qquad{\rm and}\ \varpi(u,v)<\varpi_0+\min_{r\in [r_--\eps_0,\ckrm]}\eta_0(r)\biggr\}
\end{eqnarray*}
and
\begin{eqnarray*}
\tilde\Pi^u&=&
\biggl\{v\in\,]0,\infty[\,:(u,v)\in \{r>r_--\varepsilon\}\cap J^+(\gam)\ \\
&&\qquad\qquad\qquad\qquad{\rm and}\ \varpi(u,v)\geq\varpi_0+\min_{r\in [r_--\eps_0,\ckrm]}\eta_0(r)\biggr\}.
\end{eqnarray*}
In order to estimate $\lambda$, we observe that
\begin{eqnarray*}
&&e^{\int_{\ugam(v)}^{u}\bigl[\frac{\nu}{1-\mu}\partial_r(1-\mu)\bigr](\tilde u,v)\,d\tilde u}\\
&&\qquad\qquad= e^{\int_{\tilde u\in[\ugam(v),u]\cap\Pi_v}\bigl[\frac{\nu}{1-\mu}\partial_r(1-\mu)\bigr](\tilde u,v)\,d\tilde u}\\
&&\qquad\qquad\ \ \ \,\times e^{\int_{\tilde u\in[\ugam(v),u]\cap\Pi^v}\bigl[\frac{\nu}{1-\mu}\partial_r(1-\mu)\bigr](\tilde u,v)\,d\tilde u}\\
&&\qquad\qquad\leq 1\times e^{C\int_{\tilde u\in[\ugam(v),u]\cap\Pi^v}(-\nu)(\tilde u,v)\,d\tilde u}\\
&&\qquad\qquad\leq 1\times e^{C(\ckrm-(r_--\eps))}=:\hat C.
\end{eqnarray*}
Similarly, to estimate $\nu$ we note that
\begin{eqnarray*}
&&e^{\int_{\vgam(u)}^{v}\bigl[\frac{\lambda}{1-\mu}\partial_r(1-\mu)\bigr](u,\tilde v)\,d\tilde v}\\
&&\qquad\qquad= e^{\int_{\tilde v\in[\vgam(u),v]\cap\tilde\Pi_u}\bigl[\frac{\lambda}{1-\mu}\partial_r(1-\mu)\bigr](u,\tilde v)\,d\tilde v}\\
&&\qquad\qquad\ \ \ \,\times e^{\int_{\tilde v\in[\vgam(u),v]\cap\tilde\Pi^u}\bigl[\frac{\lambda}{1-\mu}\partial_r(1-\mu)\bigr](u,\tilde v)\,d\tilde v}\\
&&\qquad\qquad\leq 1\times e^{C\int_{\tilde v\in[\vgam(u),v]\cap\tilde\Pi^u}(-\lambda)(u,\tilde v)\,d\tilde v}\\
&&\qquad\qquad\leq 1\times e^{C(\ckrm-(r_--\eps))}=\hat C.
\end{eqnarray*} 
In conclusion, let $(u,v)\in \{r>r_--\varepsilon\}\cap J^+(\cg_{\ckrm})$. Using~\eqref{l} and~\eqref{lambda-above}, we have
\begin{eqnarray} 
-\lambda(u,v)&=&
-\lambda(\ugam(v),v)e^{\int_{\ugam(v)}^{u}\bigl[\frac{\nu}{1-\mu}\partial_r(1-\mu)\bigr](\tilde u,v)\,d\tilde u}\label{lambda_above}\\
&\leq&\hat C
\tilde Ce^{(1-\delta)\partial_r(1-\mu)(\ckrm,\varpi_0)\frac\beta{1+\beta}\,v}.\nonumber
\end{eqnarray}
Similarly, using~\eqref{n} and~\eqref{nu-above}, we have
\begin{eqnarray*}
-\nu(u,v)&=&
-\nu(u,\vgam(u))e^{\int_{\vgam(u)}^{v}\bigl[\frac{\lambda}{1-\mu}\partial_r(1-\mu)\bigr](u,\tilde v)\,d\tilde v}\\
&\leq&\hat C
\overline Cu^{\mbox{\tiny$-\,\frac{1+\beta^-}{1+\beta^+}\frac{\partial_r(1-\mu)(\ckrm,\varpi_0)}{\partial_r(1-\mu)(r_+,\varpi_0)}\beta$}\,-1}.
\end{eqnarray*}
\end{proof}

\begin{Lem}\label{infty}
 Let $\delta>0$. There exists $\tilde U_\delta$ such that for $(u,v)\in J^+(\gam)$ with $u<\tilde U_\delta$, we have
 $$
 r(u,v)>r_--\delta.
 $$
 \end{Lem}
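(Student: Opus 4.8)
The plan is to control the total decrease of $r$ to the future of $\gam$ by integrating the upper bound for $-\lambda$ furnished by Lemma~\ref{ln-above}, starting from the value of $r$ on $\gam$, which Corollary~\ref{cor-8} keeps close to $r_-$. Since the conclusion for a given $\delta$ is weaker the larger $\delta$ is, I would first reduce to the case $\delta<\eps$, where $\eps$ (with $\eps\leq\eps_0$) is the constant provided by Lemma~\ref{ln-above}; proving the statement for arbitrarily small $\delta$ yields it for all $\delta$. Then I would fix $(u,v)\in J^+(\gam)$ with $u$ small. Since $\gam$ is parametrized by~\eqref{small_gamma} with $\vgam$ decreasing, such a point satisfies $v\geq\vgam(u)$, and because $r$ decreases in $v$,
$$
r(u,v)=r(u,\vgam(u))-\int_{\vgam(u)}^v(-\lambda)(u,\tilde v)\,d\tilde v .
$$

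For the boundary term I would invoke Corollary~\ref{cor-8} with its parameter chosen equal to $\delta$ (legitimate since $\delta<\eps\leq\eps_0$): there is $U_\delta>0$ with $r(u,\vgam(u))\geq r_--\frac\delta2$ for $0<u\leq U_\delta$. For the integral I would use~\eqref{up-l}, which is an exponential decay $-\lambda(u,\tilde v)\leq Ce^{-k\tilde v}$ for a fixed rate $k>0$; positivity of $k$ follows from $\ckrm\in\,]r_-,r_0[$, which gives $\partial_r(1-\mu)(\ckrm,\varpi_0)<0$. Consequently, whenever this bound is available along the segment,
$$
\int_{\vgam(u)}^v(-\lambda)(u,\tilde v)\,d\tilde v\leq\frac Ck\,e^{-k\vgam(u)} .
$$
The key observation is that $\ugr(\infty)=0$ forces $\vgr(u)\to\infty$, hence $\vgam(u)=(1+\beta)\vgr(u)\to\infty$, as $u\searrow0$; therefore this tail tends to $0$ and can be made smaller than $\frac\delta2$ by taking $u$ small.

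The step that requires care — and which I expect to be the main obstacle — is that~\eqref{up-l} holds only on $\{r>r_--\eps\}\cap J^+(\gam)$, so it cannot be integrated blindly up to $v$. To close this I would run a continuity (bootstrap) argument along the ray $\{u\}\times[\vgam(u),\infty[$\,. Let $v^*$ be the supremum of the $V$ for which $r(u,\tilde v)>r_--\eps$ for all $\tilde v\in[\vgam(u),V]$; the starting inequality $r(u,\vgam(u))\geq r_--\frac\delta2>r_--\eps$ together with continuity gives $v^*>\vgam(u)$. On $[\vgam(u),v^*)$ the bound~\eqref{up-l} is valid, so the two estimates above yield
$$
r(u,\tilde v)\geq\Bigl(r_--{\textstyle\frac\delta2}\Bigr)-{\textstyle\frac\delta2}=r_--\delta>r_--\eps .
$$
By continuity this strict inequality persists at $\tilde v=v^*$, contradicting $r(u,v^*)=r_--\eps$ unless $v^*$ is the endpoint of the ray; hence $r(u,\tilde v)>r_--\delta$ for all $\tilde v\in[\vgam(u),v]$, and in particular $r(u,v)>r_--\delta$.

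Finally I would take $\tilde U_\delta>0$ small enough that $\tilde U_\delta\leq U_\delta$ and $\frac Ck e^{-k\vgam(u)}<\frac\delta2$ for every $u<\tilde U_\delta$, which is possible by the second paragraph. An essentially equivalent route would instead integrate the bound~\eqref{up-n} for $-\nu$ along lines of constant $v$ from the point $(\ugam(v),v)\in\gam$, where the smallness of the decrement comes directly from a positive power of $u$ rather than from $\vgam(u)\to\infty$; but the same bootstrap confining the solution to $\{r>r_--\eps\}$ would still be needed, so I regard that confinement as the essential difficulty in either approach.
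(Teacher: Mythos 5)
Your proposal is correct and follows essentially the same route as the paper, which integrates the decay estimates of Lemma~\ref{ln-above} from the curve $\gam$ (where Corollary~\ref{cor-8} keeps $r\geq r_--\frac\delta2$) and restricts to $\delta\leq\overline\eps$ so that the conclusion $r>r_--\delta$ self-consistently confines the solution to the region $\{r>r_--\overline\eps\}$ where those estimates hold; the paper's primary computation integrates $-\nu$ in $u$ and presents your $-\lambda$ integration as the alternative, and it converts $e^{-q\vgam(u)}$ into a power of $u$ via~\eqref{O} rather than invoking $\vgam(u)\to\infty$ directly, but these are cosmetic differences. Your explicit bootstrap along the ray is a more careful rendering of the confinement step that the paper dispatches in one sentence.
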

\begin{proof}
We denote by $\overline\eps$ the value of $\eps$ that is provided in Lemma~\ref{ln-above}.
Let $\delta>0$. Without loss of generality, we assume that $\delta$ is less than or equal to $\overline\eps$.
Choose the value of $\eps$ in Corollary~\ref{cor-8} equal to $\delta$. This determines an $U_\eps$ as in the statement of that corollary.
Let $(u,v)\in J^+(\gam)$ with $u\leq U_\eps$.
Then
 $$
 r(u,\vgam(u))\geq r_--\frac\delta 2\qquad{\rm and}\qquad r(\ugam(v),v)\geq r_--\frac\delta 2
 $$
because $\ugam(v)\leq u$.
Here
$$
u\mapsto (u,\vgam(u))\quad {\rm and}\quad v\mapsto (\ugam(v),v)
$$
are parameterizations of the spacelike curve $\gam$.
Integrating~\eqref{up-n}, we obtain
\begin{equation}\label{ok}
-\int_{\ugam(v)}^u\frac{\partial r}{\partial u}(s,v)\,ds\leq
\int_{\ugam(v)}^uCs^{p-1}\,ds,
\end{equation}
for a positive $p$. This estimate is valid for $(u,v)\in \{r>r_--\overline\eps\}\cap J^+(\gam)$. It yields
\begin{eqnarray}
r(u,v)&\geq& r(\ugam(v),v)-\frac Cp(u^p-(\ugam(v))^p)\nonumber\\
&\geq&r_--\frac\delta 2-\frac Cpu^p\ >\ r_--\delta,\label{r-below}
\end{eqnarray}
provided $u<\min\Bigl\{U_\eps,\sqrt[p]{\frac{\delta p}{2C}}\Bigr\}=:\tilde U_\delta$. 
Since $\delta$ is less than or equal to $\overline\eps$ and $\gam\subset \{r>r_--\overline\eps\}$, if $(u,v)\in J^+(\gam)$ and $u<\tilde U_\delta$, then
$(u,v)\in \{r>r_--\overline\eps\}$ and 
the estimate~\eqref{ok} does indeed apply.

Alternatively, we can obtain~\eqref{r-below} integrating~\eqref{up-l}:
$$
-\int_{\vgam(u)}^v\frac{\partial r}{\partial v}(u,s)\,ds\leq
\int_{\vgam(u)}^vCe^{-qs}\,ds,
$$
for a positive $q$. This yields
\begin{eqnarray*}
r(u,v)&\geq& r(u,\vgam(u))-\frac Cq\Bigl(e^{-q\vgam(u)}-e^{-qv}\Bigr)\\
&\geq&r_--\frac\delta 2-\frac Cqe^{-q\vgam(u)}\\ &\geq&\ r_--\frac\delta 2-{\tilde C}u^{\tilde q},
\end{eqnarray*}
for a positive $\tilde q$, according to~\eqref{O}. For $u<\min\Bigl\{U_\eps,\sqrt[\tilde q]{\frac{\delta}{2\tilde C}}\Bigr\}$
we obtain, once more,
$$
r(u,v)>r_--\delta.
$$
\end{proof}

\begin{Cor}
If $\delta<r_-$ then ${\cal P}$ contains $[0,\tilde U_\delta]\times[0,\infty[$. Moreover, estimates~\eqref{up-l} and~\eqref{up-n} hold on~$J^+(\gam)$.
\end{Cor}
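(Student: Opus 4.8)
The plan is to establish the two assertions separately. For the inclusion I would argue by the breakdown criterion of Theorem~\ref{bdown}, using the lower bounds on $r$ obtained in Corollary~\ref{cor-8} and Lemma~\ref{infty} to rule out any interior boundary point of ${\cal P}$. For the estimates I would combine the inclusion with Lemma~\ref{infty} to place the relevant portion of $J^+(\gam)$ inside the set $\{r>r_--\overline\eps\}\cap J^+(\gam)$ on which Lemma~\ref{ln-above} already provides~\eqref{up-l} and~\eqref{up-n}.

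For the first assertion, fix $\delta<r_-$. Recall that $\tilde U_\delta$ (from Lemma~\ref{infty}, obtained with the choice $\eps=\delta$ in Corollary~\ref{cor-8}) satisfies $\tilde U_\delta\leq U_\delta$, so that Corollary~\ref{cor-8} is applicable for $u\leq\tilde U_\delta$. I would first check that $r$ is bounded below by a positive constant on the part of ${\cal P}$ inside the strip $[0,\tilde U_\delta]\times[0,\infty[$\,: a vertical segment $\{u\}\times[0,\infty[$ meets the three regions $J^-(\cg_{\ckrm})$, $J^-(\gam)\cap J^+(\cg_{\ckrm})$ and $J^+(\gam)$ in succession (since $\cg_{\ckrm}$ and $\gam$ are spacelike graphs over $u$), and on these one has respectively $r\geq\ckrm>r_-$, then $r\geq r_--\frac{\delta}{2}$ by Corollary~\ref{cor-8}, and finally $r>r_--\delta$ by Lemma~\ref{infty}; as $\delta<r_-$, in all cases $r>r_--\delta>0$. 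Suppose now, for contradiction, that for some $u\in\,]0,\tilde U_\delta]$ the segment $\{u\}\times[0,\infty[$ is not entirely contained in ${\cal P}$. Then $V^*(u):=\sup\{v:(u,v)\in{\cal P}\}$ is finite, and since $(u,0)\in{\cal P}$ by Theorem~\ref{maximal} we have $V^*(u)>0$; thus $(u,V^*(u))$ is a boundary point of ${\cal P}$ with $0<u<\myumax$ and positive $v$-coordinate. Theorem~\ref{bdown} then forces $r\to0$ along any sequence in ${\cal P}$ converging to this point, contradicting the lower bound $r>r_--\delta>0$. Hence $\{u\}\times[0,\infty[\,\subset{\cal P}$ for every such $u$, and adjoining the $v$-axis gives $[0,\tilde U_\delta]\times[0,\infty[\,\subset{\cal P}$.

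For the second assertion, let $\overline\eps$ denote the value of $\eps$ produced by Lemma~\ref{ln-above}, so that~\eqref{up-l} and~\eqref{up-n} hold on $\{r>r_--\overline\eps\}\cap J^+(\gam)$. Applying the first part with $\delta\leq\overline\eps$ (still $\delta<r_-$), we know that ${\cal P}\supset[0,\tilde U_\delta]\times[0,\infty[$, while Lemma~\ref{infty} gives $r(u,v)>r_--\delta\geq r_--\overline\eps$ for all $(u,v)\in J^+(\gam)$ with $u<\tilde U_\delta$. Every such point therefore lies in $\{r>r_--\overline\eps\}\cap J^+(\gam)$, and so~\eqref{up-l} and~\eqref{up-n} hold there, as claimed.

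The main obstacle I anticipate is the correct deployment of Theorem~\ref{bdown}: one must verify that a finite $V^*(u)$ genuinely produces an interior boundary point (with $0<u<\myumax$ and $v$-coordinate bounded away from $0$, so that the criterion applies rather than degenerating onto an axis or the corner at $(0,\infty)$), and that the piecewise lower bound on $r$ really covers the entire segment $\{u\}\times[0,V^*(u)]$. The latter rests on the earlier fact that $\cg_{\ckrm}$ and $\gam$ are connected spacelike $C^1$ curves partitioning the strip, so that each point of ${\cal P}$ with $u<\tilde U_\delta$ falls into exactly one of the three regions treated above; the edge case $\delta>\overline\eps$ is harmless, since there the weaker bound $r>r_--\overline\eps>0$ already suffices.
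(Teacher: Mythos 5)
Your proof is correct and is precisely the argument the paper leaves implicit (no proof is printed for this corollary): the piecewise lower bound $r>r_--\delta>0$ on the three regions, the breakdown criterion of Theorem~\ref{bdown} to exclude a boundary point $(u,V^*(u))$ with $V^*(u)<\infty$, and then Lemma~\ref{infty} to place $J^+(\gam)$ inside $\{r>r_--\overline\eps\}$ where Lemma~\ref{ln-above} applies. The edge cases you flag (the boundary point landing strictly inside $]0,\myumax[\times]0,\infty[$, and $\delta$ versus $\overline\eps$) are handled correctly.
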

Due to the monotonicity of $r(u,\,\cdot\,)$ for each fixed $u$, we may define
$$
r(u,\infty)=\lim_{v\to\infty}r(u,v).
$$
As $r(u_2,v)<r(u_1,v)$ for $u_2>u_1$, we have that $r(\,\cdot\,,\infty)$ is nonincreasing.
\begin{Cor}\label{Gana}
We have
\begin{equation}\label{r-menos} 
\lim_{u\searrow 0}r(u,\infty)=r_-.
\end{equation}
\end{Cor}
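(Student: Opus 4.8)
The plan is to prove the two one-sided bounds $\liminf_{u\searrow 0}r(u,\infty)\ge r_-$ and $\limsup_{u\searrow 0}r(u,\infty)\le r_-$ and combine them. Throughout I use that, by the preceding Corollary, ${\cal P}\supseteq[0,\tilde U_\delta]\times[0,\infty[$ for small $\delta$, so that the limit defining $r(u,\infty)$ is taken inside the domain, and that $v\mapsto r(u,v)$ is strictly decreasing for $u>0$ (as $\lambda<0$), whence $r(u,\infty)\le r(u,v)$ for every admissible $v$.

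For the lower bound, fix $\delta\in(0,r_-)$. By Lemma \ref{infty}, for every $u<\tilde U_\delta$ and every $(u,v)\in J^+(\gam)$ one has $r(u,v)>r_--\delta$. For such a fixed $u$ the point $(u,v)$ belongs to $J^+(\gam)$ as soon as $v>(1+\beta)\vgr(u)$, so letting $v\to\infty$ gives $r(u,\infty)\ge r_--\delta$. Hence $\liminf_{u\searrow 0}r(u,\infty)\ge r_--\delta$, and since $\delta$ is arbitrary, $\liminf_{u\searrow 0}r(u,\infty)\ge r_-$.

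For the upper bound I would use the level sets of the radius. Fix any $\ckr\in(r_-,r_+)$. As announced in Section \ref{cans} (and established for the representative radii $\ckrp$, $\ckrm$ by the arguments proving \eqref{stop} and \eqref{stop2}, which rely only on the boundedness of $\varpi$, the sign of $1-\mu$, and the lower bound on $\kappa$, none of which degenerate as $\ckr\downarrow r_-$), we have $\uckr(\infty)=0$. Thus the curve $\cg_{\ckr}$ meets every line of constant $u$ for $0<u\le\min\{r_+-\ckr,U\}$, so there is a finite $\vckr(u)$ with $r(u,\vckr(u))=\ckr$; since $r(u,\cdot)$ is decreasing, $r(u,v)<\ckr$ for $v>\vckr(u)$, and letting $v\to\infty$ yields $r(u,\infty)\le\ckr$ on the whole right neighbourhood $(0,\min\{r_+-\ckr,U\}]$ of $u=0$. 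Therefore $\limsup_{u\searrow 0}r(u,\infty)\le\ckr$, and as $\ckr>r_-$ is arbitrary, $\limsup_{u\searrow 0}r(u,\infty)\le r_-$.

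Putting the two inequalities together establishes \eqref{r-menos}. I expect the only genuine difficulty to lie in the upper bound, namely in checking that the conclusion $\uckr(\infty)=0$ persists uniformly as $\ckr$ is taken arbitrarily close to $r_-$. This should be harmless, because the admissible interval $(0,\min\{r_+-\ckr,U\}]$ has length bounded below by $\min\{r_+-r_-,U\}>0$, and the constants controlling $\varpi$, $\kappa$ and $1-\mu$ that enter those arguments remain finite and nonzero in that limit; making this uniformity precise is the main, though routine, point to verify.
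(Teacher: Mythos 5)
Your proof is correct and follows essentially the argument the paper leaves implicit: the lower bound is exactly Lemma~\ref{infty}, and the upper bound comes from $\uckr(\infty)=0$ for $\ckr>r_-$ (established in Sections~4 and~5 and announced in Section~\ref{cans}). The uniformity in $\ckr$ that you flag as the main point to verify is not actually needed, since for each fixed $\ckr\in\,]r_-,r_0[$ a (possibly $\ckr$-dependent) right neighbourhood of $u=0$ on which $r(u,\infty)\leq\ckr$ already gives $\limsup_{u\searrow 0}r(u,\infty)\leq\ckr$, and one then lets $\ckr\searrow r_-$.
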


The previous two corollaries prove Theorem~\ref{r-stability}.
The argument in~\cite[Section~11]{Dafermos2}, shows that, as in the case when $\Lambda=0$, the spacetime is then extendible across the Cauchy horizon with $C^0$ metric.

\section{Two effects of any nonzero field}

This section contains two results concerning the structure of the solutions with general $\zeta_0$.
Theorem~\ref{rmenos} asserts that only in the case of the Reissner-Nordstr\"{o}m solution does the curve $\cg_{r_-}$ coincide with
the Cauchy horizon: if the field $\zeta_0$ is not identically zero, then the curve $\cg_{r_-}$ is contained in ${\cal P}$.

Lemma~\ref{l-kappa} states that, in contrast with what happens with the Reissner-Nordstr\"{o}m solution,
and perhaps unexpectedly,
the presence of a nonzero field immediately causes the integral $\int_0^\infty \kappa(u,v)\,dv$ to be finite for any $u>0$.
This implies that the affine parameter of any outgoing null geodesic inside the event horizon is finite at the Cauchy horizon.

For each $u>0$, we define
\begin{eqnarray*}
\varpi(u,\infty)&=&\lim_{v\nearrow+\infty}\varpi(u,v).
\end{eqnarray*}
This limit exists, and $u\mapsto\varpi(u,\infty)$ is an increasing function.

\begin{Thm}\label{rmenos} 
 Suppose that there exists a positive sequence $(u_n)$ converging to $0$ such that $\zeta_0(u_n)\neq 0$.
 Then $r(u,\infty)<r_-$ for all $u \in \left] 0, U \right]$. 
\end{Thm}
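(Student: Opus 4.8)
The plan is to argue by contradiction, playing the monotonicity of $u\mapsto r(u,\infty)$ against the strict growth of the mass. Two ingredients come for free or almost for free. Since $r(\,\cdot\,,\infty)$ is nonincreasing and $\lim_{u\searrow 0}r(u,\infty)=r_-$ by Corollary~\ref{Gana}, letting $u'\searrow 0$ in $r(u,\infty)\le r(u',\infty)$ gives $r(u,\infty)\le r_-$ for every $u\in\,]0,U]$; only the \emph{strict} inequality is at stake. Moreover I would show $\varpi(u,\infty)>\varpi_0$ for all $u\in\,]0,U]$: integrating~\eqref{omega_u} along $v=0$, where $\nu(\,\cdot\,,0)=-1$ and $\zeta(\,\cdot\,,0)=\zeta_0$, gives
\[
\varpi(u,0)=\varpi_0+\tfrac12\int_0^u\bigl(-(1-\mu)\bigr)(\tilde u,0)\,\zeta_0^2(\tilde u)\,d\tilde u ,
\]
and since $1-\mu<0$ on ${\cal P}\setminus\{0\}\times[0,\infty[$ while, by the hypothesis and continuity, $\zeta_0$ is nonzero on a subinterval of $\,]0,u[$, the integral is strictly positive; monotonicity of $\varpi$ in $v$ then yields $\varpi(u,\infty)\ge\varpi(u,0)>\varpi_0$.

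Next I would suppose, for contradiction, that $r(u_0,\infty)=r_-$ for some $u_0>0$. Because $r(\,\cdot\,,\infty)$ is nonincreasing and bounded above by $r_-$, this forces $r(u,\infty)=r_-$ for all $u\in\,]0,u_0]$, so $r(u,v)\searrow r_-$ with $r(u,v)>r_-$ at finite $v$. Writing $1-\mu=(1-\mu)(r,\varpi_0)-\frac{2(\varpi-\varpi_0)}{r}$ and using continuity together with the existence of $\varpi(u,\infty)$, along each such line
\[
(1-\mu)(u,v)\longrightarrow (1-\mu)\bigl(r_-,\varpi(u,\infty)\bigr)=-\frac{2(\varpi(u,\infty)-\varpi_0)}{r_-}<0 ,
\]
so $1-\mu$ is bounded away from $0$ near the Cauchy horizon for every $u\in\,]0,u_0]$.

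The heart of the argument is then to show that $\nu$ does \emph{not} degenerate at the horizon. From $\lambda=\kappa(1-\mu)$ (that is, \eqref{kappa_at_u}) and the finiteness of $\int_0^\infty(-\lambda)(u,\cdot)=r(u,0)-r_-$, the lower bound on $-(1-\mu)$ forces $\int_0^\infty\kappa(u,v)\,dv<\infty$ (split the $v$-integral at the threshold beyond which $|1-\mu|\ge A>0$, and use $\kappa\le 1$ on the bounded initial part). Feeding this into~\eqref{nu_v} in the form $\partial_v\ln(-\nu)=\kappa\,\partial_r(1-\mu)$, and using that $r$ and $\varpi$ — hence $\partial_r(1-\mu)$ — stay bounded, the integral $\int_0^\infty\kappa\,\partial_r(1-\mu)\,dv$ converges absolutely; since $\nu(u,0)=-1$, this gives that $(-\nu)(u,\infty):=\lim_{v\to\infty}(-\nu)(u,v)$ exists and is \emph{strictly positive} for every $u\in\,]0,u_0]$. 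Finally I would pass to the limit in
\[
r(u_1,v)-r(u_2,v)=\int_{u_1}^{u_2}(-\nu)(\tilde u,v)\,d\tilde u ,\qquad 0<u_1<u_2\le u_0 ,
\]
using the $v$-uniform, integrable majorant for $-\nu$ on $[u_1,u_2]$ provided by~\eqref{up-n} on $J^+(\gam)$ (matched with the earlier estimates before $\gam$). Dominated convergence yields $r(u_1,\infty)-r(u_2,\infty)=\int_{u_1}^{u_2}(-\nu)(\tilde u,\infty)\,d\tilde u>0$, contradicting $r(u_1,\infty)=r(u_2,\infty)=r_-$. Hence $r(u_0,\infty)<r_-$, and as $u_0$ is arbitrary the theorem follows.

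The main obstacle I anticipate is exactly this third step: proving $(-\nu)(u,\infty)>0$, i.e.\ the non-degeneracy of $\nu$ at the Cauchy horizon. It is this non-degeneracy that makes the limiting radius $u\mapsto r(u,\infty)$ strictly decreasing and therefore incompatible with the hypothetical identity $r(\,\cdot\,,\infty)\equiv r_-$. The non-degeneracy rests on the finiteness of $\int_0^\infty\kappa\,dv$, which is available only because the contradiction hypothesis $r(u,\infty)=r_-$, combined with $\varpi(u,\infty)>\varpi_0$, keeps $1-\mu$ uniformly away from zero. Some care is also needed to assemble a single $v$-uniform integrable bound for $\nu$ over the whole constant-$u$ line, reconciling the $J^-(\gam)$ bounds with~\eqref{up-n} on $J^+(\gam)$, so that the dominated-convergence passage is rigorous.
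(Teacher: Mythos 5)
Your overall strategy is the same as the paper's: assume $r(\bar u,\infty)=r_-$ for some $\bar u$, deduce $r(\cdot,\infty)\equiv r_-$ on $\left]0,\bar u\right]$, show that the hypothesis on $\zeta_0$ forces $-\nu$ to stay bounded away from zero as $v\to\infty$ on each line of constant $u$, and derive a contradiction by integrating $\nu$ in $u$ (the paper does this last step with Fatou's lemma applied to $\liminf_{v\to\infty}(-\nu)$, which is slightly more economical than your dominated-convergence argument since it needs neither a dominating function nor the existence of the limit of $\nu$; your preliminary observations that $r(u,\infty)\le r_-$ and $\varpi(u,\infty)>\varpi_0$ are correct and match the paper).

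There is, however, one genuine gap: in the key step you justify the absolute convergence of $\int_0^\infty\kappa\,\partial_r(1-\mu)\,dv$ by asserting that ``$r$ and $\varpi$ --- hence $\partial_r(1-\mu)$ --- stay bounded.'' The boundedness of $\varpi$ along a line of constant $u$ is never established and cannot be assumed: $\varpi(u,\infty)=+\infty$ (mass inflation) is a live possibility in this problem --- indeed it is the subject of Part~3 --- and $\partial_r(1-\mu)=-2\eta/r^2$ grows linearly in $\varpi$, so your integral need not converge absolutely in that case. The paper splits the argument into the two cases $\varpi(u,\infty)=\infty$ and $\varpi(u,\infty)<\infty$; in the first case it uses the limit \eqref{omega_bigb} to conclude that $\frac{\partial_r(1-\mu)}{1-\mu}\to-1/r_-<0$, whence $\kappa\,\partial_r(1-\mu)=\frac{\lambda}{1-\mu}\partial_r(1-\mu)$ is eventually positive and $-\nu(u,v)\ge-\nu(u,V)>0$ by monotonicity, without any integrability claim. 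Your argument can be repaired either by adding this case distinction, or by replacing your bound with $|\kappa\,\partial_r(1-\mu)|=|\lambda|\,\bigl|\frac{\partial_r(1-\mu)}{1-\mu}\bigr|\le C|\lambda|$, using that $\frac{\partial_r(1-\mu)}{1-\mu}$ remains bounded where $|1-\mu|$ is bounded away from zero (again by \eqref{omega_bigb}) and that $\int_0^\infty(-\lambda)\,dv<\infty$; but as written the step fails precisely in the physically relevant regime. The remaining steps (the strict increase of $\varpi$ from \eqref{omega_u}, the negativity of $(1-\mu)(u,\infty)$, the finiteness of $\int_0^\infty\kappa\,dv$, and the final integration in $u$) are correct.
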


\begin{proof}
 The proof is by contradiction. Assume that $r(\bar{u},\infty) = r_-$ for some $\bar{u} \in \left]0,U\right]$.
Then $r(u,\infty) = r_-$ for all $u \in \left]0, \bar{u}\right]$. Let $0<\delta<u\leq \bar{u}$. Clearly,
$$
r(u,v)=r(\delta,v)+\int_\delta^u\nu(s,v)\,ds.
$$
Fatou's Lemma implies that $$\liminf_{v\to\infty}\int_\delta^u-\nu(s,v)\,ds\geq \int_\delta^u\liminf_{v\to\infty}-\nu(s,v)\,ds.$$
So,
\begin{eqnarray}
r_-=\lim_{v\to\infty}r(u,v)&=&\lim_{v\to\infty}r(\delta,v)-\lim_{v\to\infty}\int_\delta^u-\nu(s,v)\,ds\nonumber\\
&=&r_--\liminf_{v\to\infty}\int_\delta^u-\nu(s,v)\,ds\nonumber\\
&\leq&r_--\int_\delta^u\liminf_{v\to\infty}-\nu(s,v)\,ds.\label{fatou}
\end{eqnarray}
Since $\delta$ is arbitrary, this inequality implies that $\liminf_{v\to\infty}-\nu(u,v)$ is equal to zero for almost all $u \in \left]0, \bar{u}\right]$. 
However, we will now show that, under the hypothesis on $\zeta_0$, $\liminf_{v\to\infty}-\nu(u,v)$ cannot be zero for any positive $u$ if $r(u,\infty)\equiv r_-$.
 
First, assume that $\varpi(u,\infty)=\infty$ for a certain $u$.
Then, using~\eqref{omega_bigb},
$$\lim_{v\rightarrow\infty}\frac{\partial_r(1-\mu)}{1-\mu}(u,v)=-\frac{1}{r_-}<0.$$
We may choose $V=V(u)>0$ such that $\frac{\partial_r(1-\mu)}{1-\mu}(u,v)<0$ for $v>V$. Integrating~\eqref{nu_v}, for $v>V$,
 \begin{eqnarray*}
  -\nu(u,v) 
  &=&
  -\nu(u,V)e^{\int_V^v\left[\frac{\partial_r(1-\mu)}{1-\mu}\lambda\right](u,\tilde v)\,d\tilde v}
  \\
  &\geq&
  -\nu(u,V)>0.
 \end{eqnarray*}
Thus, for such a $u$, it is impossible for $\liminf_{v\to\infty}-\nu(u,v)$ to be equal to zero.

 Now assume $\varpi(u,\infty)<\infty$. The hypothesis on $\zeta_0$ and~\eqref{omega_u} imply that
$\varpi(u,0)>\varpi_0$ for each $u>0$, and so 
  $\varpi(u,\infty)>\varpi_0$ for each $u>0$.  Then, 
 $$(1-\mu)(u,\infty)=(1-\mu)(r_-,\varpi(u,\infty))< (1-\mu)(r_-,\varpi_0)=0.$$
We may choose $V=V(u)>0$ such that $-(1-\mu)(u,v)\geq C(u)>0$ for $v>V$.
Hence, integrating~\eqref{omega_v}, for $v>V$,
\begin{eqnarray*}
\varpi(u,v)&=&\varpi(u,V)+\frac 12\int_V^{v}\Bigl[-(1-\mu)\frac{\theta^2}{-\lambda}\Bigr](u,v)\,dv\\
&\geq&\varpi(u,V)+\frac {C(u)}2\int_V^{v}\Bigl[\frac{\theta^2}{-\lambda}\Bigr](u,v)\,dv
\end{eqnarray*}
Since $\varpi(u,\infty)<\infty$, letting $v$ tend to $+\infty$, we conclude
$$\int_V^{\infty}\Bigl[\frac{\theta^2}{-\lambda}\Bigr](u,v)\,dv<\infty.$$
 Finally, integrating~\eqref{ray_v_bis} starting from $V$, we see that $\frac{\nu(u,\infty)}{(1-\mu)(u,\infty)}>0$. Since $(1-\mu)(u,\infty)<0$,
once again we conclude that $\liminf_{v\to\infty}-\nu(u,v)=-\nu(u,\infty)>0$.
 \end{proof}

\begin{Lem}\label{l-kappa} 
 Suppose that there exists a positive sequence $(u_n)$ converging to $0$ such that $\zeta_0(u_n)\neq 0$.
Then 
\begin{equation}\label{integral-k} 
\int_{0}^\infty\kappa(u,v)\,dv<\infty\ {\rm for\ all}\ u>0.
\end{equation}
\end{Lem}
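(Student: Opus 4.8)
The plan is to exploit the restriction~\eqref{kappa_at_u}, which lets us write $\kappa = \frac{-\lambda}{-(1-\mu)}$, a quotient of two positive quantities by Lemma~\ref{sign}. The numerator is integrable in $v$: since $\partial_v r = \lambda$ by~\eqref{r_v}, and both $\inf r > 0$ and the limit $r(u,\infty) = \lim_{v\to\infty} r(u,v)$ exist by Theorem~\ref{r-stability}, monotonicity of $r$ in $v$ gives $\int_0^\infty (-\lambda)(u,v)\,dv = r(u,0) - r(u,\infty) < \infty$. Thus the whole difficulty reduces to bounding the denominator $-(1-\mu)(u,v)$ away from zero for large $v$: once we produce $V = V(u)$ and $c = c(u) > 0$ with $-(1-\mu)(u,v) \geq c$ for $v \geq V$, we immediately get $\int_V^\infty \kappa(u,v)\,dv \leq \frac{1}{c}\int_0^\infty (-\lambda)(u,v)\,dv < \infty$, while $\int_0^V \kappa(u,v)\,dv$ is finite by continuity of $v \mapsto \kappa(u,v)$ on the compact segment $[0,V]$.

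To bound the denominator I would fix $u > 0$ and distinguish two cases according to the value of $\varpi(u,\infty) = \lim_{v\to\infty}\varpi(u,v)$, which exists (possibly infinite). If $\varpi(u,\infty) = \infty$, then writing $-(1-\mu)(u,v) = -(1-\mu)(r(u,v),\varpi_0) + \frac{2(\varpi(u,v) - \varpi_0)}{r(u,v)}$ and using that $r(u,v)$ remains in the compact set $[\inf r, r_+]$, one sees that the first term stays bounded while $\frac{2(\varpi(u,v)-\varpi_0)}{r(u,v)} \geq \frac{2(\varpi(u,v)-\varpi_0)}{r_+} \to +\infty$, so $-(1-\mu)(u,v) \to +\infty$ and in particular eventually exceeds any fixed positive constant. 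If $\varpi(u,\infty) < \infty$, then $(1-\mu)(u,v) \to (1-\mu)(r(u,\infty),\varpi(u,\infty))$, and here is where Theorem~\ref{rmenos} enters: the hypothesis on $\zeta_0$ forces $r(u,\infty) < r_-$, hence $(1-\mu)(r(u,\infty),\varpi_0) < 0$, and since $\varpi(u,\infty) \geq \varpi_0$ the extra term $-\frac{2(\varpi(u,\infty) - \varpi_0)}{r(u,\infty)} \leq 0$ only makes this more negative. Therefore $(1-\mu)(u,\infty) < 0$, and by convergence $-(1-\mu)(u,v) \geq \frac{1}{2}\bigl(-(1-\mu)(u,\infty)\bigr) =: c > 0$ for all $v$ beyond some $V$.

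Combining the two cases furnishes the required lower bound on the denominator, and the estimate of the first paragraph then yields~\eqref{integral-k}. The only place the hypothesis on $\zeta_0$ is used — and the main point of the argument — is the finite-mass case, where the strict inequality $r(u,\infty) < r_-$ supplied by Theorem~\ref{rmenos} is exactly what keeps $1-\mu$ bounded away from zero at infinity; without a nonzero field one would instead have $r(u,\infty) = r_-$ and $(1-\mu)(u,\infty) = 0$, and $\int_0^\infty \kappa\,dv$ would diverge, as for Reissner-Nordstr\"om. I do not expect a genuine obstacle here: the only care needed is to confirm that $r(u,v)$ stays in a fixed compact subset of $\,]0,\infty[\,$ (so that $(1-\mu)$ is continuous there and the limiting value is attained at an interior radius), which follows at once from $\inf r > 0$ and the monotonicity of $r$ in $v$.
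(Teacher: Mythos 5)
There is a genuine gap in the finite-mass case. You assert that $r(u,\infty)<r_-$ implies $(1-\mu)(r(u,\infty),\varpi_0)<0$, but this is false: $(1-\mu)(\,\cdot\,,\varpi_0)$ is negative precisely \emph{between} $r_-$ and $r_+$ and is \emph{positive} for $0<r<r_-$ (it blows up like $e^2/r^2$ as $r\to 0$, and $r_-$ is its smallest positive root). So in your decomposition $(1-\mu)(u,\infty)=(1-\mu)(r(u,\infty),\varpi_0)-\frac{2(\varpi(u,\infty)-\varpi_0)}{r(u,\infty)}$ the first term is positive and the second nonpositive, and no sign conclusion follows. Indeed, nothing in your argument excludes the scenario $(1-\mu)(u,\infty)=0$, e.g.\ $r(u,\infty)$ coinciding with the smaller root of $(1-\mu)(\,\cdot\,,\varpi(u,\infty))$, which lies strictly below $r_-$ once $\varpi(u,\infty)>\varpi_0$; in that scenario the denominator in $\kappa=\frac{-\lambda}{-(1-\mu)}$ degenerates and integrability of $-\lambda$ alone does not save you. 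Your reduction to bounding $-(1-\mu)(u,v)$ away from zero for large $v$, and your treatment of the case $\varpi(u,\infty)=\infty$, are both fine, and your observation that $\int_0^\infty(-\lambda)(u,v)\,dv=r(u,0)-r(u,\infty)$ is a clean way to handle the numerator; the missing piece is entirely the strict negativity of $(1-\mu)(u,\infty)$ when the mass stays finite.

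The paper's proof shows how delicate this point is: it does \emph{not} prove $(1-\mu)(u,\infty)<0$ for every $u>0$, but only along some sequence $u_n\searrow 0$, via a trichotomy on $\lim_{u\searrow 0}\varpi(u,\infty)$. The hard subcase ($\lim_{u\searrow 0}\varpi(u,\infty)=\varpi_0$) requires showing that $\nu(u_n,\infty)<0$ for some sequence $u_n\searrow 0$ — using monotone convergence for $r(\,\cdot\,,\infty)$ together with Theorem~\ref{rmenos} and Lemma~\ref{infty} to rule out $\nu(\,\cdot\,,\infty)=0$ a.e.\ — and then transferring this sign to $1-\mu$ through the integrated Raychaudhuri equation~\eqref{ray_v_bis}. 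Having obtained $\int_0^\infty\kappa(u_n,v)\,dv<\infty$ along the sequence, the conclusion for \emph{all} $u>0$ follows from the monotonicity $\partial_u\kappa\le 0$ given by~\eqref{kappa_u}. This sequence-plus-monotonicity structure is the idea your argument is missing; without it the pointwise claim you need in the finite-mass case is not established.
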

\begin{proof}
We claim that for some decreasing sequence $(u_n)$ converging to 0, $$(1-\mu)(u_n,\infty)<0.$$
To prove our claim, we consider three cases.

{\em Case 1}. If $\varpi(u,\infty)=\infty$ for each $u>0$ then $(1-\mu)(u,\infty)=-\infty$.

{\em Case 2}. If $\lim_{u\searrow 0}\varpi(u,\infty)>\varpi_0$ then, using Corollary~\ref{Gana},
$$
\lim_{u\searrow 0}(1-\mu)(u,\infty)=(1-\mu)(r_-,\lim_{u\searrow 0}\varpi(u,\infty))<(1-\mu)(r_-,\varpi_0)=0.
$$

{\em Case 3}. 
Suppose that
 $\lim_{u\searrow 0}\varpi(u,\infty)=\varpi_0$.
For sufficiently small $u$ and $(u,v)\in J^+(\cg_{\ckrm})$, we have 
$$ 
\eta(u,v)\geq 0
$$ 
(see~\eqref{eta-positive}).
So, we may define
$\nu(u,\infty)=\lim_{v\nearrow+\infty}\nu(u,v)$. By Lebesgue's Monotone Convergence Theorem, we have
\begin{equation}\label{lebesgue-2} 
r(u,\infty)=r(\delta,\infty)+\int_\delta^u\nu(s,\infty)\,ds.
\end{equation}
Note that different convergence theorems have to be used in~\eqref{fatou} and~\eqref{lebesgue-2}.
If $\nu(u,\infty)$ were zero almost everywhere, then $r(u,\infty)$ would be a constant. If the constant were $r_-$ we would be contradicting Theorem~\ref{rmenos}.
If the constant were smaller than $r_-$ we would be contradicting Lemma~\ref{infty}.
We conclude there must exist a sequence $u_n\searrow 0$ such that
$\nu(u_n,\infty)<0$. 
Integrating~\eqref{ray_v_bis}, we get
$$ 
\frac{\nu(u,\infty)}{(1-\mu)(u,\infty)}\leq\frac{\nu(u,0)}{(1-\mu)(u,0)}<\infty.
$$ 
Therefore, $(1-\mu)(u_n,\infty)<0$. This proves our claim.

For any fixed index $n$, there exists a $v_n$ such that
$$(1-\mu)(u_n,v)<\frac 12(1-\mu)(u_n,\infty)=:-\,\frac 1{c_n},$$ for $v\geq v_n$. 
It follows that
$$
\kappa(u_n,v)\leq c_n(-\lambda(u_n,v)),\ {\rm for}\ v\geq v_n.
$$
Using the estimate~\eqref{up-l} for $-\lambda$, we have 
$$
\int_{v_n}^\infty\kappa(u_n,v)\,dv<\infty.
$$
Hence
$
\int_{0}^\infty\kappa(u_n,v)\,dv<\infty
$.
Recalling that $u\mapsto\kappa(u,v)$ is nonincreasing, we get~\eqref{integral-k}.
\end{proof}

\begin{Cor}\label{affine}
Let $u>0$. Consider an outgoing null geodesic $t\mapsto(u,v(t))$ for $({\cal M},g)$, with $g$ given by
$$
g=-\Omega^2(u,v)\,dudv+r^2(u,v)\,\sigma_{{\mathbb S}^2}.
$$
Then $v^{-1}(\infty)<\infty$, i.e.\
the affine parameter is finite at the Cauchy horizon.
\end{Cor}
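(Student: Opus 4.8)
The plan is to turn the statement into an integrability question for $\Omega^2$ along the geodesic, and then to dispatch that integral using the $v$-independent bound~\eqref{up-n} for $-\nu$ together with Lemma~\ref{l-kappa}.

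First I would parametrize the outgoing null geodesic affinely as $t\mapsto(u,v(t))$ with $u$ held fixed, so that its tangent is $\dot v\,\partial_v$; this is null because $g_{vv}=0$, and future-pointing since $\partial_v$ is. As $g_{uu}=g_{vv}=0$, the only Christoffel symbol entering the geodesic equation is $\Gamma^v_{vv}=\partial_v\ln\Omega^2$, so the equation reduces to $\ddot v+(\partial_v\ln\Omega^2)\,\dot v^2=0$. This is precisely the statement that $E:=\Omega^2\dot v$ is conserved along the geodesic, whence $\dot v=E/\Omega^2$ and, upon integrating $dt=\Omega^2 E^{-1}\,dv$,
$$
v^{-1}(\infty)=t(v_0)+\frac1E\int_{v_0}^{\infty}\Omega^2(u,v)\,dv .
$$
Thus it suffices to show $\int_{v_0}^\infty\Omega^2(u,v)\,dv<\infty$. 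Using~\eqref{omega_sq} I rewrite $\Omega^2=-4\nu\kappa=4(-\nu)\kappa$.

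Next I would split the integral at $\vgam(u)$, the unique value of $v$ at which the line $\{u\}\times[0,\infty[$ meets the spacelike curve $\gam$. On the (possibly empty) compact interval $[v_0,\vgam(u)]$ the continuous integrand $\Omega^2$ is bounded, so this contribution is finite. For $v\geq\vgam(u)$ the point $(u,v)$ lies in $J^+(\gam)$, where estimate~\eqref{up-n} holds and provides a bound $-\nu(u,v)\leq C(u)$ that does \emph{not} depend on $v$. Therefore
$$
\int_{\vgam(u)}^{\infty}\bigl[(-\nu)\kappa\bigr](u,v)\,dv\leq C(u)\int_{0}^{\infty}\kappa(u,v)\,dv ,
$$
and the right-hand side is finite by Lemma~\ref{l-kappa} (this is where the hypothesis that $\zeta_0$ does not vanish identically near the origin enters). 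Adding the two contributions yields $\int_{v_0}^\infty\Omega^2\,dv<\infty$ and hence $v^{-1}(\infty)<\infty$.

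The one point deserving care---and the main obstacle---is the uniform-in-$v$ control of $-\nu$ along the fixed-$u$ ray: one must know that the entire ray $\{u\}\times[\vgam(u),\infty[$ remains inside the region $J^+(\gam)$ on which~\eqref{up-n} is valid, which is exactly what the Corollary following Lemma~\ref{infty} guarantees (together with the inclusion $[0,U]\times[0,\infty[\subset\mysigma$). Once this is in hand, the remaining steps are a routine application of the conservation of $E$ and of the already established estimates, the essential analytic input being the finiteness of $\int_0^\infty\kappa\,dv$ supplied by Lemma~\ref{l-kappa}.
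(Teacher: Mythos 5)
Your reduction of the claim to the finiteness of $\int^{\infty}\Omega^2(u,v)\,dv$ --- via the constancy of $E=\Omega^2\dot v$ along the geodesic and the identity $\Omega^2=-4\nu\kappa$ from~\eqref{omega_sq} --- is exactly the paper's argument, and your treatment of the generic case (a uniform-in-$v$ bound on $-\nu$ from~\eqref{up-n} on $J^+(\gam)$, combined with $\int_0^\infty\kappa(u,v)\,dv<\infty$ from Lemma~\ref{l-kappa}) coincides with what the paper does there.

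There is, however, a genuine gap: Corollary~\ref{affine} is stated \emph{without} the hypothesis of Lemma~\ref{l-kappa}, so your proof must also cover data with $\zeta_0$ vanishing identically in a neighborhood of the origin. In that case Lemma~\ref{l-kappa} is unavailable, and its conclusion is in fact false: for the Reissner--Nordstr\"om solution $\kappa\equiv 1$, so $\int_0^\infty\kappa(u,v)\,dv=\infty$ and your final estimate
$$
\int_{\vgam(u)}^{\infty}\bigl[(-\nu)\kappa\bigr](u,v)\,dv\leq C(u)\int_{0}^{\infty}\kappa(u,v)\,dv
$$
gives no information. The integral $\int_0^\infty\Omega^2\,dv$ is still finite there, but the mechanism is the decay of $-\nu$ rather than the integrability of $\kappa$: integrating~\eqref{ray_v_bis} with $\theta\equiv 0$ gives $\frac{\nu}{1-\mu}=C(u)$, hence $\nu\kappa=C(u)(1-\mu)=C(u)\lambda$ and
$$
\int_0^\infty\Omega^2(u,\bar v)\,d\bar v=4C(u)\int_0^\infty(-\lambda)(u,\bar v)\,d\bar v=4C(u)\,(r_+-u-r_-)<\infty.
$$
You explicitly flag that you are using the nonvanishing hypothesis, but since it is not part of the Corollary's statement the proof as written is incomplete; adding this short Reissner--Nordstr\"om computation (which is how the paper handles that case) closes the gap.
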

\begin{proof}
The function $v(\,\cdot\,)$ satisfies
\begin{equation}\label{geodesic}
\ddot{v}+\Gamma^v_{vv}(u,v)\,\dot{v}^2=0,
\end{equation}
where the Christoffel symbol $\Gamma^v_{vv}$ is given by
$$\Gamma^v_{vv}=\partial_v\ln\Omega^2.$$
So, we may rewrite~\eqref{geodesic} as 
$$
\frac{\ddot{v}}{\dot{v}}=-\partial_t(\ln\Omega^2)(u,v).
$$
We integrate both sides of this equation to obtain
$$
\ln \dot v+\ln c=-\ln\Omega^2(u,v),
$$
with $c>0$, or
$$
\frac{dt}{dv}=c\,\Omega^2(u,v).
$$
Integrating both sides of the previous equation once again, the affine parameter $t$ is given by
$$
t=v^{-1}(0)+c\int_0^v\Omega^2(u,\bar v)\,d\bar v=v^{-1}(0)-4c\int_0^v(\nu\kappa)(u,\bar v)\,d\bar v.
$$

If $\zeta_0$ vanishes in a neighborhood of the origin, the solution corresponds to the Reissner-Nordstr\"{o}m solution. The function
$\kappa$ is identically 1 and, using~\eqref{ray_v_bis}, $\frac{\nu}{1-\mu}=C(u)$, with $C(u)$ a positive function of $u$.
Thus, $\nu=C(u)(1-\mu)=C(u)\lambda$ and 
$$
\int_0^\infty\Omega^2(u,\bar v)\,d\bar v=-4cC(u)\int_0^\infty\lambda(u,\bar v)\,d\bar v=4cC(u)(r_+-u-r_-)<\infty.
$$

On the other hand, suppose that there exists a positive sequence $(u_n)$ converging to $0$ such that $\zeta_0(u_n)\neq 0$.
Then, since $\nu$ is continuous, it satisfies the bound~\eqref{up-n} for large $v$, and~\eqref{integral-k} holds. So we also
have
$$
\int_0^\infty \Omega^2(u,\bar v)\,d\bar v<\infty.
$$
\end{proof}

\end{document}